\documentclass[11pt]{article}
\usepackage[margin=1in]{geometry}
\usepackage{amsmath,amsfonts,amssymb,amsthm}
\usepackage[T1]{fontenc}
\usepackage{mathpazo, eufrak, tgpagella}
\usepackage[ruled,vlined,linesnumbered,commentsnumbered]{algorithm2e}
\SetKwInput{KwIn}{Input}
\SetKwInput{KwOut}{Output}
\SetKwInput{KwGlobal}{Global variable}
\SetKwInput{KwOracle}{Oracle}
\SetKwComment{Comment}{// }{}
\usepackage{graphicx}
\usepackage{enumerate}
\usepackage{enumitem}
\usepackage{bbm}
\usepackage{verbatim}
\usepackage{hyperref,color}
\usepackage[capitalize,nameinlink]{cleveref}
\usepackage[dvipsnames]{xcolor}
\hypersetup{
	colorlinks=true,
	pdfpagemode=UseNone,
	citecolor=OliveGreen,
	linkcolor=NavyBlue,
	urlcolor=Magenta,
	pdfstartview=FitW
}
\usepackage{appendix}
\usepackage{makecell}
\usepackage{tablefootnote}
\crefname{appsec}{Appendix}{Appendices}
\usepackage{tikz}
\usepackage{pgfplots}
\usepackage{xifthen}
\usepackage{subcaption}
\usepackage{hhline}
\usepackage{mathtools}

\theoremstyle{plain}
\newtheorem{theorem}{Theorem}[section]
\newtheorem{proposition}[theorem]{Proposition}
\newtheorem{lemma}[theorem]{Lemma}
\newtheorem{corollary}[theorem]{Corollary}

\theoremstyle{definition}
\newtheorem{definition}[theorem]{Definition}

\newtheorem*{assumption*}{Assumption}

\theoremstyle{remark}
\newtheorem{remark}[theorem]{Remark}

\crefname{lemma}{Lemma}{Lemmas}
\crefname{theorem}{Theorem}{Theorems}
\crefname{definition}{Definition}{Definitions}
\crefname{fact}{Fact}{Facts}
\crefname{claim}{Claim}{Claims}
\crefname{proposition}{Proposition}{Propositions}

\newcommand{\allone}{\mathbf{1}}

\newcommand{\eps}{\varepsilon}

\newcommand{\R}{\mathbb{R}}

\newcommand{\II}{\mathcal{I}}

\newcommand{\supp}{\mathrm{supp}}

\newcommand{\e}{\mathrm{e}}
\renewcommand{\epsilon}{\varepsilon}
\renewcommand{\emptyset}{\varnothing}
\newcommand{\set}[1]{\left\{#1\right\}}
\newcommand{\tuple}[1]{\left(#1\right)} 

\newcommand{\tp}{\tuple}
\newcommand{\ol}{\overline}
\newcommand{\abs}[1]{\left\vert#1\right\vert}

\newcommand{\ftp}[1]{\left\lfloor#1\right\rfloor}

\def\*#1{\boldsymbol{#1}} 
\def\+#1{\mathcal{#1}} 
\def\-#1{\mathrm{#1}} 
\def\=#1{\mathbb{#1}} 
\def\!#1{\mathfrak{#1}} 

\def\oPr{\mathop{\mathrm{Pr}}}
\renewcommand{\Pr}[2][]{ \ifthenelse{\isempty{#1}}
  {\oPr\left[#2\right]}
  {\oPr_{#1}\left[#2\right]} } 

\def\oE{\mathop{\mathbb{E}}}
\newcommand{\E}[2][]{ \ifthenelse{\isempty{#1}}
  {\oE\left[#2\right]}
  {\oE_{#1}\left[#2\right]} }

\def\oVar{\mathrm{Var}}
\newcommand{\Var}[2][]{ \ifthenelse{\isempty{#1}}
  {\oVar\left[#2\right]}
  {\oVar_{#1}\left[#2\right]} }

\def\oEnt{\mathrm{Ent}}
\newcommand{\Ent}[2][]{ \ifthenelse{\isempty{#1}}
  {\oEnt\left[#2\right]}
  {\oEnt_{#1}\left[#2\right]} }

\newcommand{\PhiEnt}[2][]{ \ifthenelse{\isempty{#1}}
  {\oEnt^\phi\left[#2\right]}
  {\oEnt^\phi_{#1}\left[#2\right]} }

\newboolean{anonymous}
\setboolean{anonymous}{false}

\title{Subquadratic Counting via Perfect Marginal Sampling \\
}
\ifthenelse{\boolean{anonymous}}{}{
\author{Xiaoyu Chen \thanks{Massachusetts Institute of Technology, Cambridge, MA, USA. Email: \texttt{xiaoyu@mit.edu}. Supported by the NSF CAREER grant CCF-2443045, and the Reed Fund at MIT.}
\and Zongchen Chen \thanks{Georgia Institute of Technology, Atlanta, GA, USA. Email: \texttt{chenzongchen@gatech.edu}}
\and Kuikui Liu \thanks{Massachusetts Institute of Technology, Cambridge, MA, USA. Email: \texttt{liukui@mit.edu}. Supported by the NSF CAREER grant CCF-2443045, and the Reed Fund at MIT.}
\and Xinyuan Zhang \thanks{Nanjing University, Nanjing, Jiangsu, China. Email: \texttt{zhangxy@smail.nju.edu.cn}}
}
}

\pgfplotsset{compat=1.18} 
\begin{document}

\maketitle
\thispagestyle{empty}
\begin{abstract}
    We study the computational complexity of approximately computing the partition function of a spin system. Techniques based on standard counting-to-sampling reductions yield $\tilde{O}(n^2)$-time algorithms, where $n$ is the size of the input graph. We present new counting algorithms that break the quadratic-time barrier in a wide range of settings. For example, for the hardcore model of $\lambda$-weighted independent sets in graphs of maximum degree $\Delta$, we obtain a $\tilde{O}(n^{2-\delta})$-time approximate counting algorithm, for some constant $\delta > 0$, when the fugacity $\lambda < \frac{1}{\Delta-1}$, improving over the previous regime of $\lambda = o(\Delta^{-3/2})$ \cite{AFFGW25}. Our results apply broadly to many other spin systems, such as the Ising model, hypergraph independent sets, and vertex colorings.
    
    Interestingly, our work reveals a deep connection between \emph{subquadratic} counting and \emph{perfect} marginal sampling. For two-spin systems such as the hardcore and Ising models, we show that the existence of perfect marginal samplers directly yields subquadratic counting algorithms in a \emph{black-box} fashion. For general spin systems, we show that almost all existing perfect marginal samplers can be adapted to produce a sufficiently low-variance marginal estimator in sublinear time, leading to subquadratic counting algorithms.

\end{abstract}

\newpage
\setcounter{page}{1}

\section{Introduction}
In this paper, we study the computational complexity of approximate counting and sampling, two fundamental algorithmic primitives in high-dimensional statistics with wide-ranging applications. Specifically, suppose that we are given query access to a nonnegative weight function $w: \Omega \to \R_{\geq0}$ on a large but finite state space $\Omega$ (e.g., a structured subset of $\{\pm1\}^{n}$ or $\R^{n}$). We are interested in the following two central computational problems:
\begin{itemize}[label={}]
    \item \textbf{Sampling:} Generate a sample $\sigma \in \Omega$ from the target distribution $\mu(\sigma) \coloneqq \frac{w(\sigma)}{Z}$, where ${Z \coloneqq \sum_{\sigma \in \Omega} w(\sigma)}$ is the normalizing constant for $\mu$, often referred to as the \emph{partition function}.
    \item \textbf{Counting:} Compute $Z$.
\end{itemize}
Note that when $w \equiv 1$, the counting problem becomes estimating the size of the state space $\Omega$, and thus appears naturally in many combinatorial and geometric problems. Seminal examples include the volume of a high-dimensional convex body \cite{DFK91}, the number of perfect matchings in a bipartite graph \cite{JSV04}, and the number of spanning trees in a graph \cite{Kirchhoff1958}. Algorithms for sampling and counting serve as building blocks for a broad range of tasks, including parameter estimation and statistical inference.

For sampling, one popular approach is the \emph{Markov chain Monte Carlo (MCMC)} method: One designs a Markov chain with stationary distribution $\mu$, and then simulates a sufficiently long trajectory to obtain an (approximate) sample. MCMC methods have achieved great success across many classical models in theoretical computer science, and are employed heavily in practice. Furthermore, MCMC methods are typically easy to implement, and there is a rich and beautiful theory for studying rates of convergence.

Unlike sampling, however, there is no general family of algorithms that directly solve the counting problem. Perhaps the simplest approach is to reduce it to (approximately) sampling from $\mu$ and other induced distributions \cite{JVV86}. 
To explain this reduction, consider the following ``local'' variants of counting and sampling, where we assume $\Omega = [q]^{n}$ for a finite alphabet $[q] \coloneqq \{1,\dots,q\}$:
\begin{itemize}
    \item \textbf{Marginal Sampling:} Given $i \in [n]$, sample from the marginal distribution $\mu_{i}$ over $[q]$ defined by $\mu_{i}(c) \coloneqq \Pr[\sigma \sim \mu]{\sigma_{i} = c}$.
    \item \textbf{Marginal Inference:} Given $i \in [n]$ and $c \in [q]$, compute the marginal probability $\mu_{i}(c)$.
\end{itemize}
When the alphabet size $q$ is constant and marginal probabilities are bounded away from zero, marginal sampling and inference are computationally equivalent. Furthermore, as long as the distributions of interest are \emph{self-reducible}, i.e., closed under conditioning on the values of any subset of coordinates, global sampling and counting are computationally equivalent to their local versions under polynomial-time reduction.
To see this, observe that sampling $\sigma \sim \mu$ is equivalent to sequentially sampling $\sigma_{i} \sim \mu_{i}(\cdot \mid \sigma_{1},\dots,\sigma_{i-1})$, the marginal of coordinate $i$ conditioned on the values of the first $i-1$ coordinates.
Similarly, for a fixed $\sigma^{*} \in \Omega$, one can estimate $\mu(\sigma^{*})$ via the chain rule $\mu(\sigma^{*}) = \prod_{i=1}^{n} \mu_{i}(\sigma_{i}^{*} \mid \sigma_{1}^{*}, \dots, \sigma_{i-1}^{*})$ by sequentially calling a marginal inference subroutine, and then approximate $Z$ via $Z = \frac{w(\sigma^{*})}{\mu(\sigma^{*})}$. Most common models of interest are self-reducible by definition, including spin systems (and graphical models more generally), determinantal point processes, bases of matroids, etc. Thus, for self-reducible models, the arguments above provide a polynomial-time reduction from counting to sampling, and vice versa.

Such reduction-based counting algorithms have been incredibly fruitful, as they allow one to leverage powerful sampling algorithms such as MCMC and require only mild structural assumptions on the problem input (e.g., self-reducibility).
However, an unfortunate limitation of these reduction-based methods is that their running time is typically at best quadratic in the dimension $n$. 
Specifically, suppose we wish to obtain a $(1\pm \eps)$-multiplicative approximation of $Z$ using the reduction to sampling described in the previous paragraph. By the chain rule, we need to estimate $n$ (conditional) marginal probabilities, each of which requires $\tilde{O}(\frac{n}{\eps^2})$ marginal samples to guarantee sufficiently high accuracy; see \cref{lem:reduction-counting-to-inference} for a formal statement. In particular, even in the ideal scenario where each marginal sample takes $O(1)$-time, the combined running time for estimating $Z$ is $\tilde{O}(\frac{n^2}{\eps^2})$.\footnote{We remark that other reductions between counting and sampling are available as well. In particular, the \emph{simulated annealing} algorithm considers ``temperature-adjusted'' distributions $\mu^{(\beta)}(\sigma) \coloneqq \frac{w(\sigma)^{\beta}}{Z(\beta)}$ instead of conditioning \cite{BSVV08,SVV09,Huber15}. Nonetheless, the quadratic-time barrier is present here as well.} This motivates the following fine-grained question:


\begin{quote}\centering
    \textit{Can we bypass the $\tilde{\Theta}(\frac{n^2}{\eps^2})$ barrier for approximate counting?}
\end{quote}



\subsection{Main Results}
We study subquadratic counting in the context of general \emph{spin systems}, which are important probability distributions originating in statistical physics and widely used in machine learning as \emph{undirected graphical models}. They are parameterized by local interactions between vertices of a host (hyper)graph, with paradigmatic examples including the \emph{Ising model}, the \emph{hardcore model}, and the uniform measure over proper colorings of a graph; the precise definitions of these models can be found in \cref{sec:model}. 

Let us begin with the hardcore model, which we will use as a running example throughout this introduction.
Let $G=(V,E)$ be an undirected graph, and fix a \emph{fugacity} parameter $\lambda \geq 0$.
Denote the collection of independent sets of $G$ by $\II$.
The Gibbs distribution of the hardcore model on $G$ assigns every independent set $\sigma \in \II$ probability mass $\frac{\lambda^{|\sigma|}}{Z}$, where $Z = \sum_{\sigma \in \II} \lambda^{|\sigma|}$ is the partition function, also known as the (univariate) independence polynomial in combinatorics.
For approximate counting, our goal is to estimate $Z$ up to $\eps$-multiplicative error.

The hardcore model originates from statistical physics as an idealized model for a gas. It has attracted significant attention in physics and computer science due to its fascinating phase transition, both statistically and computationally. 
For a given maximum degree $\Delta \geq 3$, let $\lambda_{c}(\Delta) \coloneqq \frac{(\Delta-1)^{\Delta-1}}{(\Delta-2)^{\Delta}} \approx \frac{e}{\Delta-1}$.
This critical threshold characterizes exactly the easiness/hardness of approximate counting and sampling for the hardcore model on the class of graphs of maximum degree $\Delta$: Polynomial-time algorithms exist if and only if $\lambda \le \lambda_c(\Delta)$ (assuming $\mathsf{RP} \neq \mathsf{NP}$) \cite{weitz2006counting,CCYZ25,Sly10,SS12,GSV16}. Furthermore, recent developments on Markov chain mixing times have yielded $\tilde{O}(n)$-time sampling \cite{Liu23,CSV25} and therefore, again using standard counting-to-sampling reductions, $\tilde{O}(\frac{n^2}{\eps^2})$-time approximate counting throughout the subcritical regime $\lambda < \lambda_c(\Delta)$.

We furnish a subquadratic approximate counting algorithm when $\lambda < \frac{1}{\Delta-1}$, a constant factor away from the critical point $\lambda_c(\Delta)$.

\begin{theorem}[Informal version of \Cref{corollary:hardcore}; see also \Cref{thm:2-spin-reduction}]
\label{thm:hc-intro}
Let $\Delta \ge 3$ be an integer and $\lambda < \frac{1}{\Delta-1}$ be a real number. Then, there exists an algorithm that, for any $n$-vertex graph $G$ of maximum degree $\Delta$ and any $\eps>0$, approximates the partition function of the hardcore model on $G$ with fugacity $\lambda$ with $\eps$-multiplicative error. The running time of the algorithm is
$\widetilde{O}((\frac{n}{\eps})^{2-\delta})$ where $\delta=\delta(\Delta,\lambda)>0$ is a constant.
\end{theorem}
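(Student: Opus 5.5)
The approach is to keep the chain-rule reduction but replace the naive estimate of each conditional marginal by a low-variance estimator built from a perfect marginal sampler. We fix the all-empty configuration $\sigma^*=\emptyset$ and pin vertices one at a time, $v_1,\dots,v_n$, to ``out''. This gives
\[
\frac{1}{Z}=\mu(\emptyset)=\prod_{i=1}^n p_i, \qquad p_i=\mu^{(i-1)}_{v_i}(\text{out}),
\]
where $\mu^{(i-1)}$ is the hardcore measure on $G$ with $v_1,\dots,v_{i-1}$ deleted. Each $p_i\ge \frac{1}{1+\lambda}$ is bounded away from zero. Suppose $X_i$ are independent unbiased estimators of $p_i$ with relative variance $\Var{X_i}/p_i^2 \le s_i$. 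Then the product estimator has relative variance at most $\prod_i(1+s_i)-1$. It therefore suffices to make $\sum_i s_i = O(\eps^2)$ by averaging, and then apply a median trick. The naive choice $X_i=\mathbf{1}[\text{sample out}]$ has $s_i=\Theta(1)$, which forces $n/\eps^2$ samples per site and hence quadratic total time.

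The key step is a sublinear-time estimator of each $p_i$ whose cost per unit of variance reduction beats $O(1)$. I would use the perfect marginal sampler for $\lambda<\frac{1}{\Delta-1}$, built by recursive coupling from the past, as in Anand--Jerrum's lazy sampling. To sample $\sigma_v$, we draw a uniform $U$. If $U$ falls in the region where $v$ is out regardless of its neighbors (probability $\frac{1}{1+\lambda}$), we output immediately. Otherwise we recurse on the neighbors. The branching factor is $\lambda(\Delta-1)<1$, so the recursion tree is subcritical, has expected $O(1)$ size, and has exponential tails. Rather than outputting an indicator, we Rao--Blackwellize: we run the recursion to depth $L$ and take the conditional expectation of the indicator given the explored randomness. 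The remaining randomness is nontrivial only on the event that the recursion survives to depth $L$, which has probability at most $(\lambda(\Delta-1))^L$. Hence the variance is $\rho^{L}$ with $\rho<1$, and the cost is the expected tree size truncated at depth $L$, i.e.\ $O(L)$ or $O(1)$. This can be made concrete by outputting a $[0,1]$-valued estimate that resolves the root exactly when the truncated tree determines it, and otherwise outputs a Bernoulli-type value from a fresh full run, so that unbiasedness is preserved. This black-box use of a perfect marginal sampler is the content of \Cref{thm:2-spin-reduction}.

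For the time accounting, we choose $L$ so that the variance is $\rho^L \approx \eps^2/n$. The expected cost per site is then the expected size of a subcritical tree conditioned on surviving to depth $L$, which is at most $\mathrm{poly}(L)$ on average plus a survival-probability-weighted term $\rho^{L}\cdot \Delta^{L}$ in the worst case. Balancing gives per-site cost $(n/\eps^2)^{1-\delta}$ with $\delta$ depending on $\log(1/\rho)/\log\Delta$. Summing over $n$ sites gives $\tilde O((n/\eps)^{2-\delta})$.

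The main obstacle is this balancing. A full perfect sample is cheap in expectation, but the low-variance (Rao--Blackwellized) estimator needs to explore rare deep branches, whose size grows like $\Delta^L$. One must show that the variance reduction $\rho^L$ strictly beats the exploration cost in the exponent, rather than merely matching it. I would first try to bound the variance via the probability that the depth-$L$ recursion fails to resolve, and bound the cost via a Galton--Watson moment computation. If this trade-off is too tight, I would mix the two estimators, using the cheap indicator with a multilevel (MLMC-style) correction at geometrically increasing depths.
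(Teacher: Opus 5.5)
\textbf{The central variance claim fails.} Your chain-rule reduction is fine. The problem is the claim that the depth-$L$ Rao--Blackwellized estimator has variance $\rho^L$. Let $X$ be the indicator that the Anand--Jerrum sampler outputs ``out'', let $p=\E{X}$, and let $\mathcal{F}_L$ be the randomness its recursion reveals down to depth $L$. The survival probability controls $\E{\Var{X\mid\mathcal{F}_L}}\le\rho^L/4$, which measures how far your truncated output is from the full sample. The variance of your estimator around $p$ is a different quantity:
\[
\Var{\E{X\mid\mathcal{F}_L}}=\Var{X}-\E{\Var{X\mid\mathcal{F}_L}}\ \ge\ p(1-p)-\rho^L/4=\Omega(1).
\]
Concretely, with probability at least $1-\rho^L$ the truncated recursion resolves the root, and on that event your estimator \emph{is} the perfect $0/1$ sample. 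For example, it equals $1$ whenever the root coin lands in the oblivious zone, which happens with probability $\frac{1}{1+\lambda}$.

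Conditioning on the sampler's own randomness cannot reduce variance when that randomness almost always determines the output. To gain anything you would have to integrate out the randomness that actually decides the root, namely the coins near the root. Doing that to depth $L$ means handling all $\approx\Delta^L$ branches, which is exactly the trade-off you list as the main obstacle and leave unresolved.

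The remaining steps inherit this error. The balancing paragraph has nothing valid to balance; with $\rho^L=\eps^2/n$, your term $\rho^L\Delta^L=(n/\eps^2)^{\log\Delta/\log(1/\rho)-1}$ need not even be sublinear. Falling back to a \emph{fresh} full run when the truncation does not resolve is in general biased, because $X$ is correlated with early resolution. The MLMC fallback does not help either, since its base level is this same $\Omega(1)$-variance indicator.

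\textbf{What the paper does instead.} The paper obtains the result with no variance-versus-depth trade-off (\Cref{corollary:hardcore}). It keeps plain averaging of $N=\Theta(n/\eps^2)$ perfect samples but generates only their sum, using the aggregate sampler (\Cref{alg:AJ-marginal-sampler-aggregate}). A call that receives $m$ copies draws $Y\sim\mathrm{Bin}(m,\frac{\lambda}{1+\lambda})$ for the number entering the indecisive zone. It then passes the surviving counts sequentially to its at most $\Delta$ neighbor calls. Its cost therefore satisfies
\[
T(m)\le\tilde O(1)+\Delta\,\E{T(\mathrm{Bin}(m,\tfrac{\lambda}{1+\lambda}))}.
\]
Jensen's inequality for $x\mapsto x^{1-\alpha}$ gives $T(N)=\tilde O(N^{1-\alpha})$ whenever $\Delta(\frac{\lambda}{1+\lambda})^{1-\alpha}<1$. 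Such an $\alpha>0$ exists exactly when $\lambda<\frac{1}{\Delta-1}$ (\Cref{lem:number-of-calls}, \Cref{prop:hardcore}). Running this at each of the $n$ sites gives $\tilde O(n^{2-\alpha}\eps^{-2(1-\alpha)})$.

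The alternative route through \Cref{thm:2-spin-reduction} does use a depth-dependent Rao--Blackwellization, but of the right kind. It conditions on a perfectly sampled configuration on a SAW-tree boundary and computes the root marginal exactly by tree recursion over the whole truncated tree. Its key input is a variance bound that decays faster than the tree grows, proved via total squared influence. Your argument does not supply an analogue of that bound.
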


Prior to our work, subquadratic counting algorithms for the hardcore model were known only for fugacity $\lambda = o(\Delta^{-3/2})$ \cite{AFFGW25}. Notably, this range of $\lambda$ is far more restrictive than the critical threshold $\lambda_{c}(\Delta) \approx \frac{e}{\Delta-1}$.
The threshold $\frac{1}{\Delta-1}$ in our result is natural, as it corresponds to the Dobrushin uniqueness threshold for Glauber dynamics and also the threshold of the Anand--Jerrum marginal sampler, as we shall elaborate in \cref{subsec:overview}.

Beyond the hardcore model, we vastly expand the range of subquadratic counting in spin systems.

\begin{theorem}[Informal]
    Consider $n$-vertex graphs/hypergraphs of constant maximum degree $\Delta \ge 3$. There exist algorithms that approximate the partition function of the following models in $\widetilde{O}(n^{2-\delta})$ time, where $\delta$ is a constant depending on the maximum degree $\Delta$ and model parameters but independent of $n$.
    \begin{itemize}
        \item (\Cref{cor:polymer}) The hardcore model on bipartite $\rho$-expander graphs when $\lambda \ge (3\Delta)^{10/\rho}$;
        \item (\Cref{cor:coloring}) Proper $q$-colorings when the number of colors $q > 65\Delta$;
        \item (\Cref{cor:spin}) The Ising model with parameter $\abs{\beta - 1} \le \frac{1}{2\Delta}$; 
        \item (\Cref{cor:hyper-indset}) Hypergraph independent sets on $k$-uniform hypergraphs when $2^{k/2} \ge \sqrt{8e} k^2 \Delta$.
    \end{itemize}
\end{theorem}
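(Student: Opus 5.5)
The plan is to route all four items through one general reduction: a sufficiently good sublinear-time marginal estimator gives a subquadratic counting algorithm. Fix a vertex ordering $v_1,\dots,v_n$ and a reference configuration $\sigma^*$. Examples are the all-unoccupied configuration for the hardcore and hypergraph independent set models, the all-plus configuration for Ising, and a greedily constructed proper coloring when $q>\Delta+1$. Write
\[
Z = \frac{w(\sigma^*)}{\prod_{i=1}^n \mu_{v_i}(\sigma^*_{v_i}\mid \sigma^*_{v_1},\dots,\sigma^*_{v_{i-1}})}.
\]
Each factor is bounded below by a constant depending only on $\Delta$ and the model parameters, because of bounded degree and local uniformity. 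The counting-to-inference reduction (\cref{lem:reduction-counting-to-inference}) then needs, for each $i$, an unbiased estimator $X_i$ of the $i$-th conditional marginal. The product of $n$ independent such estimators has relative variance $\prod_i(1+\Var{X_i}/\E{X_i}^2)-1$. It is therefore enough to average $m=O(n/\eps^2)$ independent copies at each step. Equivalently, we can use a median-of-means estimator with total work $\sum_i m\cdot T_i$, where $T_i$ is the expected cost of one copy.

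The first key step is to show that the expected cost per marginal sample is $T=O(n^{1-\delta'})$ amortized, or better, that the total budget improves. The cleanest route is to exploit the fact that a perfect marginal sampler, such as Anand--Jerrum style recursive sampling or a coupling-from-the-past / bounding-chain procedure, only explores a region whose size has exponentially decaying tails. Under the stated conditions, namely $\lambda<1/(\Delta-1)$, $q>65\Delta$, $|\beta-1|\le 1/(2\Delta)$, and $2^{k/2}\ge\sqrt{8e}k^2\Delta$, the branching process governing the exploration is subcritical. Concretely, the number of vertices queried has expectation $O(1)$ and satisfies $\Pr{\text{size}\ge \ell}\le e^{-c\ell}$. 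For the bipartite expander hardcore model at large $\lambda$, I would instead use the polymer/cluster-expansion representation. Its polymer marginal samplers again have a subcritical exploration once $\lambda\ge(3\Delta)^{10/\rho}$.

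Naively this only gives $O(1)$ per sample, and hence $\tilde O(n^2/\eps^2)$ total. The real gain must come from reducing the number of samples rather than the cost per sample. For this step I would truncate the exploration at depth/radius $r=c\log n$ and turn the sampler into a low-variance estimator rather than a 0/1 indicator. Whenever the explored region is small, we compute the conditional marginal exactly by brute force, or we return the Rao--Blackwellized probability at the final resolution step instead of a coin flip. Because the explorations are subcritical, the indicator's variance is replaced by the probability that the exploration survives beyond radius $r$, which is at most $n^{-\delta''}$. The effort per copy grows only to $n^{O(\delta'')}$. With relative variance $v=n^{-\delta''}$, only $m=O(nv/\eps^2)$ copies per factor are needed, so the total cost is $n\cdot m\cdot n^{O(\delta'')}=\tilde O(n^{2-\delta}/\eps^2)$ after balancing constants. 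A further $\eps$-dependence refinement then yields the stated $(n/\eps)^{2-\delta}$ form. For two-spin systems, the same balancing should work black-box given any perfect marginal sampler, since the marginal is determined by a single Bernoulli parameter.

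The main obstacle is the variance-reduction step: proving that the partially resolved estimator is still unbiased and has variance polynomially small in $n$ at only $n^{o(1)}$ extra cost. This requires a model-specific tail bound on the exploration, which is where each threshold in the list enters. I would first prove it for the hardcore model via the Anand--Jerrum recursion, where the branching factor is $(\Delta-1)\lambda/(1+\lambda)<1$. I would then verify the analogous contraction for colorings via a bounding-chain/CFTP-based marginal sampler, and for Ising and hypergraph independent sets via their Dobrushin-type or Lovász-local-lemma-type conditions.
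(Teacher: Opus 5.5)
Your reduction to unbiased marginal estimation is fine, but the step meant to beat $n^2$ has a genuine gap, and it is the step you flag yourself. Rao--Blackwellizing only the final coin of a perfect marginal sampler reduces the variance by at most a constant factor. The final probability still depends on the recursively sampled spins, so the variance stays of order one in $n$. To make the variance controlled by the probability that the exploration leaves radius $r$, you must average exactly over all randomness inside that radius. That is essentially: perfectly sample a boundary at depth $r$ and compute the root marginal exactly. This costs the size of the whole local computation tree, about $\Delta^{r}$, not $n^{o(1)}$. The trade-off is then variance $\rho^{r}$ against cost $\Delta^{r}$, with a fixed exponent ratio $\log\Delta/\log(1/\rho)$. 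So ``balancing constants'' gives a subquadratic bound only if $\rho<1/\Delta$, and subcriticality of the exploration (mean offspring $<1$) does not give this. For the Anand--Jerrum sampler, the survival bound gives at best $\rho=(\Delta\lambda/(1+\lambda))^{2}$, which only reaches $\lambda/(1+\lambda)\lesssim\Delta^{-3/2}$, the regime of \cite{AFFGW25}. Going further needs the total-squared-influence analysis of \Cref{sec:2-spin-reduction}, which relies on SAW trees and the two-spin tree recursion. The coupling-from-the-past samplers for colorings and hypergraph independent sets have no such recursion and no known decay faster than $1/\Delta$. So your plan does not establish any of the four items.

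The missing idea is the one you discarded when you wrote that the gain ``must come from reducing the number of samples rather than the cost per sample''. The paper keeps plain perfect samples, so no new variance analysis is needed: the estimator is just the empirical mean of $N=\widetilde O(n/\eps^2)$ draws. What changes is the amortized cost per sample, which becomes $o(1)$ by simulating all $N$ runs jointly. At each branching point, the $m$ runs currently there are split by a single binomial or multinomial draw (\Cref{lem:binomial}), and only the counts per branch are tracked. The work is therefore proportional to the number of distinct branches with nonzero count.
\begin{itemize}
\item For the explicit recursive samplers (polymer models, weak-interaction spin systems), \Cref{lem:number-of-calls} bounds this by $CN^{1-\alpha}$. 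It uses Jensen's inequality for $x\mapsto x^{1-\alpha}$ and the condition $\beta(\alpha)=\sup\sum_S\nu(S)^{1-\alpha}g_\alpha(|S|)<1$.
\item For colorings and hypergraph independent sets, \Cref{thm:batch-automaton} models the sampler as a width-$D$ probabilistic automaton whose stopping time has geometric tail $C\alpha^{t}$. In expectation at most $D^{\ell}\min(1,CN(\alpha/D)^{\ell})$ states are occupied at step $\ell$, giving total cost $\widetilde O(N^{\log D/(\log D+\log(1/\alpha))})$.
\end{itemize}
The model thresholds enter only through these tail bounds, taken from \cite{FGWWY23,LWY26} and from the polymer sampling condition.
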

\begin{remark}
More generally, our results extend to the \emph{abstract polymer models} satisfying certain sampling conditions, and \emph{soft-constraint} spin systems where adjacent vertices have sufficiently weak interactions.
\end{remark}

The core ingredient of our algorithms is the \emph{perfect} marginal sampler, a (typically Las Vegas) algorithm which, upon termination, outputs the spin of a single site drawn \emph{exactly} according to its marginal distribution; no total variation error is permitted. Our main conceptual contribution is that, if we have access to a constant-time (in expectation) perfect marginal sampler, we can speed up the standard reduction from counting to sampling and obtain a subquadratic-time counting algorithm in the context of spin systems: 
\begin{quote}\centering
    \textit{Constant-time \textbf{perfect} marginal sampling 
    $\quad \Longrightarrow \quad$ 
    \textbf{Sub}quadratic counting}
\end{quote}

It is perhaps surprising that such an acceleration can be achieved simply by switching the marginal sampling guarantee from Monte Carlo to Las Vegas.
Moreover, unlike the dichotomy between exact and approximate counting, where the former is $\#\mathsf{P}$-hard even though the latter can admit efficient algorithms, perfect marginal samplers exist in many of the same settings in which approximate marginal samplers exist. For spin systems, morally this is because both often directly leverage some sufficiently strong form of \emph{correlation decay} in the system.

We leave as a tantalizing open question whether or not subquadratic-time counting algorithms exist for the hardcore model on bounded-degree graphs all the way up to the criticality. Towards resolving this question, we give an entirely \emph{black-box} reduction to perfect marginal sampling following the above theme. We refer the readers to \Cref{sec:tasks} for formal definitions of the computational tasks involved. This reduction works in the more general setting of anti-ferromagnetic two-spin systems in the uniqueness regime. 
For a graph $G=(V,E)$ and parameters $(\beta,\gamma,\lambda)$, a two-spin system assigns every configuration $\sigma \in \{0,1\}^V$ a weight
\begin{align*}
    w(\sigma) \coloneqq \beta^{m_1(\sigma)} \gamma^{m_0(\sigma)} \lambda^{n_1(\sigma)},
\end{align*}
where $m_i(\sigma)$ denotes the number of edges with both endpoints assigned $i$ for $i \in \{0,1\}$, and $n_1(\sigma)$ is the number of vertices receiving $1$.
If $\beta = 0$ and $\gamma = 1$, this recovers the hardcore model. 
Although the uniqueness regime is not formally defined this way, from a computational complexity point of view, it characterizes all parameters $(\beta,\gamma,\lambda)$ where polynomial-time approximate counting and sampling algorithms exist.
We refer to \cref{sec:model} for more details.

\begin{theorem} \label{thm:2-spin-reduction}
    Let $\delta \in (0,1)$ be a real and $\Delta \geq 3$ be an integer.
    Suppose there is an $O(1)$-expected-time perfect marginal sampling algorithm for any of the following systems on graphs with maximum degree $\Delta$:
    \begin{itemize}[nosep]
        \item The hardcore model with fugacity $0 < \lambda \leq (1 - 4\delta)\lambda_c(\Delta)$, where $\lambda_c(\Delta) \coloneqq \frac{(\Delta-1)^{\Delta-1}}{(\Delta-2)^\Delta}$;
        \item The anti-ferromagnetic two-spin system where parameters $(\beta,\gamma,\lambda)$ are up-to-$\Delta$ unique with gap $\delta$.
    \end{itemize}
    Then, for the same class of systems on $n$-vertex graphs with maximum degree $\Delta$, there is an $\mathsf{FPRAS}$ for the partition function that runs in expected time $\widetilde{O}(n^{2-x}\epsilon^{-2(1-x)})$ with $x \coloneqq (\log \frac{1}{1-\delta})/ (\log \frac{\Delta}{1-\delta}) \in (0,1)$, where all essential parameters $\Delta, \beta, \gamma, \lambda, \delta$ are assumed to be constant.
\end{theorem}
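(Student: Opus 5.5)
Our plan is to speed up the chain-rule reduction (\cref{lem:reduction-counting-to-inference}) by replacing the plain Monte Carlo estimate of each conditional marginal with a lower-variance estimator. The estimator splits into a deterministic part, computed by correlation decay, and a random correction whose variance decays geometrically with the truncation depth. Order the vertices $v_1,\dots,v_n$ and pin each $v_i$ to $0$ after processing. Then
\[
Z=\prod_{i=1}^n \frac{1}{\mu_{G_i}(\sigma_{v_i}=0)},
\]
where $G_i$ is the system with $v_1,\dots,v_{i-1}$ pinned to $0$. It suffices to produce, for each $i$, an unbiased estimator $Y_i$ of $p_i\coloneqq\mu_{G_i}(\sigma_{v_i}=0)$ with $\Var{Y_i}\le \eta$ and cost $T$. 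Averaging $N$ independent copies gives relative variance $O(\eta/N)$ per factor, and the product has relative variance $O(n\eta/N)$. We therefore need $N\approx n\eta\eps^{-2}$, for a total time of $n\cdot N\cdot T=n^2\eta T\eps^{-2}$. The standard estimator has $\eta=\Theta(1)$ and $T=O(1)$, which is the quadratic barrier. The goal is to trade $T$ against $\eta$ so that $\eta T\ll 1$.

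The estimator comes from Weitz's self-avoiding-walk tree. Up-to-$\Delta$ uniqueness with gap $\delta$ (and for the hardcore model $\lambda\le(1-4\delta)\lambda_c(\Delta)$, which falls into that case) gives contraction in a suitable potential. Hence the tree recursion truncated at depth $\ell$ has the following property: changing the boundary values at depth $\ell$ alters the root marginal by at most $C(1-\delta)^\ell$. The tree has $\Delta^\ell$ leaves. We write the root ratio as a function $F$ of the leaf marginals $(q_u)$, where each $q_u$ is a conditional marginal in a pinned subsystem that is itself a $\Delta$-bounded antiferromagnetic system in the same class. The estimator is:
\begin{enumerate}
\item Fix a deterministic reference point $\bar q$ (say, the leaf values set to $0$) and compute $F(\bar q)$ exactly in $O(\Delta^\ell)$ time.
\item Estimate the correction $F(q)-F(\bar q)$ by telescoping over leaves, or more simply by plugging in a single independent perfect marginal sample $\xi_u\in\{0,1\}$ at each leaf. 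Each $\xi_u$ is unbiased for $q_u$ and is produced by the assumed sampler in $O(1)$ expected time.
\end{enumerate}
Because $F$ is multilinear-like only after linearization, we expect to use the contraction bound to control second-order terms. Alternatively, and more cleanly, we can linearize in the potential: by contraction, the total sensitivity $\sum_u |\partial_u F|$ is at most $(1-\delta)^\ell$, so the random correction has standard deviation $O((1-\delta)^\ell)$ and variance $\eta=O((1-\delta)^{2\ell})$. Exact unbiasedness should not be needed, only bias much smaller than $\eps/n$ relative error. To get it, we can either use an unbiased randomized-truncation (Russian roulette) trick for the nonlinear part, or sample the perfect sampler's full output on the boundary jointly. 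The cleanest route is the latter: run the perfect sampler on the pinned subsystem to draw the boundary configuration exactly, then compute the conditional root marginal given it exactly. By the tower property this is exactly unbiased, and by contraction its fluctuation is $O((1-\delta)^\ell)$.

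Balancing is then routine: $T=O(\Delta^\ell)$ and $\eta=(1-\delta)^{\ell}$ (even with only the first power), so the total time is $n^2\eps^{-2}\Delta^\ell(1-\delta)^\ell$ plus an additive $n\Delta^\ell$ from exact computation. The trick is to let $N$ fall toward $1$: choose $\ell$ with $(1-\delta)^\ell\approx \eps^2/n$ only partially. That is, set $\Delta^\ell/(1-\delta)^\ell = (n/\eps^2)\cdot$ (balance), giving $\Delta^\ell\approx (n\eps^{-2})^{1-x}$ with the stated exponent $x=\log\frac1{1-\delta}/\log\frac{\Delta}{1-\delta}$. This yields $\widetilde O(n^{2-x}\eps^{-2(1-x)})$. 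The main obstacle is the second step: obtaining an exactly (or sufficiently) unbiased correction with variance decaying like $(1-\delta)^\ell$, using only black-box per-site perfect samplers. Sampling a joint boundary configuration requires sequential conditioning across $\Delta^\ell$ sites, each via the sampler on further-pinned systems, which stays in the class and costs $O(\Delta^\ell)$ in total. We also need the variance bound to hold in the worst case over pinnings, which is where uniqueness-based contraction of the SAW-tree recursion is used.
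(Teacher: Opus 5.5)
Your pipeline matches the paper's: SAW tree, truncation, drawing the boundary jointly with the black-box sampler, evaluating the conditional root marginal by the tree recursion, and unbiasedness from the tower property. The gap is the variance step, which is the real content of the theorem; your balancing silently uses a bound you never establish.

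Worst-case contraction gives $\sum_u|\partial_u F|\lesssim(1-\delta)^\ell$, hence at best $\eta=O((1-\delta)^{2\ell})$. Put this into your own cost $n\Delta^\ell\max\{1,\,n\eta\eps^{-2}\}$. The term $n^2\eps^{-2}(\Delta(1-\delta)^2)^\ell$ is non-decreasing in $\ell$ whenever $\Delta(1-\delta)^2\ge 1$, so the optimum is $\ell=0$, i.e.\ quadratic time. That condition holds throughout most of the uniqueness regime, e.g.\ hardcore with $\lambda$ a constant factor below $\lambda_c(\Delta)$, where the rate is only $1-\Theta(\delta)$. So this route recovers only the \cite{AFFGW25} regime. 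With $\eta=(1-\delta)^\ell$, your expression $n^2\eps^{-2}(\Delta(1-\delta))^\ell$ even increases with $\ell$.

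Your choice $(\Delta/(1-\delta))^\ell=n/\eps^2$ with a single sample works only if $\eta\lesssim((1-\delta)/\Delta)^\ell$. That is smaller than the worst-case bound by a factor $(\Delta(1-\delta))^\ell$. Sup-norm spatial mixing cannot deliver it, since an adversarial boundary really can move the root by $\Theta((1-\delta)^\ell)$. What you need is that, under a \emph{Gibbs-distributed} boundary, the contributions of the $|S|$ boundary sites add up in squares.

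The paper proves exactly this in two steps you would need to supply.
\begin{itemize}
\item \Cref{lem:var-bound-squared-influence}. Reveal the boundary spins in uniformly random order and write $\Var{M}$ as a sum of martingale-increment variances. Bound each increment by the average, over unrevealed $u\in S$, of the squared influence between $u$ and $v$. Reverse the direction via Bayes' rule using $b$-marginal boundedness, and sum the harmonic series. This gives $\Var{M}\le\log|S|\,(2b)^{-2}\sup_{\sigma}\infsq_{\mu^{\sigma}}(v\to S)$. The supremum over pinnings matters because the pinning grows during the process.
\item \Cref{lem:total-squared-influence-2-spin}. This shows $\infsq(v\to S)\le C_{\Delta,\delta}/N$ by induction on the tree. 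The ingredients are the potential $h$, the product rule through cut vertices, per-edge influence $h(R_w)$, symmetrization (\Cref{lem:symmetrization}), and the Li--Lu--Yin bound $\alpha_d\le\sqrt{1-\delta}$. The result is a per-level factor $(1-\delta)/d$ rather than the $(1-\delta)^2$ you get from squaring a sup-norm bound.
\end{itemize}
For this induction to close with varying branching, the paper truncates adaptively: each of $d$ children inherits budget $N(1-\delta)/d$, instead of cutting at a fixed depth $\ell$. \Cref{prop:size-of-boundary} then gives $|S|=\widetilde O(N^{1-x})$ with $N=\widetilde\Theta(n/\eps^2)$.

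Separately, to use the sampler as a true black box, you must say which bounded-degree graph realizes the joint boundary law and how its oracle queries are answered cheaply. Your ``pinned subsystem'' leaves this unspecified; the paper's SAW tree with flower (\Cref{lem:SAW-tree-with-flower}) supplies it.
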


The proof of \Cref{thm:2-spin-reduction} is provided in \Cref{sec:2-spin-reduction}.

\begin{remark}\label{rmk:marginal-sampler-time}
    In our reduction, the instances that we call the black-box perfect marginal sampler are of size $m \coloneqq O(n^{2 - x})$, where $x = (\log \frac{1}{1-\delta})/ (\log \frac{\Delta}{1-\delta})$.
        Hence, ignoring the $\mathrm{polylog}$ factors, \Cref{thm:2-spin-reduction} runs in expected time $o(n^2)$ as long as the perfect marginal sampler runs in expected time $T(m) = o(n^{x}) = o(m^{x/(2-x)})$ on instances of size $m$ (details are provided in the proof of \Cref{thm:2-spin-reduction}).
        Currently, we are not aware of any perfect marginal samplers that can take this advantage and achieve a better regime.
        We leave this as an open direction to pursue.
\end{remark}

It seems plausible that perfect marginal sampling algorithms exist whenever the spin system satisfies \emph{strong spatial mixing}, a strong form of correlation decay which holds for all bounded-degree graphs whenever the parameters of the spin system lie in the uniqueness regime \cite{li2013correlation}. Such samplers are already known when the input graph is \emph{amenable} \cite{AJ22}.

\subsection{Technical Overview}
\label{subsec:overview}
We contribute two distinct methods for designing and analyzing subquadratic counting algorithms given access to a perfect marginal sampler.

\subsubsection{Faster Variance Decay for an Unbiased Estimator}

As mentioned earlier, one important step in the counting-to-sampling reduction is to compute the marginal probabilities given a marginal sampler. We first present some intuition of the connection between perfect marginal sampling and its inference version.

At first glance, one might think that the inference analog of perfect marginal sampling should be \emph{exact} marginal inference, which requires the exact computation of marginal probabilities without any error.
However, unlike perfect marginal sampling, exact marginal inference is typically $\#\mathsf{P}$-hard since it is equivalent to exact counting.
In fact, the correct analog is \emph{unbiased} marginal inference, which computes a random estimator $\hat{p}$ whose expectation is \emph{exactly} the true marginal $p$. Consider the binary setting for simplicity. Then, a perfect marginal sample valued in $\{0,1\}$ directly serves as an unbiased marginal estimator. We also have a converse: given an unbiased estimator $\hat{p} \in [0,1]$, we can produce a perfect sample by generating from a Bernoulli with mean $\hat{p}$.

Our aim is to obtain an accurate marginal estimator. The performance of an estimator can be measured by its mean square error, which, for an unbiased estimator, is just its variance. Notice that a perfect marginal sample is actually the ``worst'' among all unbiased marginal estimators valued in $[0,1]$, since it has the largest variance.
How can we reduce the variance of our ``worst'' unbiased estimator? The most natural way is to take multiple samples and use the empirical mean, leading eventually to the $\tilde{\Theta}(\frac{n^2}{\eps^2})$-time counting algorithm.

It turns out we can do better in the context of the hardcore model, or more generally, two-spin systems. Think of a toy version where the underlying graph is a complete $d$-ary tree, and we want to estimate the marginal at the root. 
Instead of taking multiple perfect samples at the root, we iteratively draw a perfect sample at each leaf conditioned on the previous samples to form a full configuration $\tau$ on all leaves. We use this boundary condition to produce marginal estimators for vertices one level above the leaves. Then, we use the estimators at height one to produce new ones at height two. We continue propagating until we obtain a marginal estimator at the root, which will be exactly $\hat{p} := \Pr[\sigma \sim \mu]{\sigma_{\mathsf{root}} = 1 \mid \sigma_{\mathsf{leaves}} = \tau}$.
It is straightforward to verify that this marginal estimator is unbiased, i.e., averaging over $\tau$ we get $\Pr[\sigma \sim \mu]{\sigma_{\mathsf{root}} = 1}$.
\footnote{Technically, only the estimator at the root vertex is unbiased, but not the intermediate vertices.}

Along this process, we utilized several important properties of two-spin systems.
\begin{enumerate}
    \item First, there is a way to relate the marginal probability of a vertex with those of its children, in their respective subtrees, which is known as the \emph{tree recursion}.
    \item Second, it suffices to only consider trees due to the beautiful reduction based on self-avoiding walks \cite{weitz2006counting}.
    \item Third, and most importantly, we can show that the variance of the unbiased estimator decays to zero as the height $h$ of the tree increases. This is similar in spirit to a \emph{decay of correlation} phenomenon, a key property that perhaps underlies all existing counting and sampling algorithms in two-spin systems. However, since the running time of this algorithm scales as $d^{h}$, we need the variance to decay by a multiplicative factor strictly less than $\frac{1}{d}$. \emph{We prove that such a fast variance decay holds all the way up to criticality.}
\end{enumerate}
We note that while the basic underlying algorithmic idea was already used in prior work \cite{AFFGW25}, they were limited to the hardcore model in the highly subcritical regime $\lambda = o(\Delta^{-3/2})$. We establish the required variance decay all the way up to uniqueness $\lambda < \lambda_{c}(\Delta) \approx \frac{e}{\Delta-1}$. We also show how to convert the algorithm into a black-box reduction from subquadratic counting to perfect marginal sampling, and hence, extend the applicability of the algorithm to all two-spin systems.

Let us now illustrate our proof approach for \Cref{thm:2-spin-reduction} in more detail, using the hardcore model as an example.
The full proof is given at \Cref{sec:2-spin-reduction}.
The self-avoiding walk tree~\cite{weitz2006counting}, introduced by Weitz, is an important algorithmic tool for deterministic approximate counting in two-spin systems.
In a nutshell, for a given vertex $v \in V$, the self-avoiding walk tree $T = T^v$ is the dictionary tree of all the self-avoiding walks in $G$ started at $v$ (up to some truncation).
Crucially, it holds that $\mu_{G,v} = \mu_{T,v}$.
Hence, it suffices to approximate $\mu_{T,v}(0)$ instead of $\mu_{G,v}(0)$.
The marginal probability of Gibbs distributions on trees enjoys a direct recursion which we now describe.
Let $u_1, \ldots, u_d$ be the children of $v$, and let $T_{i}$ be the sub-tree of $T$ rooted at $u_i$, respectively.
The tree recursion of the hardcore model is given by:
\begin{align*}
    \mu_{T,v}(0) = \frac{1}{1 + \lambda \prod_{i=1}^d \mu_{T_i,u_i}(0)}.
\end{align*}
Note that this does not directly give an efficient algorithm for $\mu_{G,v}$ since $T$ in general is exponentially large.
When $\lambda \leq (1 - \delta)\lambda_c(\Delta)$ where $\lambda_c(\Delta) \approx \frac{\e}{\Delta-1}$, the hardcore model enjoys a \emph{spatial mixing} (or \emph{correlation decay}) property, namely there exists a constant $C > 0$ such that for all trees $T$ of maximum degree $\Delta$ and all integers $\ell \geq 1$,
\begin{align*}
    \sup_{\sigma \in \Omega(\mu_{S(v,\ell)})} \abs{\mu_{T,v}^{\sigma}(0) - \mu_{T,v}(0)} &\leq C \cdot (1 - \delta)^\ell,
\end{align*}
where $S(v,\ell) \coloneqq \set{u \in T \mid \mathrm{dist}_T(v,u) = \ell}$ is the set of vertices $u$ at distance $\ell$ from $v$ in $T$.  

For deterministic counting~\cite{weitz2006counting}, thanks to spatial mixing, we can truncate the recursion to level $\ell = \Theta(\log n)$ while still approximating $\mu_{T,v}(0)$ with enough accuracy.
Since the tree $T$ branches at most $\Delta$, this usually gives a polynomial-time algorithm for estimating $\mu_{T,v}(0) = \mu_{G,v}(0)$ when the maximum degree is constant. 

If randomness is allowed and perfect marginal samplers are available, the previous work of Anand, Feng, Freifeld, Guo, and Wang~\cite{AFFGW25} noticed an improvement by replacing truncation with a random boundary condition drawn from the Gibbs distribution.
In particular, we first generate a perfect sample $X \sim \mu_{T, S(v,\ell)}$ by calling the perfect marginal sampler iteratively. Then, we calculate the marginal probability $P = \mu^X_{T,v}(0)$.
According to the law of total probability, 
\begin{align*}
    \mu_{T,v}(0) = \E[X \sim \mu_{T, S(v,\ell)}]{\mu^X_{T,v}(0)} = \E{P},
\end{align*}
which implies that $P$ is an unbiased estimator of $\mu_{T,v}(0) = \mu_{G,v}(0)$.
To give an $\mathsf{FPRAS}$ for the partition function, according to the standard reduction from counting to sampling (see \Cref{lem:reduction-counting-to-inference}), we only need to make sure that $\Var{P}/\E{P}^2 = O(\Var{P}) = O(1/n)$.  
The advantage here is that, by spatial mixing, the variance decays faster with respect to the level $\ell$:
\begin{align*}
    \Var{P} = \E{(P - \E{P})^2} 
    &= \E[X \sim \mu_{T, S(v,\ell)}]{\tp{\mu^X_{T,v}(0) - \mu_{T,v}(0)}^2} 
    \lesssim (1-\delta)^{2\ell}.
\end{align*}
After a careful analysis, they show that when $\lambda < \frac{1}{\Delta^k(\Delta-1)}$, the spatial mixing decays at the rate $O((1 - \delta)^\ell) = O(\Delta^{-k\ell})$.
This implies that the variance decays at the rate $O(\Delta^{-2k\ell})$ and hence we can choose $\ell \approx \frac{\log n}{2k \log \Delta }$ to make sure that $\Var{P} \approx 1/n$.
Note that after truncation at level $\ell$, the remaining tree is of size at most $\Delta^\ell = O(n^{1/2k})$. 
In particular, when $k>1/2$, we can approximate $\mu_{G,v}(0) = \mu_{T,v}(0)$ with $\lambda = o(\Delta^{-3/2})$ and in time $o(n)$.

This already gives a prototype of the reduction from subquadratic-time counting to perfect marginal sampling for the hardcore model.
However, to get a black-box reduction that works in the entire uniqueness regime, several challenges still remain:
\begin{enumerate}
    \item The bound $\Var{P} \leq C^2(1-\delta)^{2\ell}$ directly obtained from the spatial mixing does not lead to sublinear truncation for larger $\lambda$ in the uniqueness regime. Also, we want a bound that works for all anti-ferromagnetic two-spin systems in the uniqueness regime.
    \item The algorithm in~\cite{AFFGW25} is not a black-box reduction. Though we omit details in this overview, calling perfect marginal samplers on the self-avoiding walk tree in a black-box manner is nontrivial.
    Actually, \cite{AFFGW25} only focuses on the Anand--Jerrum perfect marginal sampler~\cite{AJ22}, and their implementation heavily relies on the fact that the Anand--Jerrum algorithm behaves like a random walk on graphs.
    That is, it won't attempt to visit a vertex far from the one that it is currently focusing on.
\end{enumerate}

We address the first challenge by proving a shaper variance bound that allows sub-linear truncation in the whole uniqueness regime.
For hardcore model with $\lambda \leq (1 - \delta)\lambda_c(\Delta)$, we show
\begin{align} \label{eq:var-bound-hardcore}
    \Var{P} \lesssim \tp{\frac{1-\delta}{\Delta-1}}^\ell,
\end{align}
which significantly improves the previous bound on variance.
With this bound in hand, we can set $\ell \approx \frac{\log n}{\log \frac{\Delta}{1-\delta}}$ to achieve a relatively small variance $\Var{P} \approx 1/n$.
After the truncation at level $\ell$, the size of the remaining tree will be at most $\Delta^\ell \approx n^{\log \Delta/\log\frac{\Delta}{1-\delta}} = o(n)$ for constant $\delta$ and $\Delta$.

Extending the bound in \eqref{eq:var-bound-hardcore} to general anti-ferromagnetic two-spin systems in the uniqueness regime presents a further challenge.
It turns out that for general systems, truncating the tree regularly at level $\ell$ is not the best practice.
Inspired by the notion of \emph{computationally efficient correlation decay}~\cite{li2012approximate, li2013correlation}, we adopt a more sophisticated strategy to truncate the self-avoiding walk tree (see \Cref{algo:boundary-on-tree}), which ends up with similar variance bound as in~\eqref{eq:var-bound-hardcore} for general anti-ferromagnetic two-spin systems (including non-monotone $\gamma > 1$ cases).

We address the second challenge by providing a data structure for the topological (and also colorific) queries on a tree-like graph which simulates the self-avoiding walk tree in terms of the marginal distribution of its root.
Intuitively, for each sub-tree $T_u$ of the self-avoiding walk tree rooted at $u$, there is a sub-graph $G' \ni u$ of the original graph that has not been reached by the self-avoiding walk so far.
In particular, we have $\mu_{G',u} = \mu_{T_u,u}$.
In our implementation, we replace every sub-tree $T_u$ rooted at some vertex $u$ at level $\ell$ of the self-avoiding walk tree $T$ to its corresponding graph $G'$.
This won't change the marginal distribution of the root.
However, this new graph is of size at most $n\cdot\Delta^{\ell} = o(n^2)$ and we provide an efficient data structure to answer topological queries of the perfect marginal samplers in a black-box manner.
The details are provided at \Cref{lem:SAW-tree-with-flower} along with its proof.

\subsubsection{Aggregate Perfect Marginal Sampler}
As we previously described, the standard reduction from counting to sampling requires generating $N = \tilde{O}(n/\eps^2)$ samples $X_{1},\dots,X_{N} \in \{0,1\}$ to estimate a single marginal by taking the empirical mean $\hat{X} = \frac{1}{N} \sum_{i=1}^{N} X_{i}$. Our key insight is that we can modify many existing perfect marginal samplers to directly sample the random variable $\hat{X}$ in $o(N)$-time, without having to generate all $X_{1},\dots,X_{N}$ individually.
To put our idea in one sentence, we do not flip $N$ coins to count the number of heads, but instead sample from a binomial distribution $\mathrm{Bin}(N,\frac{1}{2})$!

We again illustrate our approach using the hardcore model. We begin by first recalling a beautiful algorithm, originally proposed by Anand and Jerrum \cite{AJ22}, which perfectly samples the spin (occupied or unoccupied) of a single vertex in expected $O(1)$ time when $\lambda < \frac{1}{\Delta - 1}$.


\begin{algorithm}[h]
    \KwIn{Graph $G=(V, E)$, vertex $u \in V$ and subset $\Lambda \subseteq V \setminus \{u\}$}
    \KwOut{A marginal sample $\sigma_u \in \{0,1\}$ following the distribution of $\mu_u^\tau$, where $\tau$ pins vertices $v \in \Lambda$ to be unoccupied;}
    \caption{Anand and Jerrum's perfect marginal sampler: AJ($G,u,\Lambda$)}
    \label{alg:AJ-marginal-sampler-new}
    
    \BlankLine
    Sample $r \sim U(0,1)$\;\label{line:rv-new}
    \If{$r < \frac{1}{1+\lambda}$}{
    \Return 0\;}
    \ForEach{$v \in N(u) \setminus \Lambda$}{\label{line:enumerate-new}
        \If{$\text{AJ}(G,v,\Lambda) = 1 $}{
        \Return 0\;} 
        $\Lambda \gets \Lambda \cup \{v\}$\;
    }
    \Return 1\;
\end{algorithm}

The high-level idea behind Anand and Jerrum's algorithm is to decompose the marginal distribution into two regions: a zone of obliviousness and a zone of indecision. 
Specifically, we can decompose the marginal distribution of $\sigma_u$ as follows: 
\begin{align*}
    \Pr[\mu]{\sigma_u = 1} = \Pr[\mu]{\sigma_u = 1 \land \{ \forall v \in N(u), \sigma_v = 0\} } = \Pr[\mu]{\forall v \in N(u), \sigma_v = 0} \cdot \frac{\lambda}{1+\lambda}.
\end{align*}
Thus, we set $\sigma_u = 0$ with probability $\frac{1}{1+\lambda}$ regardless of the spins in the neighborhood of $u$; this is the oblivious zone.
With the remaining probability $\frac{\lambda}{1+\lambda}$, we set $\sigma_u = 1$ with probability $p \coloneqq \Pr[\mu]{\forall v \in N(u), \sigma_v = 0}$; this is the indecisive zone since $p$ is unknown to us and we need to recursively call the procedure to resolve this uncertainty. 
The key observation is that the recursive stack trace can be formulated as a branching process in which the number of children is $0$ with probability $\frac{\lambda}{1+\lambda}$ and (at most) $\Delta$ otherwise. Hence, when $\lambda < \frac{1}{\Delta - 1}$, this branching process is subcritical and the algorithm has $O(1)$-expected running time.

We propose the following \emph{aggregate} version of the perfect marginal sampler, which simulates $N$ independent executions of Anand and Jerrum's original perfect marginal sampler in a single batch, but only tracks the number of samples with outcome $1$. More precisely, we use an auxiliary random variable $N$ to track the number of samples entering each recursive call. 

\begin{algorithm}[H]
    \KwIn{Graph $G = (V, E)$, vertex $u \in V$, subset $\Lambda \subseteq V \setminus \{u\}$, and positive integer $N$}
    \KwOut{The sum of $N$ independent marginal samples $\sigma_u \in \{0,1\}$, following the distribution of $\mu_u^\tau$, where $\tau$ fixes vertices $v \in \Lambda$ to be unoccupied;}
    \caption{``Aggregate'' perfect marginal sampler: Aggregate-AJ($G,u,\Lambda,N$)}
    \label{alg:AJ-marginal-sampler-aggregate}
    
    \BlankLine
    \If{$N = 0$}{
        \Return 0\;
    }
    Sample $X \sim \mathrm{Bin}(N, \frac{\lambda}{1+\lambda})$\;\label{line:rv-aggregate}
    \ForEach{$v \in N(u) \setminus \Lambda$}{
        $Y \gets \text{Aggregate-AJ}(G,v,\Lambda, X)$\;
        $\Lambda \gets \Lambda \cup \{v\}$ and $X \gets X - Y$\;
    }
    \Return $X$\;
\end{algorithm}

The idea behind the aggregate perfect marginal sampler is simple.
Recall that, to determine the zone of obliviousness vs.~indecision in one call of Anand and Jerrum's sampler, we flip a biased coin with head probability $\frac{\lambda}{1+\lambda}$. However, to make the same decisions for $N$ independent calls, we do not have to flip the coin $N$ times, but instead we can directly sample from a binomial distribution $\mathrm{Bin}(N, \frac{\lambda}{1+\lambda})$ to find the number of calls into each zone. Importantly, sampling $X\sim\mathrm{Bin}(N, \frac{\lambda}{1+\lambda})$ takes only $\tilde{O}(1)$ time.

This saving in running time also carries over to subsequent recursive calls.
The key observation is that the total number of recursive calls is $o(N)$ in expectation. This sublinearity becomes intuitive when we consider the idealized scenario where exactly $\frac{\lambda}{1+\lambda}$-fraction of the total calls ``survive'' and are passed to new recursive calls. Therefore, the total number of recursive calls obeys the recurrence $T(m) = 1 + \Delta \cdot T\tp{\frac{\lambda}{1+\lambda} \cdot m}$, which is sublinear whenever the branching process is subcritical, i.e. $\frac{\lambda \Delta}{1+\lambda} < 1$. Combining this with an amortized $O(\log^2 n)$-time oracle for sampling from binomial distributions, we obtain an algorithm for performing marginal inference in $o(N) = o(\frac{n}{\eps^2})$ time.

This approach can be extended to other perfect marginal samplers tailored to various settings, such as the Ising and polymer models (see \Cref{subsec:aggregate-explicit}). 
However, our previous analysis of the ``aggregate'' sampler relies heavily on the underlying branching process. 
This exposes a key challenge for applications lacking such ``local self-recurrence structures''.
Notable examples include perfect marginal samplers for graph colorings~\cite{LWY26} and hypergraph independent sets~\cite{FGWWY23}, whose efficiency depends on ``global properties'' such as witness graphs or potential functions.

To overcome this difficulty, we present a much more general framework for ``aggregate'' perfect marginal samplers.
For simplicity, we assume the marginal sampler has access to randomness only through fair or biased coins. 
Our framework requires only one mild assumption: the stopping time $T$ has a geometric tail, i.e., there exist constants $C \ge 0$ and $\alpha \in (0,1)$ such that for any $t \ge 0$, it holds $\Pr[]{T \ge t} \le C \cdot \alpha^t$.

\begin{theorem}[Informal version of \Cref{thm:batch-automaton}]\label{thm:general-2}
    Provided that the perfect marginal sampler has a stopping time with a geometric tail, we can compute the empirical mean of $N$ samples in $\widetilde{O}(N^{1-\delta})$ time, where $\delta > 0$ is a constant depending on the geometric decay rate.
\end{theorem}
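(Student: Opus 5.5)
We propose to model the perfect marginal sampler as a randomized automaton whose only source of randomness is a sequence of (possibly biased) coin flips. The outcome and every subsequent action of an execution are determined by the sequence of coin outcomes seen so far. We therefore view the set of all executions as a rooted \emph{decision tree}. Each internal node is a state, namely the history of coin outcomes, at which a coin with some bias $p_{\text{node}}$ is flipped. Each leaf is a terminated execution labeled by an output in $\{0,1\}$, and the depth of a leaf is the stopping time $T$.

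Running $N$ independent copies of the sampler then amounts to routing $N$ tokens down this tree. At a node holding $m$ tokens, we draw $X\sim\mathrm{Bin}(m,p_{\text{node}})$, send $X$ tokens to the heads-child and $m-X$ to the tails-child, and recurse only into children that receive a positive number of tokens. By the independence of the copies, the multiset of leaves reached has exactly the law of $N$ independent executions. Summing the tokens at leaves labeled $1$ gives exactly the law of $\sum_i X_i$, so the empirical mean is reproduced perfectly. This is the same idea as Aggregate-AJ, with the binomial split applied at the level of single coin flips rather than at the level of recursive calls.

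The cost of the procedure is the number of visited nodes, each charged $\tilde O(1)$ for a binomial draw via the amortized oracle. A node $v$ at depth $t$ is visited with probability $1-(1-\pi_v)^N\le \min(1,N\pi_v)$, where $\pi_v$ is the probability that a single execution reaches $v$. Fix a cutoff depth $t_0$. Nodes at depth $\le t_0$ number at most $2^{t_0}$ in total. Nodes at depth $t>t_0$ are charged $N\pi_v$, and summing over depth $t$ gives $N\Pr{T\ge t}\le NC\alpha^t$. Summing over $t>t_0$ yields $O(NC\alpha^{t_0}/(1-\alpha))$. We choose $t_0=\log N/\log(2/\alpha)$, so that $2^{t_0}=N\alpha^{t_0}=N^{1-\delta}$ with $\delta=\log(1/\alpha)/\log(2/\alpha)>0$. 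The expected number of visited nodes is then $O(N^{1-\delta})$, and the total time is $\tilde O(N^{1-\delta})$.

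The main obstacle is making the state-counting honest. A real sampler does computation between coin flips, and its coins may have arbitrary rational or real biases. It may also draw uniform reals, as Anand--Jerrum do with $r\sim U(0,1)$, rather than a single biased coin. We would handle these as follows.
\begin{itemize}
\item \emph{Deterministic work between flips.} We charge it to the node at which it occurs. This requires defining $T$ to count total work, or at least to bound work per coin, so that the geometric tail covers it.
\item \emph{Uniform reals.} We replace each uniform draw by a sequence of biased coin flips, for example binary-expansion comparisons, and verify that the geometric tail is preserved up to constants.
\item \emph{Binomial sampling.} We need access to $\mathrm{Bin}(m,p)$ for arbitrary $p$ in $\tilde O(1)$ amortized time.
\item \emph{Branching factor.} The $2^{t_0}$ count assumes binary branching. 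A $k$-ary branching would replace $2$ by $k$ in the choice of $t_0$ and the exponent $\delta$.
\end{itemize}
Once the automaton formalization is fixed, the counting argument above is routine.
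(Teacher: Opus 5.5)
Your proposal is correct and takes essentially the same route as the paper's proof of \Cref{thm:batch-automaton}. The paper likewise pushes $N$ runs through a width-$D$ probabilistic automaton using multinomial splits, which are binomial in your binary case. It bounds the expected number of occupied states at depth $\ell$ by $\min(D^\ell, CN\alpha^\ell)$ and cuts at depth $\approx \log N/\log(D/\alpha)$, giving $\delta=\log(1/\alpha)/(\log D+\log(1/\alpha))$, which is your exponent with $D=2$. The deterministic work between flips that you flag is covered in the paper by an explicit assumption: each transition kernel after a length-$\ell$ path can be built in $O(\ell^c)$ time, which costs only polylogarithmic factors.
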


The geometric tail is not a stringent requirement for perfect marginal samplers.
For perfect marginal samplers associated with subcritical branching processes, this condition is naturally satisfied, as the stopping time $T=\min\{t : X_t \le 0\}$ of a supermartingale $(X_t)_{t \ge 0}$ with bounded increments is known to possess a geometric tail.
For samplers in~\cite{LWY26} and~\cite{FGWWY23} relying on ``global structures'', this geometric tail is likewise guaranteed by analogous reasoning or a counting argument.
Hence, we can apply our framework to many perfect marginal samplers in the literature, in particular the ones for proper colorings and hypergraph independent sets.

\begin{remark}
The aggregate perfect marginal samplers for the hardcore and Ising models in~\Cref{subsec:aggregate-explicit} can be alternatively derived from \cref{thm:general-2}. Though, the dependence of $\delta$ on the maximum degree $\Delta$ is better in \Cref{subsec:aggregate-explicit}: For specific applications such as the hardcore model with $\lambda < \frac{1}{\Delta-1}$ or the Ising model with $\abs{\beta - 1} \le \frac{1}{2\Delta}$, it yields $\delta = \Omega(\frac{1}{\log \Delta})$ while (the proof of) \Cref{thm:general-2} yields only $\delta = \Omega(\frac{1}{\Delta})$.
\end{remark}

\subsection{Prior Work}

\paragraph{On Subquadratic Counting} The only prior work on subquadratic counting in spin systems we are aware of is \cite{AFFGW25}, which gave such an algorithm for the hardcore model on graphs of maximum degree $\Delta$ with $\lambda = o(\Delta^{-3/2})$. Beyond the setting of spin systems, we note that subquadratic counting algorithms are also known for spanning trees~\cite{LPY25} and knapsack solutions~\cite{FJ25}. The former is linear algebraic in nature, and relies crucially on the Matrix Tree Theorem \cite{Kirchhoff1958}. The latter is also tailored specifically to the knapsack problem.

\paragraph{On Approximate Counting in Spin Systems} Prior counting algorithms for spin systems can be broadly placed in one of two categories: The first type is randomized, leveraging the counting-to-sampling reduction of Jerrum--Valiant--Vazirani \cite{JVV86}, and using a Markov chain to solve the sampling problem; see also \cite{BSVV08,SVV09,Huber15} for reductions based on the simulated annealing algorithm. The second type, which has witnessed an explosion of interest over the past two decades, directly solves the counting problem \emph{deterministically}. However, those with provable approximation guarantees often incur a huge polynomial running time; see e.g. \cite{weitz2006counting, Bar16, PR17} and references therein.
In practice, message passing algorithms such as belief propagation perform marginal inference well empirically. However, to the best of our knowledge, approximation error bounds are available only for random problem instances, or are absent entirely.


\paragraph{On Perfect Marginal Samplers} 
A few works have investigated perfect marginal sampling in spin systems. 
\cite{AJ22} showed a perfect marginal sampler for the hardcore model, which is the basis of our work.
\cite{LWY26} designed perfect marginal samplers for soft-constraint spin systems with weak interactions and for graph colorings. \cite{FGWWY23} considered deterministic approximate counting for hypergraph independent sets and hypergraph colorings, and their approach also gives perfect marginal samplers for the former.

\section{Preliminaries} \label{sec:prelim}

\subsection{Notations} \label{sec:notations}

\paragraph{Distributions on Product Spaces}
Let $\Sigma$ be a finite alphabet, which we typically take to be $\Sigma = [q] \coloneqq \{1,\dots,q\}$ when $q \geq 3$, or $\Sigma = \{0,1\}$ when considering the Boolean setting. Let $\mu$ be a distribution over $\Sigma^{V}$, where $V$ is a finite set. We call members $\sigma \in \Sigma^{V}$ \emph{configurations}.
\begin{itemize}
    \item We denote the \emph{support} of $\mu$ by $\Omega(\mu) \coloneqq \{\sigma \in \Sigma^V \mid \mu(\sigma) > 0\}$. 
    \item For a subset $\Lambda \subseteq V$, a partial configuration $\tau \in \Sigma^{\Lambda}$ on $\Lambda$ is called a \emph{pinning} (or \emph{boundary condition}) if it is \emph{feasible}, i.e. there exists a completion $\sigma \in \Omega(\mu)$ consistent with $\tau$ on $\Lambda$.
    Let $\tau_1 \in \{0,1\}^{\Lambda_1}$ and $\tau_2 \in \{0,1\}^{\Lambda_2}$ be consistent pinnings, i.e. they agree on $\Lambda_{1} \cap \Lambda_{2}$. We denote the concatenation of these pinnings by $\tau_1 \cup \tau_2 \in \{0,1\}^{\Lambda_1 \cup \Lambda_2}$.
    \item For any pinning $\tau$, the \emph{conditional distribution} $\mu^\tau$ is the distribution over $\Sigma^{V}$ restricted to configurations consistent with $\tau$, with probabilities proportional to $\mu(\sigma)$. 
    \item For a subset $S \subseteq V$, we denote the marginal distribution on $S$ by $\mu_S$. For a singleton $S = \{v\}$, we write $\mu_v$ rather than $\mu_{\{v\}}$ for brevity.
    \item Combining the above, for a pinning $\tau \in \Sigma^{\Lambda}$ and $S \subseteq V \setminus \Lambda$, we write $\mu_{S}^{\tau}$ for the marginal distribution of $S$ conditioned on $\tau$.
\end{itemize}

\paragraph{Multinomial Distributions} Let $M(N; \nu)$ be the $N$-trial \emph{multinomial distribution} defined with respect to a fixed probability measure $\nu$, a sample of which records the number of occurrences of each element of $\supp(\nu)$ in $N$ independent draws from $\nu$. The representation of such a sample depends on the cardinality of the support of $\nu$:

\begin{itemize}
    \item \textbf{Small Support:} When the support of $\nu$ is a fixed finite set, say $\mathrm{supp}(\nu) = [q]$, a sample from $M(N; \nu)$ will be represented as a \textit{frequency vector} $(o_1, o_2, \dots, o_q) \in \mathbb{N}^q$. Here, each $o_i$ denotes the number of occurrences of the $i$-th element among $N$ independent draws from $\nu$, satisfying the constraint $\sum_{i=1}^q o_i = N$.

    \item \textbf{Large or Infinite Support:} When the support of $\nu$ is large or even infinite, a sample from $M(N; \nu)$ is more naturally expressed as a \textit{sparse mapping} or a set of pairs $\{(x, o_x) : o_x \ge 1\}$, where each $x \in \mathrm{supp}(\nu)$ is an observed value and $o_x$ is its corresponding count.
\end{itemize}

\paragraph{Graphs} Let $G=(V,E)$ be an undirected graph and $u \in V$ be a vertex in $G$. Let $N(u)$ to denote the set of vertices adjacent to vertex $u$, and $N^{+}(u) = N(u) \cup \{u\}$ to denote the inclusive neighborhood.

\subsection{Models}\label{sec:model}

\paragraph{Spin Systems}
One of the main classes of models we work with are \emph{spin systems}, a special class of discrete undirected graphical models (also known as \emph{Markov random fields}) originating in statistical physics. They are specified by three pieces of data:
\begin{itemize}
    \item Let $G=(V,E)$ be an undirected graph, and $q \geq 2$ an integer specifying the number of possible \emph{spins} (or \emph{colors}) each vertex can be assigned.
    \item Let $A \in \R_{\geq0}^{q \times q}$ be a symmetric matrix, often referred to as the \emph{interaction matrix}.\footnote{This can be generalized further by allowing each edge $e \in E$ to have its own interaction matrix $A_{e} \in \R_{\geq0}^{q \times q}$. For simplicity, we will not consider this more general setting in this paper.}
    \item For each $u \in V$, let $\lambda_{u} \in \R_{\geq0}^{q}$ be a nonnegative vector, often referred to as the \emph{external field} applied to $u$.
\end{itemize}
The \emph{Gibbs measure} $\mu$ of the resulting $q$-spin system is a probability distribution over $[q]^{V}$ which assigns each configuration $\sigma \in [q]^{V}$ probability
\begin{align*}
    \mu(\sigma) = \frac{1}{Z} \prod_{uv \in E} A(\sigma_{u},\sigma_{v}) \cdot \prod_{u \in V} \lambda_{u}(\sigma_{u}),
\end{align*}
where
\begin{align*}
    Z \coloneqq \sum_{\sigma \in [q]^{V}} \prod_{uv \in E} A(\sigma_{u},\sigma_{v}) \cdot \prod_{u \in V} \lambda_{u}(\sigma_{u})
\end{align*}
is the \emph{partition function}. We highlight a few emblematic examples.
\begin{itemize}
    \item \textbf{Two-Spin Systems} Fix $q = 2$. Without loss of generality, we may write $A = \begin{bmatrix} \gamma & 1 \\ 1 & \beta \end{bmatrix}$ for $\gamma \geq \beta \geq 0$. For a \emph{fugacity} $\lambda \geq 0$, let $\lambda_{u} = \begin{bmatrix} 1 \\ \lambda \end{bmatrix}$ for all $u \in V$. Then the resulting Gibbs measure $\mu$ is supported on $\{0,1\}^{V}$, and
    \begin{align*}
        \mu_{G}(\sigma) \propto \mathrm{wt}_{G}(\sigma) \coloneqq \beta^{m_{1}(\sigma)} \gamma^{m_{0}(\sigma)} \lambda^{n_{1}(\sigma)},
    \end{align*}
    where $m_{c}(\sigma) = \#\{uv \in E : \sigma_{u} = \sigma_{v} = c\}$ for $c \in \{0,1\}$ counts the number of edges with both endpoints colored $c$, and $n_{1}(\sigma) \coloneqq \#\{u \in V : \sigma_{u} = 1\}$ denotes the number of vertices assigned spin $1$.

    The special case $\beta = 0, \gamma = 1$ recovers the \emph{hardcore model}, since in order for $\mathrm{wt}_{G}(\sigma) > 0$, the set of vertices colored $1$ under $\sigma$ must form an independent set. The special case $\beta = \gamma$ recovers the famous \emph{Ising model}.

    Throughout this paper, by \emph{anti-ferromagnetic}, we mean the setting where $\beta\gamma \leq 1$. Note that this includes the hardcore model.
    \item \textbf{Graph Homomorphisms:} Let $A \in \{0,1\}^{q \times q}$ be the adjacency matrix of a $q$-vertex graph $H$, and set $\lambda_{u} = \allone_{q}$ for all $u \in V$. In this case, adjacent vertices in $G$ are constrained to take spins/colors coming from an edge of $H$, and $\mu$ becomes the uniform distribution over all \emph{graph homomorphisms} from $G$ to $H$ (or \emph{$H$-colorings}). The partition function $Z$ is then the total number of all such $H$-colorings. The special case where $H = K_{q}$ recovers \emph{proper $q$-colorings}.
\end{itemize}

\paragraph{Polymer Models}
Let $G=(V,E)$ be an undirected graph with maximum degree $\Delta$ and $[q]$ be the set of spins. For each vertex $v \in V$, a ground-state spin $g_v$ is assigned. A polymer $\gamma$ is a connected subset $S_\gamma$ of vertices together with an assignment $\sigma^\gamma$ on the connected subset such that $\sigma^\gamma_u \in [q] \setminus g_u$. The size of $\gamma$ is the size of $S_\gamma$, denoted by $\abs{\gamma}$. Two polymers $\gamma_1$ and $\gamma_2$ are compatible if $\mathrm{dist}(S_{\gamma_1},S_{\gamma_2}) \ge 2$. 

Given a collection of allowed polymers $\+C(G)$ together with weight function $w: \+C(G) \to \mathbb{R}_{\ge 0}$, the distribution is supported over the collection of pairwise compatible polymers $\Gamma \subseteq \+C(G)$. Specifically, given $\Gamma \subseteq \+C(G)$ such that for all $\gamma_1 \neq \gamma_2 \in \Gamma$ satisfying $\gamma_1$ is compatible with $\gamma_2$, the weight of drawing $\Gamma$ is proportional to $\prod_{\gamma \in \Gamma} w_\gamma$, i.e.,
\begin{align*}
    \mu(\Gamma) \propto \prod_{\gamma \in \Gamma} w_\gamma. 
\end{align*}
The partition function is the total of weight $\prod_{\gamma \in \Gamma} w_\gamma$ over all pairwise compatible polymers $\Gamma$.

\paragraph{Hypergraph Independent Sets}
Let $G=(V,E \subseteq 2^{V})$ be a hypergraph. An \emph{independent set} of the hypergraph is a subset of vertices $I \subseteq V$ such that $e \not\subset I$ for all hyperedges $e \in E$. The corresponding Gibbs distribution is uniform over all hypergraph independent sets, and the corresponding partition function counts the number of such independent sets.

\subsection{Uniqueness}
We will use the following notion of uniqueness for anti-ferromagnetic two-spin systems \cite{li2013correlation}.
\begin{definition}\label{def:d-unique}
    Let $d \geq 1$ be an integer and $\delta \in (0,1)$ be a real number.
    Let $\beta,\gamma,\lambda$ be parameters such that $0 \leq \beta \leq \gamma$, $\beta \gamma < 1$, and $\gamma,\lambda$ are positive.
    We say $(\beta, \gamma, \lambda)$ is \emph{$d$-unique with gap $\delta$} if 
    \begin{align*}
        \abs{F_d'(\hat{x}_d)} = \frac{d(1-\beta\gamma) \hat{x}_d}{(\beta \hat{x}_d + 1)(\hat{x}_d + \gamma)} \leq 1-\delta,
    \end{align*}
    where
    \begin{align*}
        \hat{x}_d \text{ is the unique fixed point of } F_d(x) \coloneqq \lambda\tp{\frac{\beta x + 1}{x + \gamma}}^d.
    \end{align*}
    Moreover if $(\beta,\gamma,\lambda)$ is $d$-unique with gap $\delta$ for all $1 \leq d < \Delta$, then we say $(\beta,\gamma,\lambda)$ is \emph{up-to-$\Delta$ unique with gap $\delta$}.
\end{definition}
It is now well-known that for anti-ferromagnetic two-spin systems, up-to-$\Delta$ uniqueness implies the existence of $\mathsf{FPRAS}$ for the partition function for arbitrary graphs of maximum degree $\Delta$ \cite{weitz2006counting, li2013correlation}. Conversely, if the parameters do not satisfy up-to-$\Delta$ uniqueness, then no $\mathsf{FPRAS}$ exists unless $\mathsf{NP} = \mathsf{RP}$ \cite{Sly10, SS12, GSV16}.

\subsection{Computational tasks} \label{sec:tasks}
In this section, we give precise definitions for the computational tasks considered in this paper.
Let $G = (V,E)$ be a graph and $\mu_G$ be the Gibbs distribution of any model under consideration on $[q]^V$ for some integer $q > 0$.

\begin{definition}[perfect marginal sampling] \label{def:general-marginal-sampling}
    \ 
    \begin{itemize}[leftmargin=10ex]
        \item[\bf \textsf{Input}] essential parameters (e.g. fugacity $\lambda$ for the hardcore model) and a vertex $u \in V$; 
        \item[\bf \textsf{Oracle}] 
        oracle $\textsf{neighbors}(v)$ will return a list $\Gamma_v$ of neighbors of $v$ in graph $G$; \\
        oracle $\textsf{pinning}(v)$ will return $\tau_v$ when $v \in \Lambda$ and will return $\perp$ otherwise; 
        \item[\bf \textsf{Output}] a random spin configuration $\sigma \sim \mu^\tau_{G,u}$.
    \end{itemize}
\end{definition}
\begin{remark}
Throughout this paper, most perfect marginal samplers we work with will run in (expected) constant time, though in some cases subpolynomial time suffices (see \Cref{rmk:marginal-sampler-time}).
\end{remark}

\begin{definition}[fully polynomial-time randomized approximation scheme ($\mathsf{FPRAS}$)] \label{def:FPRAS}
    \ 
    \begin{itemize}[leftmargin=10ex]
        \item[\bf \textsf{Input}] essential parameters, a graph $G = (V,E)$, and an error tolerance $\epsilon > 0$;
        \item[\bf \textsf{Output}] a random number $\widehat{Z}$ such that 
        \[\Pr{\e^{-\epsilon} Z_G \leq \widehat{Z} \leq \e^{\epsilon} Z_G} \geq \frac{3}{4}.\]
    \end{itemize}
\end{definition}

\subsection{Reduction from counting to sampling}

The following result is the standard reduction from randomized approximate counting to randomized inference.
We also provide a proof for completeness.
\begin{lemma}[\text{\cite{jerrum1986random}}] \label{lem:reduction-counting-to-inference}
    Let $\mu$ be a distribution on $[q]^n$. 
    Let $\epsilon \in (0,1)$ be a real number.
    Suppose there is an algorithm $\mathcal{A}$ such that for any $v \in [n], c \in [q]$, any $\Lambda \subseteq [n]$, and any pinning $\tau \in [q]^\Lambda$ such that $\mu_\Lambda(\tau) > 0$, $\mathcal{A}$ can compute a random number $P \geq 0$ such that $\E{P} = \mu^\tau_v(c)$ and $\Var{P}/\E{P}^2 \leq \epsilon^2/n$, in time $T(\epsilon, n)$.
    Then, there is an algorithm that for any feasible $\sigma \in [q]^n$ with $\mu(\sigma) > 0$, it computes a random number $M$ in time $O(n) \cdot T(\epsilon, n)$ such that
    \begin{align*}
        \Pr{\e^{-\epsilon} \mu(\sigma) \leq M \leq \e^\epsilon \mu(\sigma)} \geq \frac{3}{4}.
    \end{align*}
\end{lemma}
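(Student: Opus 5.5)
We follow the chain-rule reduction with a product of independent unbiased estimators. Fix a feasible $\sigma \in [q]^n$. For $i = 1,\dots,n$, let $\tau^{(i)} \coloneqq \sigma_{[i-1]}$ be the pinning of the first $i-1$ coordinates to $\sigma$. Since $\mu(\sigma) > 0$, every $\tau^{(i)}$ satisfies $\mu_{[i-1]}(\tau^{(i)}) > 0$. We invoke $\mathcal{A}$ on input $(v,c,\Lambda,\tau) = (i, \sigma_i, [i-1], \tau^{(i)})$ with fresh independent randomness for each $i$. This produces independent random numbers $P_1,\dots,P_n \ge 0$ with $\E{P_i} = p_i \coloneqq \mu_i^{\tau^{(i)}}(\sigma_i)$ and $\Var{P_i} \le (\epsilon^2/n)\, p_i^2$. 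We output $M \coloneqq \prod_{i=1}^n P_i$. The running time is $n \cdot T(\epsilon,n)$ plus $O(n)$ arithmetic, as required. Note that the tolerance passed to $\mathcal{A}$ may need to be a constant multiple of $\epsilon$; this only changes $T$ by the implicit constant, or we absorb it by rescaling $\epsilon$.

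\textbf{Unbiasedness and variance.} By the chain rule, $\mu(\sigma) = \prod_i p_i$, and independence gives $\E{M} = \prod_i p_i = \mu(\sigma)$. For the second moment,
\[
\frac{\E{M^2}}{\E{M}^2} = \prod_{i=1}^n \frac{\E{P_i^2}}{p_i^2} = \prod_{i=1}^n \Bigl(1 + \frac{\Var{P_i}}{p_i^2}\Bigr) \le \Bigl(1+\frac{\epsilon^2}{n}\Bigr)^n \le \e^{\epsilon^2}.
\]
Hence $\Var{M}/\E{M}^2 \le \e^{\epsilon^2} - 1 \le 2\epsilon^2$ for $\epsilon < 1$.

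\textbf{Concentration.} Chebyshev's inequality gives $\Pr{\abs{M - \mu(\sigma)} > t\,\mu(\sigma)} \le 2\epsilon^2/t^2$. The event $\e^{-\epsilon}\mu(\sigma) \le M \le \e^{\epsilon}\mu(\sigma)$ contains $\abs{M/\mu(\sigma) - 1} \le 1-\e^{-\epsilon}$, and $1 - \e^{-\epsilon} \ge \epsilon/2$ for $\epsilon \in (0,1)$. So the failure probability is at most $8\epsilon^2/\epsilon^2 = 8$, which is useless. The constants therefore need tuning.

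\textbf{Main obstacle: the constant factor.} This constant-factor bookkeeping is the only real obstacle. The clean fix is to call $\mathcal{A}$ with accuracy $\epsilon' = \epsilon/c$ for a suitable constant $c$ (say $c = 6$). Then $\Var{M}/\E{M}^2 \le 2\epsilon^2/c^2$, and Chebyshev bounds the failure probability by $8/c^2 \le 1/4$. Since $T(\epsilon/c, n) = O(T(\epsilon,n))$ for any polynomial-type dependence on $1/\epsilon$, the running time remains $O(n)\cdot T(\epsilon,n)$.

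An alternative, if one insists on using exactly $\epsilon$, is a median-of-means step. We would average $O(1)$ independent copies of $M$ to cut the relative variance by a constant factor before applying Chebyshev. This also costs only $O(n)\cdot T(\epsilon,n)$ time. I would present the rescaling version and remark that either works.
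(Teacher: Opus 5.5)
Your skeleton matches the paper's proof. You pin the prefix $\sigma_{[i-1]}$, obtain independent unbiased factors $P_i$, bound $\E{M^2}/\E{M}^2=\prod_i(1+\Var{P_i}/p_i^2)$, and finish with Chebyshev. The gap is in the constant-factor fix you say you would present. The lemma fixes $\epsilon$ and only promises that $\mathcal{A}$ attains relative variance $\epsilon^2/n$ in time $T(\epsilon,n)$. Even if $\mathcal{A}$ were available at every accuracy, nothing relates $T(\epsilon/6,n)$ to $T(\epsilon,n)$; for example, $T(\epsilon,n)=n^{1/\epsilon}$ makes the ratio unbounded. ``Absorbing the constant by rescaling $\epsilon$'' is also unavailable, since the conclusion demands accuracy $\e^{\pm\epsilon}$ and time $O(n)\cdot T(\epsilon,n)$ with the same $\epsilon$. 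So the rescaling version proves the lemma only under an extra hypothesis on $T$. That hypothesis is harmless in the paper's application, where $T$ is polynomial in $1/\epsilon$, but it is not what is stated.

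The argument that proves the statement as written is your alternative, and it is what the paper does. No median is needed; plain averaging suffices. The paper averages $k$ independent runs of $\mathcal{A}$ \emph{per coordinate}. Each $P_i$ then has relative variance at most $\epsilon^2/(kn)$, so $\Var{M}/\E{M}^2\le \e^{\epsilon^2/k}-1\le 1/(k/\epsilon^2-1)$. You instead average $k$ independent copies of the whole product $M$, which gives relative variance at most $(\e^{\epsilon^2}-1)/k\le 2\epsilon^2/k$. Chebyshev at deviation $\epsilon/2$ then bounds the failure probability by $8/k$, so $k=32$ suffices. Both versions cost $O(n)\cdot T(\epsilon,n)$ and are equally good.

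One caution on constants if you follow the paper's per-coordinate version. Its bound gives failure probability at most $4/(k-\epsilon^2)$, so the paper's choice $k=13$ only yields success probability above $2/3$. Reaching the stated $3/4$ needs $k\ge 17$.
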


\subsection{Sampling oracle for binomial distribution}

\begin{lemma}[\cite{colton15exact}]\label{lem:binomial}
    There exists a data structure that, after an $\widetilde{O}(\sqrt{N})$ initialization phase, enables exact sampling from the binomial distribution $\mathrm{Bin}(n,p)$ for any $n \le N$ and $p \in [0,1]$ in $O(\log^2 N)$ time in expectation per query.
\end{lemma}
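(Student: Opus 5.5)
The statement to prove is \Cref{lem:binomial}: after an $\widetilde{O}(\sqrt{N})$ initialization, any query $\mathrm{Bin}(n,p)$ with $n \le N$ and arbitrary $p \in [0,1]$ can be answered exactly in $O(\log^2 N)$ expected time.

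The plan has three layers. First, reduce an arbitrary real $p$ to fair coins. Second, sample $\mathrm{Bin}(n,\tfrac12)$ exactly by rejection from a discretized Gaussian proposal. Third, bound the expected cost of each layer.

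\textbf{Reduction from general $p$ to fair coins.} Write $p = 0.b_1b_2b_3\ldots$ in binary, with bits obtained lazily. A $\mathrm{Bernoulli}(p)$ trial can be realized by flipping a fair coin $c_1$:
\begin{itemize}
\item if $c_1 \ne b_1$, the outcome is decided (success iff $c_1 < b_1$);
\item otherwise we recurse on the next bit.
\end{itemize}
Aggregated over $n$ trials, this becomes a loop over bit positions. We keep a count $m$ of undecided trials, starting with $m = n$. At level $j$ we draw $K \sim \mathrm{Bin}(m,\tfrac12)$, the number of trials whose coin matches $b_j$. The other $m-K$ trials are decided: they are all successes if $b_j = 1$, and all failures if $b_j = 0$. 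We then set $m \gets K$. Since $m$ halves in expectation each round, after $O(\log N)$ levels $m = 0$ with high probability. The expected number of levels is $O(\log N)$ and the number of bits of $p$ read is also $O(\log N)$. So a general-$p$ query costs $O(\log N)$ fair-binomial queries.

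\textbf{Fair binomial in $O(\log N)$ expected time.} Here I would follow Bringmann et al.\ / Farach-Colton--Tsai. Let $\mu_0 = \lfloor m/2\rfloor$. Use a proposal with a discrete (two-sided geometric) body of width $\Theta(\sqrt m)$ around $\mu_0$ and geometric tails, so that the proposal dominates $c\cdot\binom{m}{k}2^{-m}$ for some constant $c$. Accept a proposed $k$ with probability $\binom{m}{k}2^{-m}/(\text{envelope})$.

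The key point is exactness. This acceptance test compares a uniform variable against a ratio of binomial coefficients. We evaluate it by lazily refining both sides to $O(\log m)$ bits, increasing precision only when the comparison is undecided. The ratio $\binom{m}{k}/\binom{m}{\mu_0}$ is a product of $|k-\mu_0|$ rational factors. To evaluate it fast, precompute during initialization tables of $\log\binom{m}{\cdot}$ near the center, or equivalently prefix products over a window of size $O(\sqrt N \log N)$. This is where the $\widetilde{O}(\sqrt N)$ initialization comes from. Since $m$ varies with the query, I would instead rely on precomputed factorial-like tables $\log(i!)$ for $i$ in windows of width $\widetilde{O}(\sqrt N)$ around $m/2$. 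Alternatively, use Stirling-series evaluation with certified error bounds, together with the fallback to higher precision.

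\textbf{Cost accounting and main obstacle.} The expected number of proposals is constant. Each comparison resolves after $O(1)$ expected rounds of precision doubling, and each round costs $O(\log N)$ word operations. Combining this with the $O(\log N)$ levels of the reduction gives $O(\log^2 N)$ expected time per query.

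The main obstacle is certifying \emph{exactness} under finite precision while keeping the expected work bounded. The approach is to show that the probability of needing precision $2^{-t}$ decays like $2^{-t}$ times a polynomial factor, so the total expected cost telescopes. I would take this step essentially verbatim from \cite{colton15exact}, since the lemma is cited from there. The paper's own contribution is only the aggregation of calls.
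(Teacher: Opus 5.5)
The paper gives no argument for this lemma. It imports the statement directly from \cite{colton15exact}, so deferring the precision analysis to that reference is exactly what the paper does.

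The part of your plan that you actually carry out is correct. You compare a lazily generated uniform with the binary expansion of $p$ and aggregate the undecided trials into a count updated by $m \gets K \sim \mathrm{Bin}(m,\tfrac12)$. This produces an exact $\mathrm{Bin}(n,p)$ sample after $O(\log N)$ expected levels. Each level costs $O(\log N)$ in conditional expectation, so a Wald-type argument gives $O(\log^2 N)$ per query. This needs two things: $O(1)$-time access to the bits of $p$, which is the standing assumption of \cite{colton15exact}, and a fair-coin sampler whose $O(\log N)$ expected cost holds uniformly over all $m \le N$.

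That uniformity is where your sketch breaks. The one piece of the data structure you specify yourself is a table of $\log\binom{m}{\cdot}$, or of $\log(i!)$, over a window of width $\widetilde{O}(\sqrt N)$ around $m/2$. Such a table only serves values of $m$ in a range of width $\widetilde{O}(\sqrt N)$ around a fixed center. The lemma asks for one initialization valid for every $n \le N$. Even a single query already visits the geometrically shrinking sequence $m = n, K_1 \approx n/2, K_2 \approx n/4, \ldots$. Covering windows around $m/2$ for all $m \le N$ takes $\Theta(N)$ entries, and that is before adding the values $\log m!$ that $\log\binom{m}{k} = \log m! - \log k! - \log(m-k)!$ also needs. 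So as written, the $\widetilde{O}(\sqrt N)$ initialization and the $O(\log N)$ fair-coin query bound are not achieved simultaneously.

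A repair would have to evaluate the acceptance ratio without $m$-dependent tables. One option is to expand $\log\bigl(\binom{m}{\lfloor m/2\rfloor + t}/\binom{m}{\lfloor m/2\rfloor}\bigr)$ in powers of $1/m$, writing the power sums in $t$ in closed form. For $|t| = \widetilde{O}(\sqrt m)$, a constant number of terms gives precision $m^{-c}$. Another option is certified Stirling bounds. Either way you would still need the following:
\begin{itemize}
\item An exact fallback, such as big-integer evaluation of $\binom{m}{k}$ in $\mathrm{poly}(m)$ time, for comparisons that need more precision than the truncated expansion certifies. Stirling's series, for instance, has an error floor of order $e^{-2\pi j}$ at argument $j$.
\item A proof that the fallback's cost times its probability is $O(1)$.
\item An envelope that is exactly samplable and evaluable, e.g.\ using dyadic geometric parameters instead of $1-1/\sqrt m$.
\end{itemize}
Such a route would not need an $\widetilde{O}(\sqrt N)$ table at all, so it would be a different argument from the cited one. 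If you want the stated preprocessing bound, you have to take the data structure itself from \cite{colton15exact} rather than the windowed table you describe.
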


As a corollary, we are able to sample from the multinomial distribution $M(n;\nu)$ within poly-logarithmic time when $\nu$ is supported on a constant-size set.

\subsection{Enumerating connected subgraphs}
\begin{lemma}[\cite{BCKL13,PR17}]\label{lem:connected-subgraph}
    Let $G=(V,E)$ be an undirected graph with maximum degree $\Delta \ge 3$. 
    \begin{itemize}
    \item The number of size-$k$ connected subgraph containing a vertex $u \in V$ is bounded by $(\e \Delta)^{k-1}$;
    \item Moreover, there exists an algorithm enumerating all size-$k$ connected subgraphs containing a vertex $u \in V$ in $O\tp{k^7 (\e \Delta)^{2k}}$ time.
    \end{itemize}
\end{lemma}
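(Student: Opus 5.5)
The lemma has two parts, and I would prove the counting bound first, since the enumeration algorithm's running time analysis uses it.

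\textbf{Counting bound.} I would use a percolation (site-sampling) argument rather than an explicit tree encoding. Fix $p \in (0,1)$ and keep each vertex of $G$ independently with probability $p$. For a connected set $S \ni u$ with $|S| = k$, let $\partial S$ be its outer vertex boundary. The event ``$S$ is exactly the connected component of $u$ among the kept vertices'' has probability
\[
p^{k}(1-p)^{|\partial S|}.
\]
These events are disjoint over distinct $S$, so summing over all size-$k$ connected $S \ni u$ gives at most $1$.

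Next I bound $|\partial S|$. Take a spanning tree of $G[S]$; it uses $k-1$ edges, so at most $2(k-1)$ edge-endpoint incidences. Hence
\[
|\partial S| \le \Delta k - 2(k-1) = (\Delta-2)k + 2.
\]
Consequently the number of such $S$ is at most $p^{-k}(1-p)^{-(\Delta-2)k-2}$.

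I then optimize over $p$, taking $p = 1/(\Delta-1)$ or a nearby value. This gives a bound of the form $c\,\bigl((\Delta-1)(1+\tfrac{1}{\Delta-2})^{\Delta-2}\bigr)^{k}$. The base is at most $\e(\Delta-1)$, which leaves slack of a factor $\Delta/(\Delta-1)$ per vertex. The main obstacle is using that slack to absorb the additive constants and reach the exact form $(\e\Delta)^{k-1}$. If the percolation bookkeeping is too lossy for small $k$, I would fall back on the classical injection: map $S$ to a rooted subtree of the infinite $\Delta$-branching tree via a DFS with a fixed neighbor order. The number of such subtrees is the Fuss--Catalan number
\[
\frac{1}{(\Delta-1)k+1}\binom{\Delta k}{k} \le \frac{(\e\Delta)^{k}}{(\Delta-1)k+1} \le (\e\Delta)^{k-1}.
\]

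\textbf{Enumeration.} I would grow the sets level by level. Let $\mathcal{S}_1 = \{\{u\}\}$. Given $\mathcal{S}_{j-1}$, for each set $S \in \mathcal{S}_{j-1}$ and each boundary vertex $w \in \partial S$ (there are at most $\Delta j$ of them), form $S \cup \{w\}$. Then remove duplicates by sorting the new sets as sorted vertex lists, at cost $\mathrm{poly}(j)$ per comparison. Every connected $S'$ of size $j$ containing $u$ has a vertex other than $u$ whose removal keeps it connected, namely a leaf of a spanning tree. Hence $\mathcal{S}_j$ is exactly the family of all size-$j$ connected sets containing $u$. By the counting bound, $|\mathcal{S}_{j-1}| \le (\e\Delta)^{j-2}$. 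So level $j$ produces at most $\Delta j(\e\Delta)^{j-2}$ candidates. Sorting them costs $\mathrm{poly}(j) \cdot (\e\Delta)^{j}\log\bigl((\e\Delta)^{j}\bigr)$. Summing over $j \le k$ gives a total well within $O\bigl(k^7(\e\Delta)^{2k}\bigr)$.
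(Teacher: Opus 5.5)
Your fallback is the standard argument behind this lemma, which the paper only quotes from \cite{BCKL13,PR17}. You map each connected $S \ni u$, via a search tree with labelled neighbor slots, into a $k$-node rooted subtree of the infinite $\Delta$-ary tree and apply Knuth's Fuss--Catalan count. Your enumeration is the standard level-by-level growth procedure, and its analysis is fine: sorting-based deduplication even gives $\mathrm{poly}(k)\,(\e\Delta)^{k}\log(\e\Delta)$, well inside $O(k^7(\e\Delta)^{2k})$.

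However, the last inequality of your counting chain is false for small $k$. The step $\frac{(\e\Delta)^k}{(\Delta-1)k+1}\le(\e\Delta)^{k-1}$ needs $(\Delta-1)k+1\ge\e\Delta$. This fails for $k=1,2$ for every $\Delta$ (at $k=1$ it says $\e\le 1$), and for $k=3$ whenever $\Delta\le 7$. The estimate $\binom{\Delta k}{k}\le(\e\Delta)^k$ is too crude there. A one-line repair is
\[
\frac{1}{(\Delta-1)k+1}\binom{\Delta k}{k}=\frac{1}{k}\binom{\Delta k}{k-1}\le\frac{(\Delta k)^{k-1}}{k!}=\frac{k^{k}}{k!}\cdot\frac{\Delta^{k-1}}{k}\le\frac{(\e\Delta)^{k-1}}{k},
\]
using $k!\ge k^{k}\e^{-(k-1)}$, which follows from $\ln k!\ge\int_1^k\ln x\,dx$. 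Alternatively, check $k\le 3$ by hand: the Fuss--Catalan values there are $1$, $\Delta$, and $\Delta(3\Delta-1)/2$.

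You should also drop the percolation route rather than keep it as the primary argument, because its failure for small $k$ is not a bookkeeping issue. At $k=1$ the best it gives is $\min_p p^{-1}(1-p)^{-\Delta}\approx\e(\Delta+1)$ instead of $1$. You can sharpen it by noting that every event forces $u$ to be kept, so the events have total probability at most $p$ and the bound becomes $p^{-(k-1)}(1-p)^{-(\Delta-2)k-2}$. Even then, at $k=2$ the optimum is $(2\Delta-1)\bigl(1+\frac{1}{2\Delta-2}\bigr)^{2\Delta-2}\ge 4\Delta-2>\e\Delta$. The per-vertex slack you identify is only $1+O(1/\Delta)$, so it cannot absorb a loss of order $\Delta$ until $k$ is of order $\Delta\log\Delta$.

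With the repaired estimate, the tree injection alone proves the counting bound for all $k$. Your enumeration analysis then goes through as written.
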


\section{Black-box reduction for two-spin systems} \label{sec:2-spin-reduction}
In this section, we establish the black-box reduction from subquadratic-time counting to constant-time perfect marginal sampling for anti-ferromagnetic two-spin systems stated in \Cref{thm:2-spin-reduction}.
Note that according to \Cref{lem:reduction-counting-to-inference}, \Cref{thm:2-spin-reduction} is a direct corollary of the following result.

\begin{theorem} \label{thm:2-spin-reduction-inference}
    Let $\delta > 0$, $\epsilon \in (0,1)$ be real numbers, and $\Delta \geq 1$ be a constant integer.
    Fix $\beta \geq 0, \gamma > 0, \lambda > 0$, $\beta\gamma < 1$ such that $(\beta, \gamma, \lambda)$ are up-to-$\Delta$ unique with gap $\delta$.
    If there is an algorithm $\mathcal{A}$ such that for any graph with max degree $\Delta$, $\mathcal{A}$ can finish the perfect marginal sampling task in \Cref{def:general-marginal-sampling} in expected time $O(1)$, then there is an algorithm that, for any $n$-vertex graph $G = (V,E)$ with max degree $\Delta$, any $\Lambda \subseteq V$, any pinning $\tau \in \Omega(\mu_{G,\Lambda})$, and any vertex $v \in V$, it computes a random number $M$ in expected time $\widetilde{O}((n/\epsilon^2)^{1 - (\log \frac{1}{1-\delta})/ \log \frac{\Delta}{1-\delta}})$ such that 
    \begin{align*}
        \E{M} = \mu^\tau_v(1) \quad \text{and} \quad \frac{\Var{M}}{\E{M}^2} \leq \epsilon^2/n.
    \end{align*}
\end{theorem}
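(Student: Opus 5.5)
The plan follows the outline in the technical overview. First, reduce to a tree via Weitz's self-avoiding walk construction: for the pinned instance $(G,\Lambda,\tau)$ and vertex $v$, let $T=T^v_{\mathrm{SAW}}$ with the induced pinning, so that $\mu^\tau_{G,v}=\mu_{T,v}$. Next, choose a truncation set $S$ in $T$ in the style of the computationally efficient correlation decay of Li--Lu--Yin (\Cref{algo:boundary-on-tree}). For the hardcore model, $S$ is simply the level-$\ell$ sphere $S(v,\ell)$. For general two-spin systems, one truncates a branch once the product of potential-weighted contraction factors along the path falls below a threshold $\eta$, so the estimate is governed by the ``weight'' $\eta$ rather than by the depth. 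Draw $X\sim\mu_{T,S}$ exactly by calling the perfect marginal sampler $\mathcal{A}$ sequentially on the vertices of $S$, each time conditioning on the previously sampled ones. Compute $P=\mu^X_{T,v}(1)$ by running the tree recursion upward from $S$, which costs $O(|T_{\le S}|)$. By the law of total probability, $\E{P}=\mu_{T,v}(1)=\mu^\tau_v(1)$. Finally, average $k$ independent copies to obtain $M$.

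The key analytic step is a variance bound that is sharper than the square of the spatial-mixing rate. The target is $\Var{P}\lesssim \sum_{u\in S}(\text{weight of }u)\lesssim \bigl(\frac{1-\delta}{\Delta-1}\bigr)^{\ell}$-type decay, that is, effectively $(1-\delta)^\ell/|S|$. I would prove this by revealing $X$ one boundary vertex at a time, or equivalently by working level by level, and using the Doob martingale $P_t=\E{P\mid \mathcal F_t}$, so that $\Var{P}=\sum_t\E{(P_t-P_{t-1})^2}$. Each increment is the influence of a single boundary spin on the root. By the Li--Lu--Yin potential analysis, this influence is at most the product of derivatives $\prod|\partial F/\partial x_i|$ along the path, measured in a potential $\varphi$ (for the hardcore model, $\varphi(x)=\mathrm{arcsinh}\sqrt{x}$-type). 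The squared increment is then bounded by this product times the conditional variance of the boundary spin, which is $O(1)$. The crucial saving is that the sum over boundary vertices involves the products themselves, not their squares summed over a sum. Uniqueness with gap $\delta$ gives $\sum_{\text{children}}|\text{derivative}|^{?}\le 1-\delta$ under the potential, where the $?$ is the right Hölder exponent. Combined with Cauchy--Schwarz, this yields $\Var{P}\le C(1-\delta)^\ell$ when weights are measured per unit of tree growth, and $|S|\le\Delta^\ell$. This step is the main obstacle: it requires getting the exponent right, namely a first-moment sum rather than a squared-sup sum, uniformly over pinnings of the anti-ferromagnetic system including $\gamma>1$. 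It also requires checking that $\E{P}$ is bounded below, which follows since marginals are bounded away from $0$ in terms of the constants, so that relative variance and absolute variance agree up to constants.

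For the parameters, choose $\ell$ so that $(1-\delta)^\ell\approx (n/\epsilon^2)^{-1}\cdot$polylog, i.e.\ $\ell=\log(n/\epsilon^2)/\log\frac{1}{1-\delta}$, and take $k=1$. Alternatively, balance the two factors: with level $\ell$ the cost is $\Delta^\ell$ and the variance is $((1-\delta)/\Delta)^\ell\cdot\Delta^\ell$-scaled. Balancing $k\cdot\Delta^\ell$ against $k\cdot(\Delta/(1-\delta))^{\ell}\approx n/\epsilon^2$ gives total time $(n/\epsilon^2)^{1-x}$ with $x=\log\frac1{1-\delta}/\log\frac{\Delta}{1-\delta}$, matching the statement. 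On implementation, $\mathcal{A}$ must be run on $T$ in a black-box manner, but $T$ is exponentially large. I would use \Cref{lem:SAW-tree-with-flower}: replace each subtree hanging below $S$ by the corresponding residual subgraph of $G$ with the appropriate pinning, which preserves $\mu_{T,v}$. The resulting graph has size $O(n\Delta^\ell)$ and supports $\mathsf{neighbors}$ and $\mathsf{pinning}$ queries in $\widetilde O(1)$ time. Each of the $|S|\le\Delta^\ell$ calls to $\mathcal{A}$ therefore costs $O(1)$ expected oracle queries, and the claimed expected running time follows.
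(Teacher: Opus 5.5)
Your pipeline matches the paper's: SAW tree, exact sampling of the boundary with $\mathcal{A}$ on the SAW tree with flower, tree recursion for $P$, and unbiasedness by total probability. However, the variance bound that makes the running time sublinear is never established, and the bound you actually derive is too weak. You conclude $\mathrm{Var}[P]\le C(1-\delta)^\ell$ with $|S|\le\Delta^\ell$. With your choice $\ell=\log(n/\epsilon^2)/\log\frac{1}{1-\delta}$, this costs $(n/\epsilon^2)^{\log\Delta/\log\frac{1}{1-\delta}}$, which is far above linear. Your ``balancing'' paragraph silently switches to the sharp bound $((1-\delta)/\Delta)^\ell$, which nothing in the proposal proves.

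That sharp bound cannot come from the Li--Lu--Yin first-moment contraction plus Cauchy--Schwarz. From $\sum_i a_i\le\sqrt{1-\delta}$ with $a_i=\sqrt{h(R_u)h(R_{w_i})}$ you only get $\sum_i a_i^2\le 1-\delta$, with no $1/d$ gain per level. The missing ingredient is a second-moment contraction at every vertex: $h(R_u)\sum_{i=1}^{d}h(R_{w_i})\le (1-\delta)/d$ for $R_u=F_d(R_{w_1},\dots,R_{w_d})$. The paper obtains this from the symmetrization lemma, i.e. Jensen for the concave $g(y)=(\mathrm{e}^{-y}-\gamma)(\mathrm{e}^{y}-\beta)$ in the coordinates $y_i=\log\frac{\beta R_{w_i}+1}{R_{w_i}+\gamma}$, which preserve $F_d$. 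It combines this with $\alpha_d\le\sqrt{1-\delta}$, i.e. $d^2h(F_d(\mathbf{1}\bar x))h(\bar x)\le 1-\delta$. The first-moment bound is a consequence of this inequality, not conversely.

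This inequality is then fed into three further ingredients:
\begin{itemize}
\item the exact identity $\mathrm{inf}^2(u\to w)=h(R_w)^2$ for a child $w$;
\item multiplicativity of influence along tree paths;
\item an induction on $h(R_u)\,\mathrm{inf}^2(u\to S)\le N_u^{-1}\max_{w\in S\cap\mathcal{T}_u}h(R_w)$ with the degree-only budget $N_w=N_u(1-\delta)/d_u$.
\end{itemize}
Together these yield $\mathrm{inf}^2(v\to S)\le C_{\Delta,\delta}/N$. Relatedly, your truncation by ``potential-weighted contraction factors'' cannot be implemented as stated. Those factors depend on $R$-values determined by the not-yet-sampled boundary, so the truncation rule must depend on degrees only.

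Second, your fixed-order Doob martingale does not account for changing pinnings. The increment for $u_t$ is governed by the $u_t\to v$ influence under the pinning $X_{<t}$, which differs for each $u_t$. So a bound on $\sum_{u\in S}\mathrm{inf}^2(v\to u)$ under a single pinning does not apply directly. Bounding each path factor $h(R_w)^2$ by its worst case loses a constant per level: for the hardcore model near $\lambda_c$, $\sup h^2\approx(\mathrm{e}/d)^2$ against the needed $(1-\delta)/d^2$. That loss ruins the exponent. Also, a ``product of derivatives'' does not directly bound the effect of pinning a boundary spin, since that moves $R_u$ from $0$ to $\infty$.

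The paper handles both issues by revealing $S$ in uniformly random order.
\begin{itemize}
\item The step-$t$ conditional variance becomes an average over $u\in S\setminus\Lambda_t$ of squared $u\to v$ influences under the single pinning $X_t$.
\item Bayes' rule converts these into $v\to u$ influences, where the exact product formula holds, at a cost of $(2b)^{-2}$.
\item The total-squared-influence bound, valid for every pinning, then applies.
\item Summing $C/|S_t|$ over $t$ loses only a factor $\log|S|$, which is absorbed by taking $N=\Theta(n\log n/\epsilon^2)$.
\end{itemize}
You would need this device, or an additional uniform multiplicative comparison of $h(R_w)$ across pinnings, to close the argument.
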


Our reduction relies heavily on the existence of the self-avoiding walk tree (SAW tree) introduced by Weitz~\cite{weitz2006counting}.
Here, we use an algorithmic version of it.
\begin{lemma}[\text{\cite{weitz2006counting}}] \label{lem:SAW-tree}
Fix an $n$-vertex graph $G = (V, E)$ with max degree $\Delta$ and parameters $\beta \geq 0$, $\gamma, \lambda > 0$.
Let $\mu_G$ be the Gibbs distribution of the two-spin system.
Then, for every vertex $v \in V$, there is a SAW tree $\mathbb{T}_{\mathrm{SAW}} = \mathbb{T}_{\mathrm{SAW}}(G,v) = (V_{\mathrm{SAW}}, E_{\mathrm{SAW}})$ with the same max degree and $v \in V_{\mathrm{SAW}}$, such that:
\begin{itemize}
    \item for all $\Lambda \subseteq V$ and $\tau \in \Omega(\mu_\Lambda)$, there is $\tau_{\mathrm{SAW}}$ on some $\Lambda_{\mathrm{SAW}} \subseteq V_{\mathrm{SAW}}$ that $\mu_{\mathbb{T}_{\mathrm{SAW}}, v}^{\tau_{\mathrm{SAW}}} = \mu_{G,v}^\tau$;
    \item there is a data structure $\mathcal{D}$ which maintains a vertex $u \in V_{\mathrm{SAW}}$ and supports the  following operations in $O_\Delta(\mathrm{dist}_{\mathbb{T}_{\mathrm{SAW}}}(v, u))$ time:
    \begin{enumerate}
        \item $\mathcal{D}.\mathsf{neighbors}()$ returns a list of neighbors of $u$; 
        \item $\mathcal{D}.\mathsf{move}(w)$ updates $u$ to $w$ if $w$ is a neighbor of $u$.
    \end{enumerate}
\end{itemize}
\end{lemma}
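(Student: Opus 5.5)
The plan has two parts: construct Weitz's self-avoiding walk tree with its pinning rule and re-derive the marginal identity, then implement the data structure by storing the current walk explicitly.

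\textbf{The tree and its pinning.} Fix an arbitrary total order on the neighbors of each vertex of $G$. The nodes of $\mathbb{T}_{\mathrm{SAW}}$ are the walks $v=w_0,w_1,\dots,w_k$ in $G$ such that $w_0,\dots,w_{k-1}$ are distinct. The only repeat allowed is at the last step: $w_k$ may equal some earlier $w_j$, i.e.\ the walk closes a cycle. Such cycle-closing walks are leaves, and a walk is the parent of its one-step extensions; a node is labeled by its endpoint. Pinnings on the tree are defined as follows.
\begin{itemize}
\item A node whose endpoint lies in $\Lambda$ gets the spin $\tau$ of that endpoint, and its subtree is discarded.
\item A cycle-closing leaf $w_0\cdots w_j\cdots w_k$ with $w_k=w_j$ is pinned according to Weitz's rule. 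It gets spin $1$ if the closing edge $w_{k-1}w_j$ is larger than the edge $w_jw_{j+1}$ in the order at $w_j$, and spin $0$ otherwise.
\end{itemize}
Degrees are preserved: the root has at most $\Delta$ children, and every other node has one parent and at most $\Delta-1$ children, so the maximum degree stays $\Delta$.

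\textbf{The marginal identity.} This is the main step. Work with the ratio $R_{G,v}^{\tau}=\mu_{G,v}^{\tau}(1)/\mu_{G,v}^{\tau}(0)$ and argue by induction on the number of unpinned vertices. Build the auxiliary graph $G'$ by splitting $v$ into $d=\deg(v)$ copies $v_1,\dots,v_d$, each attached to one neighbor $u_i$. Then $R_{G,v}$ equals $R_{G',\,v_1\cdots v_d}$, the ratio of all copies being $1$ versus all copies being $0$. This telescopes as
\[
R_{G,v}=\prod_{i=1}^{d}\frac{\mu_{G'}(v_i=1\mid v_{<i}=1,\,v_{>i}=0)}{\mu_{G'}(v_i=0\mid v_{<i}=1,\,v_{>i}=0)}.
\]
Each factor is the one-neighbor recursion $\lambda\,\frac{\beta R_i+1}{R_i+\gamma}$. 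Here $R_i$ is the ratio at $u_i$ in $G'\setminus\{v_i\}$, with the other copies of $v$ pinned to $1$ or $0$ according to the edge order. That pinning is exactly the cycle-closing rule above, so comparing with the tree recursion for $\mathbb{T}_{\mathrm{SAW}}$ closes the induction.

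I expect this step to be the main obstacle. The bookkeeping of which copies are pinned to which spin has to match the leaf rule exactly. Infeasible pinnings must also be handled, for instance a hardcore leaf pinned to $1$ next to an occupied vertex. For these I would use the convention of treating ratios in $[0,\infty]$, or absorb a vertex pinned to $1$ into its neighbors' external fields. This yields the required $\Lambda_{\mathrm{SAW}}$ and $\tau_{\mathrm{SAW}}$.

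\textbf{The data structure.} Store the current node $u$ as an explicit stack $(w_0,\dots,w_k)$, with $k=\mathrm{dist}_{\mathbb{T}_{\mathrm{SAW}}}(v,u)$.
\begin{itemize}
\item $\mathsf{neighbors}()$: if $u$ is a cycle-closing leaf or has its endpoint in $\Lambda$, return only the parent $w_0\cdots w_{k-1}$. Otherwise, for each graph neighbor $x$ of $w_k$ other than $w_{k-1}$, return the child $w_0\cdots w_k x$. Whether a child is a cycle-closing leaf, and with which pinned spin, is found by scanning the stack for $x$ in $O(k)$ time and comparing the two edges in the order at $x$ in $O(\Delta)$ time. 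The total is $O(\Delta k)=O_\Delta(\mathrm{dist})$.
\item $\mathsf{move}(w)$: pop the stack if $w$ is the parent, and push otherwise, in $O(1)$ time.
\end{itemize}
Scanning the stack instead of keeping a visited array avoids an $O(n)$ initialization cost, so every operation meets the stated $O_\Delta(\mathrm{dist}_{\mathbb{T}_{\mathrm{SAW}}}(v,u))$ bound.
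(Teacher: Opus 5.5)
Your proposal is essentially the paper's argument, which is Weitz's: split $v$ into copies each attached to one neighbor, telescope the all-ones versus all-zeros ratio, identify each factor with a one-neighbor recursion on a graph with one fewer free vertex, and induct. The packaging differs in two places.
\begin{itemize}
\item \textbf{Tree definition.} The paper defines the tree through the modified graphs $G^w$, so $\mathbb{T}_w=\mathbb{T}_{\mathrm{SAW}(v_\ell)}(G^w)$ holds by construction, and those $G^w$ are reused as the flowers in the next lemma. Your explicit walk description needs this identification checked. When you apply the induction hypothesis to $G'\setminus\{v_i\}$, its neighbor orders must be inherited from $G$, with $v_j$ occupying $v$'s slot at $u_j$. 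That is what makes its SAW tree coincide, pins included, with the subtree of $\mathbb{T}_{\mathrm{SAW}}$ at $(v,u_i)$.
\item \textbf{Data structure.} The paper maintains $G^w$ and rebuilds it on upward moves. Your stack-scanning structure is lighter and still meets the $O_\Delta(\mathrm{dist})$ bound.
\end{itemize}

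Two bookkeeping slips need fixing.

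\textbf{The factor of $\lambda$.} If every copy carries field $\lambda$, each telescoping factor is indeed $\lambda\frac{\beta R_i+1}{R_i+\gamma}$. But then $\mu_{G'}(\text{all }1)/\mu_{G'}(\text{all }0)=\lambda^{d-1}R_{G,v}$, not $R_{G,v}$. As written, your product exceeds the tree recursion $\lambda\prod_i\frac{\beta R_i+1}{R_i+\gamma}$ by $\lambda^{d-1}$. The paper corrects for this with its prefactor $\frac{\lambda}{\lambda^d}$.

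\textbf{The direction of the leaf rule.} Index the copies increasingly in the order at $v$. Conditioning on $v_{<i}=1$, $v_{>i}=0$ pins a walk returning to $v$ through $u_j$ to spin $1$ exactly when $j<i$. That is, the leaf gets $1$ when the closing edge is \emph{smaller} than the starting edge $vu_i$, which is the opposite of your leaf rule. So the claim that this pinning is exactly your cycle-closing rule fails. You can fix it in either of two ways:
\begin{itemize}
\item reverse the telescoping to condition on $v_{<i}=0$, $v_{>i}=1$, as in the paper's $\sigma^{(i)}_{v_j}=\mathbf{1}[j>i]$, which matches $\tau^{(v,u)}_{v_j}=\mathbf{1}[u_j>u]$;
\item or flip your leaf rule.
\end{itemize}
With these corrections the argument goes through.
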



\newcommand{\dep}{\mathsf{depth}}
\newcommand{\siz}{\mathsf{size}}

\begin{definition} \label{def:boundary-on-tree}
    Let $\mathbb{T} = (V,E)$ be a tree rooted at $v \in V$.
    We say $S \subseteq V$ is a \emph{boundary} of $\mathbb{T}$ if for all $u \in S$, there is no descendant $u$ in $S$.
    Moreover, we define 
    \begin{align*}
        \siz_{\mathbb{T}}(S) &\coloneqq \abs{S \cup \set{\text{the connected component in $\mathbb{T} - S$ that contains $v$}}}, \\
        \dep_{\mathbb{T}}(S) &\coloneqq \max_{u \in S} \mathrm{dist}_{\mathbb{T}}(v, u).
    \end{align*}
\end{definition}

\begin{lemma}[SAW tree with flower] \label{lem:SAW-tree-with-flower}
    Let $G = (V, E)$ be a graph with max degree $\Delta$ and parameters $\beta \geq 0$, $\gamma, \lambda > 0$.
    Let $\mu_G$ be the Gibbs distribution of the two-spin system.
    Fix $v \in V$. Let $\mathbb{T}_{\mathrm{SAW}} = \mathbb{T}_{\mathrm{SAW}}(G, v) = (V_{\mathrm{SAW}}, E_{\mathrm{SAW}})$ be the SAW tree rooted at $v$.
    Then for any $S \subseteq V_{\mathrm{SAW}}$ be a boundary of $\mathbb{T}_{\mathrm{SAW}}$, there is a graph $G_{\mathrm{SF}} = G_{\mathrm{SF}}(G, v, S) =  (V_{\mathrm{SF}}, E_{\mathrm{SF}})$ obtained from $\mathbb{T}_{\mathrm{SAW}}$ by, for every $u \in S$, replacing the sub-tree $\mathbb{T}_u$ rooted at $u$ to some graph $G_u$ that contains $u$ and keeps its degree, such that $G_{\mathrm{SF}}$ enjoys the following properties:
    \begin{itemize}
        \item for $\Lambda \subseteq V$ and $\tau \in \Omega(\mu_\Lambda)$, there is $\Lambda_{\mathrm{SF}} \subseteq V_{\mathrm{SF}}$ and a pinning $\tau_{\mathrm{SF}}$ on $\Lambda_{\mathrm{SF}}$ that $\mu^\tau_{G, v} = \mu^{\tau_{\mathrm{SF}}}_{G_{\mathrm{SF}}, v}$.
        \item there is a data structure $\mathcal{O}$ such that after $O_\Delta(\dep_{\mathbb{T}_{\mathrm{SAW}}}(S) \cdot \siz_{\mathbb{T}_{\mathrm{SAW}}}(S))$ time initialization, it can answer the following queries in time $O_\Delta(\dep_{\mathbb{T}_{\mathrm{SAW}}}(S))$: for every $u \in V_{\mathrm{SF}}$,
        \begin{enumerate}
            \item $\mathcal{O}.\textsf{neighbors}(u)$ will return a list $\Gamma_u$ of neighbors of $u$ in graph $G_{\mathrm{SF}}$; 
            \item $\mathcal{O}.\textsf{pinning}(u)$ will return $\tau_{\mathrm{SF},u}$ when $u \in \Lambda_{\mathrm{SF}}$ and will return $\perp$ otherwise; 
        \end{enumerate}
    \end{itemize}
    We call such graph $G_{\mathrm{SF}}$ a \emph{\underline{S}AW tree with \underline{f}lower}.
\end{lemma}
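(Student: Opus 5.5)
The plan is to build $G_{\mathrm{SF}}$ explicitly from Weitz's construction. Recall how the SAW tree arises. A node $u$ of $\mathbb{T}_{\mathrm{SAW}}$ corresponds to a self-avoiding walk $v = w_0, w_1, \dots, w_k = u$ in $G$. The subtree $\mathbb{T}_u$ is itself, up to pinnings, the SAW tree of a graph $G^{(u)}$. This graph is obtained from $G$ by deleting the path vertices $w_0,\dots,w_{k-1}$ and converting each edge from $w_k$'s later neighbors back into the path into pinnings, according to Weitz's ordering rule. Equivalently, $\mathbb{T}_u$ is the SAW tree of $G - \{w_0,\dots,w_{k-1}\}$ rooted at $w_k$, carrying the pinning induced by $\tau$ together with Weitz's cycle-closing pinnings on the removed path.

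For each $u \in S$, I will define $G_u$ as a fresh copy of this residual graph $G^{(u)}$, with root $u$ identified with the copy of $w_k$. The extra pinnings that Weitz places on leaves of $\mathbb{T}_u$ are transferred to the residual graph in the natural way. Concretely, when a neighbor $x$ of $w_k$ in $G$ lies on the removed path, Weitz creates a pinned leaf. I will keep that leaf as a pinned pendant vertex attached to $u$ in $G_u$. This preserves $u$'s degree, since each neighbor of $w_k$ other than its parent $w_{k-1}$ becomes either a child or a pinned pendant.

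The first property then follows from Weitz's theorem applied twice. Applied to $G^{(u)}$ rooted at $u$, it shows that the law of the tree recursion message from $u$ is unchanged, i.e.\ $\mu_{G_u,u} = \mu_{\mathbb{T}_u,u}$ under the corresponding pinnings. Moreover, since $G_u$ hangs off the remaining tree only at the cut vertex $u$, the marginal at $v$ in $G_{\mathrm{SF}}$ depends on $G_u$ only through this message. Combining this with \Cref{lem:SAW-tree} gives $\mu^\tau_{G,v} = \mu^{\tau_{\mathrm{SF}}}_{G_{\mathrm{SF}},v}$.

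For the data structure, I will first explicitly build the top part of the tree. This is the component of $\mathbb{T}_{\mathrm{SAW}} - S$ containing $v$, together with $S$. I construct it by DFS using the operations of $\mathcal{D}$ from \Cref{lem:SAW-tree}, which costs $O_\Delta(\dep(S))$ per node, so $O_\Delta(\dep(S)\cdot\siz(S))$ in total. At each $u \in S$ I store the path $w_0,\dots,w_k$ as a hashed set of forbidden vertices of size at most $\dep(S)$, along with Weitz's ordering data for $w_k$.

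A vertex of $G_{\mathrm{SF}}$ is named either by a tree node or by a pair $(u,x)$ with $u\in S$ and $x \in V\setminus\{w_0,\dots,w_{k-1}\}$. Queries are answered as follows. For a tree node, $\mathsf{neighbors}$ reads the stored tree. For a pair $(u,x)$, $\mathsf{neighbors}$ returns $G$'s neighbors of $x$ filtered by the forbidden set of $u$, plus the pinned pendants when $x = w_k$. Each query costs $O_\Delta(\dep(S))$ to scan the stored path of length at most $\dep(S)$; hashing would even give $O_\Delta(1)$. The $\mathsf{pinning}$ query returns $\tau_x$ if $x\in\Lambda$, or the stored Weitz pinning for pendants and tree leaves, and $\perp$ otherwise.

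The main obstacle is the first property: getting Weitz's pinning conventions exactly right, so that the residual graph $G_u$ indeed reproduces $\mathbb{T}_u$'s root marginal. The delicate point is that Weitz's cycle-closing pinnings depend on the order of neighbors at the vertex where the cycle closes, which may be a deep vertex inside $\mathbb{T}_u$ rather than $w_k$. I would handle this by invoking Weitz's theorem directly on $G^{(u)}$ and checking that the cycles closing through removed path vertices are exactly those encoded by the pendants at $w_k$. If this check fails, I would fall back to recording, for each removed path vertex adjacent to $G^{(u)}$, a pinning derived from Weitz's order, and verify the identity via Weitz's telescoping argument.
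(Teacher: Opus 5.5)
Your concrete $G_u$ does not reproduce the root marginal of $\mathbb{T}_u$, so the first property fails. You take $G_u$ to be $G-\{w_0,\dots,w_{k-1}\}$ with pinned pendants only at $w_k$, and your $\mathsf{neighbors}$ query simply filters out edges from any other vertex $x$ into the removed path. But a walk $w_0,\dots,w_k,\dots,y$ whose endpoint $y\neq w_k$ is adjacent to a removed $w_i$ also closes a cycle. In $\mathbb{T}_u$ this produces a pinned copy of $w_i$ below $y$, and deleting the edge $yw_i$ is in general not equivalent to that pinned neighbour. For instance, in the hardcore model a neighbour pinned to $1$ forces $y$ to be unoccupied.

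Concretely, take the hardcore model with $\lambda=1$ on the $4$-cycle $v,a,b,c$ with $a<c$, and $S=\{(v,a),(v,c)\}$. In $\mathbb{T}_{(v,a)}$ the node $(v,a,b,c)$ gets a copy of $v$ pinned to $\mathbf{1}[c>a]=1$, so $c$ is forced unoccupied. Your $G_{(v,a)}$, by contrast, is the free path $a\,b\,c$. Your $G_{\mathrm{SF}}$ is then a $7$-vertex path centred at $v$, giving $\mu_v(1)/\mu_v(0)=9/25$ instead of the true $2/5$.

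So the check in your last paragraph fails in general. The fallback as worded (one pinning per removed path vertex) still does not suffice either. Copies of the same $w_i$ hanging off different neighbours $y$ need different pins, according to whether $y$ precedes or follows $w_{i+1}$ in $w_i$'s neighbour order.

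The missing idea is the paper's choice $G_u=G^w$ for $u=w=(w_0,\dots,w_k)$. It is obtained by iterating the one-step splitting of \Cref{def:SAW-one-step} along the whole walk. Each $w_i$ with $i<k$ is deleted, and a separate pinned pendant copy is attached to every remaining neighbour $y\neq w_{i+1}$, pinned to $\mathbf{1}[y>w_{i+1}]$. With this choice $\mathbb{T}_{\mathrm{SAW}}(G^w,w_k)$ is exactly $\mathbb{T}_w$ with the same pinnings. Hence \Cref{lem:SAW-tree} applied to $G^w$ gives equality of the root marginals with no further verification, and your cut-vertex argument then yields the first bullet.

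Your data-structure accounting survives the repair. A query at $x\in G_u$ must additionally emit, for each removed $w_i$ adjacent to $x$, a pendant with pin $\mathbf{1}[x>w_{i+1}]$. This still costs $O_\Delta(\dep_{\mathbb{T}_{\mathrm{SAW}}}(S))$ by scanning the stored path; the paper instead replays the splitting steps on $G$ and undoes them afterwards.
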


The proof of \Cref{lem:SAW-tree-with-flower} is deferred to \Cref{sec:SAW-tree-with-flower}.

We now present our algorithm, which is similar to the one used in \cite[Theorem 1.1]{AFFGW25} at a high level.
Fix $\Lambda \subseteq V$, $\tau \in \Omega(\mu_{G,\Lambda})$, and $v \in V$. 
Our algorithm proceeds as follows:
\begin{enumerate}[label={\sf Step-\arabic*}:, ref={\sf Step-\arabic*}, leftmargin=2cm]
    \item\label{step:1} We first choose a boundary $S \subseteq V_{\mathrm{SAW}}$ on the SAW tree $\mathbb{T}_{\mathrm{SAW}}(G,v)$ according to \Cref{algo:boundary-on-tree} with $\delta$ being the uniqueness gap and $N = \widetilde{O}(n)$. We prepare the data structure $\mathcal{O}$ for the SAW tree with flower $G_{\mathrm{SF}}(G,v,S)$ by \Cref{lem:SAW-tree-with-flower}.
    \item\label{step:2} Then, we generate $\sigma \sim \mu^{\tau_{\mathrm{SF}}}_{G_{\mathrm{SF}}, S}$ by calling the black-box algorithm for perfect marginal sampling (see \Cref{def:general-marginal-sampling}) $\abs{S}$ times with oracles provided by $\mathcal{O}$.
In particular, suppose $S = \set{u_1, \cdots, u_{\abs{S}}}$ and let $\sigma( < i) \coloneqq \sigma(u_1, \cdots, u_{i-1})$, we use the black-box perfect marginal sampling algorithm to sample  $\sigma_i \sim \mu^{\tau_{\mathrm{SF}}, \sigma(< i)}_{G_{\mathrm{SF}}, u_i}$  for $i = 1, \cdots, \abs{S}$.
    \item\label{step:3} Finally, we calculate and output $M \coloneqq \mu^{\tau_{\mathrm{SF}}, \sigma}_{G_{\mathrm{SF}}, v}(1)$ by the tree recursion.
Note that this gives an unbiased estimator because $\E{M} = \oE_{\sigma}[\mu^{\tau_{\mathrm{SF}}, \sigma}_{G_{\mathrm{SF}}, v}(1)] = \mu^\tau_v(1)$.
\end{enumerate}

\begin{algorithm}
\caption{\label{algo:boundary-on-tree} Boundary on tree: $\mathrm{boundary}_{\mathbb{T}, \delta}(v, N)$}
 \KwIn{parameter $\delta \in (0,1)$, $N  > 0$, $\mathbb{T} = (V,E)$ with root $v$}
 \KwOut{a boundary $S \subseteq V$ of $\mathbb{T}$}
 $\mathrm{boundary}_{\mathbb{T}, \delta}(u, N) \coloneqq $ \Begin{
    let $C$ be the set of children of $u$ in $\mathbb{T}$ with $d = \abs{C}$\;
    \lIf{$N \leq 1$}{\Return $\set{u}$ }
    \lIf{$C = \emptyset$}{\Return $\emptyset$}
    \lFor{$w \in C$}{
        let $S_w = \mathrm{boundary}_{\mathbb{T}_w, \delta}(w, N \cdot \frac{1-\delta}{d})$
    }
    \Return $\bigcup_{w \in C} S_w$ \;
 }
\end{algorithm}

Compared with \cite{AFFGW25}, the main differences are:
(1) we use the SAW tree with flower (\Cref{lem:SAW-tree-with-flower}) instead of constructing SAW tree on-the-fly to remove further requirements on perfect marginal samplers and achieve a black-box reduction;
(2) inspired by the \emph{computationally efficient correlation decay}~\cite{li2012approximate, li2013correlation}, we use a more sophisticated strategy to truncate the SAW tree (see \Cref{algo:boundary-on-tree}), which makes the reduction work for general anti-ferromagnetic two-spin systems (including non-monotone $\gamma > 1$ cases);
(3) we provide a tighter variance analysis that works all the way up to uniqueness.

\begin{remark}
    We note that in the rest of this section, we will focus on the case $N = \tilde{O}(n/\epsilon^2)$.
    Intuitively, $N^{-1}$ has the same magnitude as the desired variance bound.
\end{remark}

The crucial reason for sub-linear running time is that the sub-tree explored by our algorithm is of sub-linear size.
Specifically, if $G$ is a $\Delta$-regular graph, the boundary $S$ consists of vertices at level $O(\log_{\frac{\Delta-1}{1-\delta}}(N))$ of the SAW tree.
\begin{proposition} \label{prop:size-of-boundary}
    Let $\Delta = O(1)$, $\delta \in (0,1)$, and $N \geq 1$ be parameters.
    Fix a tree $\mathbb{T}$ with root $v$ and maximum degree $\Delta$.
    Let $S = \mathrm{boundary}_{\mathbb{T}, \delta}(v, N)$ as in \Cref{algo:boundary-on-tree}.
    It holds that
    \begin{align*}
        \abs{S} \leq T(N) = \widetilde{O}\left(N^{1 - (\log \frac{1}{1-\delta})/ \log \frac{\Delta}{1-\delta}}\right) \quad \text{and} \quad \dep_{\mathbb{T}}(S) \leq O\left(\log_{\frac{\Delta-1}{1-\delta}}(N)\right), 
    \end{align*}
    where $T(N)$ is the maximum number of vertices in $\mathrm{boundary}_{\mathbb{T}, \delta}(v, N)$ over all $\mathbb{T}$ and $v$.
\end{proposition}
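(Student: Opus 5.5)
We track how the budget $N$ propagates down the recursion of \Cref{algo:boundary-on-tree} and bound the size through a recurrence on $T(N)$.

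\emph{Depth.} Let $u$ be at distance $k$ from $v$, with ancestors $v=u_0,\dots,u_k=u$ and $d_j$ the number of children of $u_j$. The call at $u$ receives budget
\[
N_u = N\prod_{j<k}\frac{1-\delta}{d_j},
\]
and $u$ can enter $S$ only if every strict ancestor received budget $>1$. In the SAW tree every non-root vertex has at most $\Delta-1$ children, while the root has at most $\Delta$. Hence $N_u \le \Delta N\big(\tfrac{1-\delta}{\Delta-1}\big)^{k-1}\cdot\frac{1}{\Delta-1}$, which is at most $N\,(\tfrac{1-\delta}{\Delta-1})^{k-1}$ up to constants. Requiring the parent's budget to exceed $1$ gives $k \le 1+\log_{\frac{\Delta-1}{1-\delta}}(O(N))$, which is the claimed depth bound. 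If the statement intends a general tree with all degrees up to $\Delta$, the same computation with $\Delta$ in place of $\Delta-1$ gives an even stronger bound.

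\emph{Size.} Define $T(N)$ as the maximum of $|S|$. From the algorithm,
\[
T(N)=1 \text{ for } N\le 1, \qquad T(N)\le \max_{1\le d\le \Delta} d\cdot T\!\left(\tfrac{N(1-\delta)}{d}\right) \text{ for } N>1.
\]
Leaves without children contribute $0$, and we allow any $d\le\Delta$. We guess $T(N)\le C N^{a}$ with $a = 1-x$, where
\[
x=\frac{\log\frac{1}{1-\delta}}{\log\frac{\Delta}{1-\delta}}.
\]
By induction,
\[
d\,T\!\left(\tfrac{N(1-\delta)}{d}\right)\le C N^a (1-\delta)^a d^{1-a}.
\]
Since $1-a = x \ge 0$, this is maximized at $d=\Delta$, where it equals $C N^a\big((1-\delta)^a\Delta^{x}\big)$. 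The choice of $x$ makes $(1-\delta)^{a}\Delta^{x}=1$. To verify, take logs: $(1-x)\log(1-\delta)+x\log\Delta = \log(1-\delta)+x\log\frac{\Delta}{1-\delta}=0$. So the induction closes with equality. The base case handles $N\le 1$, and values of $N$ in $(1,\Delta/(1-\delta)]$ are covered by choosing the constant $C = C(\Delta,\delta)$ large, since there the tree is cut within $O(1)$ levels. This yields $|S|\le T(N)=O(N^{1-x})$, which is within the $\widetilde O$ claim.

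The main care point is exactly this induction: checking that the maximum over $d$ sits at $d=\Delta$ (concavity of $d\mapsto d^{x}$ makes it monotone, so this is fine), and handling the non-integer budgets and the base range. I would also remark that the number of internal (non-boundary) explored vertices obeys the same recurrence plus $1$ per node. It is therefore $O(N^{1-x})$ as well, which is what bounds $\siz_{\mathbb{T}}(S)$ for the later use in \Cref{lem:SAW-tree-with-flower}. Beyond this, no obstacle is expected.
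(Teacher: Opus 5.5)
Your depth argument runs in the wrong direction. Since $N_u=N\prod_{j<k}\frac{1-\delta}{d_j}$, an \emph{upper} bound on $N_u$ needs a \emph{lower} bound on the $d_j$. Using $d_j\le\Delta-1$ gives $\frac{1-\delta}{d_j}\ge\frac{1-\delta}{\Delta-1}$, which only bounds $N_u$ from below. Fewer children means more budget per child and hence a deeper recursion.

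So your inequality $N_u\lesssim N\bigl(\frac{1-\delta}{\Delta-1}\bigr)^{k-1}$ is false, and so is the sharper conclusion $k\le 1+\log_{\frac{\Delta-1}{1-\delta}}(O(N))$. For $\Delta\ge 3$, take $\mathbb{T}$ to be a long path; it is a tree of maximum degree at most $\Delta$, and also the SAW tree of a path graph. Then $N_u=N(1-\delta)^k$, and $S$ lies at depth $\lceil\log_{\frac{1}{1-\delta}}N\rceil$. For large $N$ this exceeds your bound by roughly the factor $\log\frac{\Delta-1}{1-\delta}/\log\frac{1}{1-\delta}$. Your remark that replacing $\Delta-1$ by $\Delta$ gives an even stronger bound is backwards for the same reason.

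The correct argument uses only $d_j\ge 1$. Then $N_u\le N(1-\delta)^k$, and since the parent of any $u\in S$ had budget $>1$, you get $\dep_{\mathbb{T}}(S)< 1+\log_{\frac{1}{1-\delta}}N$. This is $O\bigl(\log_{\frac{\Delta-1}{1-\delta}}N\bigr)$ only because the ratio $\log\frac{\Delta-1}{1-\delta}/\log\frac{1}{1-\delta}$ is a constant for fixed $\Delta,\delta$. That is all the paper means by ``direct''.

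Your size bound is correct, and it is argued differently from the paper. The paper unrolls the recursion into a maximum over sequences $d_1,\dots,d_k$. It uses the survival constraint $\prod_{i<k}d_i<N(1-\delta)^{k-1}$ to get $T(N)\le\Delta N(1-\delta)^{k-1}$. It then lower-bounds the stopping level via $\prod_{i\le k}d_i\le\Delta^k$, i.e.\ $k\ge\log N/\log\frac{\Delta}{1-\delta}$. Your power-law ansatz closes the recursion in one step and shows directly why $1-x$ is the exponent: the ansatz is tight exactly at $d=\Delta$.

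Two small repairs are needed:
\begin{itemize}
\item The base case $T(N)=1\le CN^{1-x}$ fails as $N\to 0$. Run the induction only over $N>(1-\delta)/\Delta$, which are the only arguments reachable from a start $N\ge 1$, and take $C\ge(\Delta/(1-\delta))^{1-x}$.
\item The maximum over $d$ sits at $d=\Delta$ because $x\ge 0$ makes $d\mapsto d^{x}$ nondecreasing, not because of concavity.
\end{itemize}
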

\begin{remark}
One can sharpen the bound on $T(N)$ slightly to $\widetilde{O}\left(N^{1 - (\log \frac{1}{1-\delta})/ \log \frac{\Delta-1}{1-\delta}}\right)$ because all internal vertices of the SAW tree (besides the root vertex) have at most $\Delta - 1$ children, not $\Delta$. However, we establish a slightly weaker bound to keep the proof as clean as possible.
\end{remark}
\begin{proof}[Proof]
    The bound on $\dep_{\mathbb{T}}(S)$ is direct.
    We will focus on bounding $\abs{S} \leq T(N)$.
    According to \Cref{algo:boundary-on-tree}, we choose an upper bound $T$ that satisfies the following recursion:
    \begin{align*}
        T(x) &:= \begin{cases}
           \max_{1 \leq d \leq \Delta} d \cdot T\tp{x \cdot \frac{1-\delta}{d}}, &\text{if } x > 1, \\
           1, &\text{if } x \leq 1.
        \end{cases}
    \end{align*}
    Equivalently, we can also define this upper bound as follows
    \begin{align*}
        T(x) 
        &:= \max\set{\prod_{i=1}^k d_i \,\, \Bigg| \,\, k \geq 1 \land d_1, \cdots, d_k \in [1,\Delta] \land x \frac{(1-\delta)^{k-1}}{\prod_{i=1}^{k-1} d_i} > 1 \geq x \frac{(1-\delta)^{k}}{\prod_{i=1}^{k} d_i} }.     
    \end{align*}
    Intuitively, we enumerate every feasible way of reducing $x$ to be some point $\leq 1$.
    And, we choose $T(x)$ to be the worst case over all these possibilities.
    We define $k^\star = k^\star(x)$ as follows
    \begin{align} \label{eq:def-k-star}
        k^\star(x) := \min\set{k \geq 0 \,\, \Bigg| \,\, \exists d_1, \cdots, d_k \in[1,\Delta], \text{ s.t. } \prod_{i=1}^k d_i \geq (1-\delta)^k x}.
    \end{align}
    Then, we relax $T(x)$ as
    \begin{align}
    \nonumber
        T(x) 
        &\leq \Delta \cdot \max\set{\prod_{i=1}^{k-1} d_i \,\, \Bigg| \,\, k \geq 1 \land d_1, \cdots, d_k \in [1,\Delta] \land x \frac{(1-\delta)^{k-1}}{\prod_{i=1}^{k-1} d_i} > 1 \geq x \frac{(1-\delta)^{k}}{\prod_{i=1}^{k} d_i}} \\
        \label{eq:upper-bound-Tx}
        &\leq \Delta \cdot x (1-\delta)^{k^\star - 1},
    \end{align}
    where the first inequality holds by $d_k \leq \Delta$ and the last inequality follows from the constraint $\prod_{i=1}^{k-1} d_i \leq x(1-\delta)^{k-1}$.
    By taking $d_1 = \cdots = d_k = \Delta$ in \eqref{eq:def-k-star}, we have the following bound on $k^\star$:
    \begin{align}\label{eq:upper-bound-k-star}
        x(1-\delta)^{k^\star} \leq \Delta^{k^\star} \quad \Longleftrightarrow \quad k^\star \leq \ftp{\log(x) / \log \frac{\Delta}{1-\delta}} \leq \log(x) / \log \frac{\Delta}{1-\delta} + 1.
    \end{align}
    Finally, let $x = N$.
    Combining \eqref{eq:upper-bound-k-star} and \eqref{eq:upper-bound-Tx} finishes the proof 
    \begin{align*}
        T(x) &\leq \Delta \cdot x^{1 - \log(\frac{1}{1-\delta}) / \log \frac{\Delta}{1-\delta}}. \qedhere
    \end{align*} 
\end{proof} 

Now, we are able to justify the running time of our algorithm. 
\begin{proof}[Proof of runtime of \Cref{thm:2-spin-reduction-inference}]
    Recall our algorithm is described as in \ref{step:1} to \ref{step:3}.
    We set $N = \tilde{O}(n/\epsilon^2)$.
    Note that \Cref{algo:boundary-on-tree} can be implemented by using the data structure $\mathcal{D}$ provided by \Cref{lem:SAW-tree}.
    Then by \Cref{prop:size-of-boundary}, the overall runtime of \Cref{algo:boundary-on-tree} will be $\widetilde{O}(T(N))$, where $\widetilde{O}$ hides $\mathrm{polylog}(n)$ factors.
    Again by \Cref{prop:size-of-boundary} and \Cref{lem:SAW-tree-with-flower}, the data structure $\mathcal{O}$ can also be initialized within time $\widetilde{O}(T(N))$.
    Similarly, by \Cref{prop:size-of-boundary} and \Cref{lem:SAW-tree-with-flower}, the \ref{step:2} and \ref{step:3} of will only costs $\widetilde{O}(T(N))$ time.
    Hence, overall, the algorithm runs in time $\widetilde{O}(T(N)) = \widetilde{O}((n/\epsilon^2)^{1 - (\log \frac{1}{1-\delta})/ \log \frac{\Delta}{1-\delta}})$.
\end{proof}

In the rest of this section, we show that for an appropriate choice of $N = O(n/\epsilon^2)$, the random real number $P$ returned by the algorithm (\ref{step:1} to \ref{step:3}) satisfies $\Var{M} \leq \epsilon^2/n$.
The following notion of correlation decay for two-spin systems plays a major role in our analysis.

\newcommand{\infsq}{\mathrm{inf}^2}
\begin{definition}[total squared influence]\label{def:total-squared-influence}
  Let $V$ be the ground set and $\mu$ be a distribution on $\set{0,1}^V$.
  We say $\mu$ has \emph{total-squared-influence} $C \geq 0$ from $v \in V$ to $S\subseteq V$, if for any $\Lambda \subseteq V$ and $\sigma \in \Omega(\mu_\Lambda)$ such that $\mu^\sigma_v(1) \in (0,1)$, it holds that:
  \begin{align*}
    \infsq_\mu(v \to S) \coloneqq \sum_{u \in S} \left(\mu^{\sigma, v\gets 1}_u(1) - \mu^{\sigma,v\gets 0}_u(1)\right)^2 \leq C.
  \end{align*}
\end{definition}

\begin{definition}[marginal bound] \label{def:marginal-lower-bound}
    Let $b > 0$ and $V$ be the ground set. 
    We say a distribution $\mu$ on $\set{0,1}^V$ is $b$-marginally bounded if for any $v \in V$, $c \in \set{0,1}$, $\Lambda \subseteq V$, and pinning $\tau \in \Omega(\mu_\Lambda)$, we have $\mu^\tau_v(c) > 0$ implies $\mu^\tau_v(c) \geq b$.
\end{definition}

\begin{lemma} \label{lem:var-bound-squared-influence}
    Let $\mu$ be a distribution on $\set{0,1}^V$ for a finite ground set $V$.
    Let $\Lambda \subseteq V$, and $\tau \in \Omega(\mu_\Lambda)$ be a pinning.
    Let $v \in V$, $S\subseteq V$ such that for all $\sigma \in \Omega(\mu^\tau_{S})$, we have $\mu^{\tau\cup \sigma}_v(1) \in (0,1)$.
    Let $b > 0, C \geq 0$.
    If $\mu$ is $b$-marginally bounded and has total-squared-influence $C$ from $v$ to $S$, then the function $f(X) \coloneqq \mu^{\tau \cup X(S \setminus \Lambda)}_v(1)$ satisfies
    \begin{align*}
        \E[X \sim \mu^{\tau}]{f(X)} &= \mu_{v}^{\tau}(1) \\
        \Var[X \sim \mu^{\tau}]{f(X)} &\leq \log\tp{\abs{S\setminus\Lambda}} \cdot (2b)^{-2} \cdot C.
    \end{align*}    
\end{lemma}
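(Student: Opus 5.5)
We first dispose of the expectation identity. By the law of total probability over $X_{S\setminus\Lambda}\sim\mu^\tau_{S\setminus\Lambda}$ we get $\E{f(X)}=\mu^\tau_v(1)$ immediately, using the hypothesis that $\mu^{\tau\cup\sigma}_v(1)\in(0,1)$ for every feasible $\sigma$ so that $f$ is well defined. For the variance we write $S\setminus\Lambda=\set{u_1,\dots,u_m}$ and expose the spins one at a time. This gives the Doob martingale $M_i\coloneqq\E{f(X)\mid X_{u_1},\dots,X_{u_i}}=\mu^{\tau\cup X(u_{\le i})}_v(1)$, and hence
\[
\Var{f(X)}=\sum_{i=1}^m\E{(M_i-M_{i-1})^2}.
\]

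Next we bound each increment by an influence. Fix the pinning $\rho=\tau\cup X(u_{<i})$. The increment $M_i-M_{i-1}$ equals $\mu^{\rho}_u(0)\cdot D$ or $-\mu^\rho_u(1)\cdot D$, where $u=u_i$ and $D=\mu^{\rho,u\gets1}_v(1)-\mu^{\rho,u\gets0}_v(1)$. Hence its conditional second moment is $\mu^\rho_u(0)\mu^\rho_u(1)D^2$. Bayes' rule swaps the direction of influence:
\[
D=\bigl(\mu^{\rho,v\gets1}_u(1)-\mu^{\rho,v\gets0}_u(1)\bigr)\cdot\frac{\mu^\rho_v(0)\mu^\rho_v(1)}{\mu^\rho_u(0)\mu^\rho_u(1)} .
\]
By $b$-marginal boundedness the denominator is at least $b(1-b)\ge b/2$ whenever $u$ is not frozen under $\rho$; frozen coordinates contribute zero. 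The numerator is at most $1/4$. So each increment is at most a constant multiple of $b^{-1}$ (in the worst case $(2b)^{-2}$) times $\bigl(\mu^{\rho,v\gets1}_u(1)-\mu^{\rho,v\gets0}_u(1)\bigr)^2$. If all these squared influences were taken under one common pinning, \Cref{def:total-squared-influence} would give $\Var{f}\le(2b)^{-2}C$ with no logarithm.

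The main obstacle is that the pinning $\rho$ changes with $i$, so the sum $\sum_i$ is not a single total-squared-influence sum. I would handle this with a dyadic decomposition, which is where the factor $\log m$ should come from. Split $u_1,\dots,u_m$ into a binary tree of blocks of depth $\lceil\log m\rceil$. Then write $\Var{f}$ as the sum over levels of the expected variance of $\E{f\mid\text{current block and all earlier blocks}}$ given the earlier blocks. At each level, condition on the earlier blocks; this gives one common pinning $\rho$ for the whole block $B$. The key claim is then
\[
\Var[X_B\sim\mu^\rho_B]{\mu^{\rho\cup X_B}_v(1)}\le(2b)^{-2}\sum_{u\in B}\bigl(\mu^{\rho,v\gets1}_u(1)-\mu^{\rho,v\gets0}_u(1)\bigr)^2 .
\]
To prove it I would use the identity $\Var{g(X_B)}=\mathrm{Cov}(\sigma_v,g(X_B))=p(1-p)\bigl(\E[\mu^{\rho,v\gets1}]{g}-\E[\mu^{\rho,v\gets0}]{g}\bigr)$, where $p=\mu^\rho_v(1)$. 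I would then bound the difference of the two expectations by Cauchy--Schwarz against the influence vector on $B$, using the bounded marginals to control normalizations.

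Finally, for a fixed level the blocks are disjoint and each is treated under its own pinning. The sum over blocks at that level is bounded by $(2b)^{-2}C$ by applying \Cref{def:total-squared-influence} blockwise. Here the hypothesis is required for all $\Lambda$ and $\sigma$, so each block's pinning is admissible. Summing over the $\log m$ levels gives the claimed bound $\log(\abs{S\setminus\Lambda})\cdot(2b)^{-2}\cdot C$.

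If the covariance/Cauchy--Schwarz step does not close as stated, I would fall back on bounding each block's variance by the one-at-a-time martingale inside the block. I would then argue that the pinning drift within a block is absorbed by passing to the next, finer level of the dyadic tree.
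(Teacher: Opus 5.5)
Your expectation identity, the fixed-order Doob decomposition, and the Bayes swap that bounds each increment by $(2b)^{-2}$ times a squared influence are all correct. You also identify the real obstacle correctly: the pinning $\tau\cup X(u_{<i})$ changes with $i$, so $\sum_i$ is not a single total-squared-influence sum. The dyadic step does not remove this obstacle, for two reasons.

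First, your key block inequality is false for general $\mu$, and the lemma concerns arbitrary $\mu$ on $\{0,1\}^V$. Take $V=\{v,u_1,u_2\}$, $B=S=\{u_1,u_2\}$, $\Lambda=\emptyset$. Let $\sigma_{u_1},\sigma_{u_2}$ be i.i.d.\ uniform, and let $\mathrm{Pr}[\sigma_v=1\mid\sigma_{u_1},\sigma_{u_2}]$ equal $1/3$ if $\sigma_{u_1}=\sigma_{u_2}$ and $2/3$ otherwise.
\begin{itemize}
\item Under the empty pinning, $\sigma_v$ is independent of each $\sigma_{u_i}$ separately. So both influences from $v$ vanish and your right-hand side is $0$.
\item Yet $g=\mu_v^{X_B}(1)\in\{1/3,2/3\}$ with equal probabilities, so $\mathrm{Var}(g)=1/36$.
\item This $\mu$ is $\frac13$-marginally bounded and has total-squared-influence $\frac19$ from $v$ to $S$, so all hypotheses of the lemma hold.
\end{itemize}
The covariance/Cauchy--Schwarz route cannot work, because $\mathbb{E}^{v\gets1}g-\mathbb{E}^{v\gets0}g$ depends on the whole joint law of $X_B$, not only on its single-site marginals. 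There was also a warning sign: if the block claim held, applying it once to the top block with pinning $\tau$ would give $\mathrm{Var}(f)\le(2b)^{-2}C$ with no logarithm at all.

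Second, even granting the block claim, the per-level bound fails. Within one level, block $B_j$ is conditioned on $\tau\cup X(B_{<j})$, so different blocks carry different pinnings. Applying the definition blockwise then gives $C$ per block, i.e.\ up to $2^\ell C$ at level $\ell$, not $C$. This is exactly the obstacle you started with, moved up a scale. Moreover, the per-level quantity you describe is the Doob decomposition for revealing that level's blocks in order, so it already equals $\mathrm{Var}(f)$. Summing it over levels is therefore not a decomposition of $\mathrm{Var}(f)$. The fallback paragraph is too vague to repair either problem.

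The missing idea is to randomize the reveal order. The paper reveals $S\setminus\Lambda$ in a uniformly random order: given the current pinning $X_t$ and the unrevealed set $S_t$, it picks $u\in S_t$ uniformly and draws $X_u\sim\mu_u^{X_t}$. For every choice of $u$, $\mathbb{E}[F\mid X_t,X_u]$ has conditional mean $\mathbb{E}[F\mid X_t]$. Hence the step's conditional variance equals $\frac{1}{|S_t|}\sum_{u\in S_t}\mathrm{Var}(\mathbb{E}[F\mid X_t,X_u]\mid X_t)$. Each summand is bounded by your own increment estimate, but now all of them are under the common pinning $X_t$. A single application of total-squared-influence then gives at most $C/(4b^2|S_t|)$ per step. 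Summing $1/(m-t)$ over $t$ gives the harmonic number $H_m$, which is where the logarithmic factor comes from.
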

The proof of \Cref{lem:var-bound-squared-influence} is given in \Cref{sec:var-bound-squared-influence}. We remark that by Chebyshev's inequality, \Cref{lem:var-bound-squared-influence} gives a criterion for the concentration of the conditional marginal $f(X) = \mu_{v}^{\tau \cup X(S\setminus \Lambda)}$, which may be of independent interest.

\begin{lemma} \label{lem:total-squared-influence-2-spin}
    Let $\delta > 0$ and $\Delta \geq 1$ be a constant integer.
    Fix $\beta \geq 0, \gamma > 0, \lambda > 0$, $\beta\gamma < 1$ such that $(\beta, \gamma, \lambda)$ are up-to-$\Delta$ unique with gap $\delta$.
    Let $G = (V,E)$ be a graph with max degree $\Delta$.
    Fix $v \in V$. Let $\mathbb{T} = \mathbb{T}_{\mathrm{SAW}}(v, G)$ and $S = \mathrm{boundary}_{\mathbb{T}, \delta}(v, N)$.
    Let $G_{\mathrm{SF}} = G_{\mathrm{SF}}(G, v, S) = (V_{\mathrm{SF}}, E_{\mathrm{SF}})$ be the SAW tree with flower , and $\nu = \mu_{G_{\mathrm{SF}}}$ be the Gibbs distribution.
    Let $\Lambda \subseteq V_{\mathrm{SF}}$ and $\tau \in \Omega(\nu_\Lambda)$ such that $\nu^\tau_v(1) \in (0,1)$.
    Then, for $N \geq 1$,
    \begin{align*}
        \infsq_{\nu^\tau}(v \to S) \leq C_{\Delta,\delta} \cdot N^{-1},
    \end{align*}
    where $C_{\Delta,\delta}$ is a constant that only depends on $\Delta$ and $\delta$.
\end{lemma}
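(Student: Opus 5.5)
We will bound the total squared influence from the root $v$ to the boundary $S$ on the SAW tree with flower by going through the tree recursion in log-ratio coordinates. This is the potential-function method of Li--Lu--Yin~\cite{li2013correlation}.

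The key structural point is that in $G_{\mathrm{SF}}$, every path from $v$ to a vertex $u\in S$ stays inside the tree part $\mathbb{T}_{\mathrm{SAW}}$ until it reaches $u$. So $v$ and the flowers $G_u$ ($u\in S$) interact only through the tree edges above $S$. After conditioning on $\tau$, pinned tree vertices simply cut their subtrees. The influence of $v$ on $u\in S$ then factorizes along the unique tree path $v=w_0,w_1,\dots,w_k=u$: the change of the marginal at $u$ produced by flipping $v$ is a telescoping product of one-step influences. Each step is either an edge influence $\frac{\partial}{\partial x}$ of the two-spin ratio map $h(x)=\frac{\beta x+1}{x+\gamma}$, or the response of a vertex's marginal ratio to one child's ratio. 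Concretely, I would write $\mu^{\sigma,v\gets1}_u(1)-\mu^{\sigma,v\gets0}_u(1)$ as the influence from $u$ back to $v$ via the symmetric relation $\mathrm{Inf}(v\to u)\,\mu_v(1)\mu_v(0)=\mathrm{Inf}(u\to v)\,\mu_u(1)\mu_u(0)$. Marginal boundedness (constant $b$) then makes these comparable up to constants depending only on $\Delta,\beta,\gamma,\lambda$. The quantity $\mathrm{Inf}(u\to v)$ is exactly the derivative of the root marginal with respect to the ratio at $u$ along the tree recursion $R_w=\lambda\prod_{c}h(R_c)$. This holds because the flower $G_u$ only determines the value $R_u$ fed into the recursion.

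Next I introduce the potential $\Phi(x)$ from~\cite{li2013correlation}, with $\varphi=\Phi'$ chosen so that, in $y=\Phi(x)$ coordinates, the recursion $F_d(y_1,\dots,y_d)$ satisfies the contraction bound
\begin{align*}
\sum_{i=1}^d \Bigl|\frac{\partial F_d}{\partial y_i}\Bigr|^{2} \le \frac{1-\delta}{d}\cdot\Bigl(\sum_{i}\Bigl|\frac{\partial F_d}{\partial y_i}\Bigr|\Bigr)^2\cdot\frac{1}{\cdots}
\end{align*}
More usefully, I want a bound of the form $\sum_i \alpha_i^2/c_i\le 1$ with weights adapted to Algorithm~\ref{algo:boundary-on-tree}. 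Precisely, I aim to prove the following per-vertex inequality: for a vertex with $d\le\Delta-1$ children (or $\Delta$ at the root), the partial derivatives $\alpha_i$ in potential coordinates satisfy $\sum_i\alpha_i^2\le(1-\delta)/d$ uniformly. By Cauchy--Schwarz this follows from the $\ell_1$-type bound $\sum_i|\alpha_i|\le 1-\delta$ appearing in the $d$-uniqueness analysis, provided one optimizes over the symmetric point $y_1=\dots=y_d$. The gap between $\ell_1$ and $\ell_2$ is what yields the extra $1/d$. I would verify it by Lagrange multipliers, mirroring the proof that the symmetric point maximizes the contraction, using up-to-$\Delta$ uniqueness with gap $\delta$. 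For non-monotone cases ($\gamma>1$) this is exactly where the specific potential from~\cite{li2013correlation} is needed.

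Given the per-vertex bound, define for each tree vertex $w$ its budget $N_w$, as passed in Algorithm~\ref{algo:boundary-on-tree}, so that $N_{\text{child}}=N_w(1-\delta)/d_w$. Induction from the root shows that the squared path-product of derivatives to $w$ is at most $\prod (1-\delta)/d_{w_j}=N_w/N$. More precisely, summed over children, the total squared influence mass is nonincreasing level by level. Each $u\in S$ has $N_u\le1$ by construction. Summing the squared influences over the antichain $S$ therefore gives $\sum_{u\in S}\prod_j\alpha_j^2\le 1/N$ times the ratio at the parent step. Strictly, the invariant I propagate is $\sum_{u\in S}(\text{path product})^2\cdot N_u\le 1$ (times the root factor), and since $N_u>(1-\delta)/\Delta$ for leaves of the recursion, this gives $\sum_{u \in S}(\text{path product})^2\le \Delta/((1-\delta)N)$. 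Converting potential-coordinate derivatives back to marginal differences costs a bounded factor: $\varphi$ is bounded above and below on the compact range of ratios, which is guaranteed by marginal boundedness and bounded degree. Squaring gives the constant $C_{\Delta,\delta}$.

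The main obstacle will be the per-vertex $\ell_2$ contraction $\sum_i\alpha_i^2\le(1-\delta)/d$ holding uniformly over all child values, not just at the fixed point. The rest is bookkeeping.
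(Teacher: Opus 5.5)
Your architecture is the paper's. You factor the squared influence along tree paths (\Cref{lem:product}), pass to potential coordinates, prove a per-vertex bound $\sum_i\alpha_i^2\le(1-\delta)/d$, and let the budgets $N_{w_i}=N_u(1-\delta)/d_u$ of \Cref{algo:boundary-on-tree} cancel the factor $(1-\delta)/d$. The paper's invariant $h(R_u)\,\mathrm{inf}^2_{\mathcal{T}_u}(u\to S)\le N_u^{-1}\max_{w\in S\cap\mathcal{T}_u}h(R_w)$ is essentially this, written back in ratio coordinates. But that per-vertex inequality is the heart of the lemma, and the justification you give does not work. Cauchy--Schwarz reads $(\sum_i\abs{\alpha_i})^2\le d\sum_i\alpha_i^2$: it bounds the $\ell_1$ mass by the $\ell_2$ mass, not conversely. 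So from $\sum_i\abs{\alpha_i}\le c$ alone you only get $\sum_i\alpha_i^2\le c^2$, since the $\ell_1$ mass may sit on one coordinate. The factor $1/d$ appears only once you know that $\sum_i\alpha_i^2$ itself is maximized, for fixed $F_d(\mathbf{x})$, at the symmetric point. That $\ell_2$ symmetrization is the stronger of the two statements (Cauchy--Schwarz derives the $\ell_1$ version from it, not the other way round), so the correlation-decay analysis does not supply it, and ``Lagrange multipliers'' is not an argument. You flag this as the main obstacle, but the proof stops exactly there.

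The missing step is a change of variables plus Jensen. Take $\Phi'(x)=1/\sqrt{x(\beta x+1)(x+\gamma)}$ and the paper's $h(x)=\frac{(1-\beta\gamma)x}{(\beta x+1)(x+\gamma)}$ (not the map you called $h$). Then $\abs{\partial F_d/\partial x_i}=\frac{F_d}{x_i}h(x_i)$, hence $\alpha_i=\sqrt{h(F_d(\mathbf{x}))\,h(x_i)}$, and the target is $h(F_d(\mathbf{x}))\sum_i h(x_i)\le(1-\delta)/d$. Put $y_i=\log\frac{\beta x_i+1}{x_i+\gamma}$. Then $F_d(\mathbf{x})=\lambda\mathrm{e}^{\sum_i y_i}$, while $(1-\beta\gamma)h(x_i)=g(y_i)$ with $g(y)=(\mathrm{e}^{-y}-\gamma)(\mathrm{e}^{y}-\beta)$, and $g''(y)=-\mathrm{e}^{-y}(\beta+\gamma\mathrm{e}^{2y})<0$. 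Jensen yields $\bar{x}$ with $F_d(\mathbf{x})=F_d(\mathbf{1}\bar{x})$ and $\sum_i h(x_i)\le d\,h(\bar{x})$ (\Cref{lem:symmetrization}). Then $d^2h(F_d(\mathbf{1}\bar{x}))h(\bar{x})=\alpha_d(\bar{x})^2\le 1-\delta$ for every $\bar{x}\ge 0$ and $d<\Delta$ (\Cref{lem:univariate-decay}) gives the bound uniformly in the child values.

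Three smaller repairs are also needed.
(i) The root may have $\Delta$ children, and up-to-$\Delta$ uniqueness covers only $d<\Delta$, so the per-vertex bound can fail there. Treat the root by one crude step using \Cref{lem:boundedness}, paying $\Delta^2/(1-\delta)$ in the constant.
(ii) Per-path bounds together with level-by-level monotonicity of the total mass do not give $O(1/N)$. What propagates is $\sum_{u\in S\cap\mathcal{T}_w}P_{w\to u}^2\le N_w^{-1}$, with $P_{w\to u}$ the product of the $\alpha$'s along the path. The step uses $\sum_i\alpha_i^2N_{w_i}^{-1}\le N_w^{-1}$, and the base case uses $N_u\le 1$ on $S$.
(iii) Use the exact forward formula $\mathrm{inf}^2(\text{parent}\to\text{child})=h(R_{\text{child}})^2$ of \Cref{lem:formula-influence}, rather than reading the reverse influence as a derivative; it is a finite difference between $R_u=0$ and $R_u=\infty$. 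Then $h(R_w)\,\mathrm{inf}^2_{\mathcal{T}_w}(w\to S)=\sum_{u}P_{w\to u}^2\,h(R_u)$ holds exactly. Together with (i), you then need no bounds on $\Phi'$ and no lower bound on $h(R_v)$.
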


The proof of \Cref{lem:total-squared-influence-2-spin} is provided in \Cref{sec:total-squared-influence-2-spin-on-tree}.
Now, we are ready to bound $\Var{M}$ and finish the proof of \Cref{thm:2-spin-reduction-inference}.
\begin{proof}[Proof of variance bound in \Cref{thm:2-spin-reduction-inference}]
    Let $M$ with $\E{M} = \mu^\tau_v(1)$ be the random number returned by the algorithm (\ref{step:1} to \ref{step:3}).
    Without loss of generality, we assume that $\mu^\tau_v \in (0,1)$, since, otherwise, we have $\Var{M} = 0$.
    Let $\mathbb{T} = \mathbb{T}_{\mathrm{SAW}}(v, G)$ and $S = \mathrm{boundary}_{\mathbb{T}, \delta}(v, N)$.
    Let $G_{\mathrm{SF}} = G_{\mathrm{SF}}(G, v, S) = (V_{\mathrm{SF}}, E_{\mathrm{SF}})$ be the SAW tree with flower , and $\nu = \mu_{G_{\mathrm{SF}}}$ be the Gibbs distribution.
    Let $\Lambda_{\mathrm{SF}} \subseteq V_{\mathrm{SF}}$ and $\tau_{\mathrm{SF}}$ be a pinning on $\Lambda_{\mathrm{SF}}$ that $\mu^\tau_{G, v} = \mu^{\tau_{\mathrm{SF}}}_{G_{\mathrm{SF}},v} = \nu^{\tau_{\mathrm{SF}}}_v$.
    We assume that $N$ is larger than some constant to make sure that $\mathrm{dist}(v, S) \geq 2$ so that for all $\sigma \in \Omega(\nu^{\tau_{\mathrm{SF}}}_{S})$, we have $\nu^{\tau_{\mathrm{SF}} \cup \sigma}_v(1) \in (0,1)$.
    Then, by \Cref{lem:var-bound-squared-influence} and \Cref{lem:total-squared-influence-2-spin}, we have
    \begin{align*}
        \Var{M} / \E{M}^2 
        &\leq  b^{-4} \cdot C_{\Delta,\delta} \cdot N^{-1} \cdot \log \abs{S} \\
        &\leq b^{-4} \cdot C_{\Delta,\delta} \cdot N^{-1} \cdot \log n,
    \end{align*}
    where by \Cref{prop:size-of-boundary}, the last inequality holds when $n$ is larger than some universal constant (when $n$ is constant, $\mu^\tau_v(1)$ can be computed deterministically via brute force).
    We achieve the desired variance bound by taking $N = b^{-4} \cdot C_{\Delta,\delta} \cdot n \log n / \epsilon^2$.
    We finish the proof by noticing that $b = O(1)$ for constant $\Delta, \lambda, \beta, \gamma$.
\end{proof}

\subsection{Concentration via bounded total-squared-influence} \label{sec:var-bound-squared-influence}
In this section, we prove \Cref{lem:var-bound-squared-influence}.
Our proof approach is inspired by the local-to-global analysis in~\cite{chen2025faster}.
We consider the following denoising process, which is a modification of the coordinate-by-coordinate denoising process.

\begin{definition} \label{def:denoising}
  Recall sets $S, \Lambda \subseteq V$, distribution $\mu$ on $\set{0,1}^V$, pinning $\tau \in \set{0,1}^\Lambda$, and $X\sim \mu^\tau$ from the statement of \Cref{lem:var-bound-squared-influence}.
  Let $m = \abs{S \setminus \Lambda}$.
  We use the Markov chain $(X_t, \Lambda_t)_{t=0}^m$ to generate $X \sim \mu^\tau$, which is defined as follows:
  \begin{enumerate}
  \item $X_0 = \tau$, $\Lambda_0 = \Lambda$;
  \item given $X_t \in \set{0,1}^{\Lambda_t}$, get $X_{t+1}, \Lambda_{t+1}$ by
    \begin{enumerate}
    \item draw $u \in S \setminus \Lambda_t$ u.a.r;
    \item draw $c \sim \mu^{X_t}_{u}$;
    \item let $X_{t+1} \gets X_t \cup \set{u \gets c}$;
    \item let $\Lambda_{t+1} \gets \Lambda_t \cup \set{u}$.
    \end{enumerate}
  \end{enumerate}
  Finally, sample $X \sim \mu^{X_m}$ which implies $X \sim \mu^\tau$.
\end{definition}

\begin{proof}[Proof of \Cref{lem:var-bound-squared-influence}]
  For each $t$, define $\oVar(t) \coloneqq \E{\Var{f(X) \mid X_t}}$, where $X$ is drawn according to \Cref{def:denoising}. Also, recall that we define $f(X) \coloneqq \mu^{\tau \cup X(S \setminus \Lambda)}_v(1)$.
  According to this definition, we have $\oVar(0) = \Var{f(X)}$; and $\oVar(m) = 0$ because $f$ only depends on the $m$ coordinates of $X$ in $S \setminus \Lambda_0$.
  Moreover, letting $F \coloneqq f(X)$, we have by telescoping that
  \begin{align*}
    \Var{F}
    &= \sum_{t=0}^{m-1} \tp{\oVar(t) - \oVar(t+1)} \\
    &= \sum_{t=0}^{m-1} \tp{ \E{\Var{F \mid X_t}} - \E{\Var{F \mid X_{t+1}}} } \\
    &= \sum_{t=0}^{m-1} \E{\Var{F \mid X_t} - \E{\Var{F \mid X_{t+1}} \mid X_t}} \\
    &= \sum_{t=0}^{m-1} \E{\Var{\E{F \mid X_{t+1}} \mid X_t}},
  \end{align*}
  where the last equation holds by the law of total variance.
  For convenience, let $S_t \coloneqq S \setminus \Lambda_t$.
  Fix $0\leq t \leq m-1$, and $X_t$, then it holds that
  \begin{align*}
    \Var{\E{F \mid X_{t+1}} \mid X_t} 
    &= \frac{1}{\abs{S_t}} \sum_{u \in S_t} \Var{\E{F \mid X_t \land X_u} \mid X_t} \\
    &\leq \frac{1}{\abs{S_t}} \sum_{u \in S_t} \tp{\E{\mu^{\tau \cup X(S_0) }_v(1) \,\, \Big| \,\, X_t, X_u \gets 1} - \E{\mu^{\tau \cup X(S_0)}_v(1) \,\, \Big| \,\, X_t, X_u \gets 0}}^2 \\
    \tag{law of total expectation} &= \frac{1}{\abs{S_t}} \sum_{u \in S_t} \tp{\mu^{X_t, u\gets 1}_v(1) - \mu^{X_t, u\gets 0}_v(1)}^2 \\
    &\leq \frac{1}{4b^{2}\abs{S_t}} \sum_{u \in S_t} \tp{\mu^{X_t, v\gets 1}_u(1) - \mu^{X_t, v\gets 0}_u(1)}^2 ,
  \end{align*}
  where the last inequality holds because of the following identity using Bayes' Rule
  \begin{align*}
      \mu_{u}^{v \gets 1}(1) - \mu_{u}^{v \gets 0}(1) = \frac{\mu_{u}(1)\mu_{u}(0)}{\mu_{v}(1)\mu_{v}(0)} \cdot (\mu_{v}^{u \gets 1}(1) - \mu_{v}^{u \gets 0}(1)),
  \end{align*}
  where the numerator $\mu_{u}(1)\mu_{u}(0)$ is at most $\sup_{p} p(1-p) \leq 1/4$ while the denominator $\mu_{v}(1)\mu_{v}(0)$ is at least $b(1-b) \geq b/2$.
  Then, by the total-squared-influence, we have
  \begin{align*}
    \Var{\E{F \mid X_{t+1}} \mid X_t} &\leq \frac{C}{4b^{2}\abs{S_t}} . 
  \end{align*}
  Summing over $t$, we obtain
  \begin{align*}
    \Var{F} &\leq \frac{C}{4b^{2}} \sum_{t=0}^{m-1} \frac{1}{m-t} \leq \frac{C \log m}{4b^{2}}. \qedhere
  \end{align*}
\end{proof}

\subsection{Total-squared-influence for two-spin systems on tree} \label{sec:total-squared-influence-2-spin-on-tree}

In this section, we prove \Cref{lem:total-squared-influence-2-spin}.
We use the following standard facts concerning two-spin systems.

\begin{lemma}[\text{\cite[Lemma B.2]{anari2020spectral}}]\label{lem:product}
  Let $G = (V, E)$ be a graph and $\mu = \mu_G$ be the Gibbs distribution of a two-spin system on $G$.
  Let $a,b,c$ be three vertices in $G$. If the removal of $b$ disconnects $a$ and $c$, then
  \[\infsq_\mu(a \to c) = \infsq_\mu(a\to b) \cdot \infsq_\mu(b\to c).\]
\end{lemma}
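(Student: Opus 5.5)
The plan is to prove the unsquared identity first: with $I(a\to c)\coloneqq \mu^{a\gets1}_c(1)-\mu^{a\gets0}_c(1)$ (under whatever pinning $\sigma$ is fixed, as in \Cref{def:total-squared-influence}), show $I(a\to c)=I(a\to b)\cdot I(b\to c)$ and then square both sides. The only structural input is the Markov property of the Gibbs distribution of a two-spin system. If $b$ separates $a$ from $c$ in $G$, then conditioned on $\sigma_b$ the spins $\sigma_a$ and $\sigma_c$ are independent. The reason is that the weight $\mathrm{wt}_G$ factorizes over edges and vertices, and every edge lies on the $a$-side or the $c$-side of the cut vertex $b$. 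This remains true after adding any pinning $\sigma$ on $\Lambda$, because pinning only multiplies the weight by indicator factors on single vertices, and these also split across the two sides. Consequently $\mu^{\sigma,a\gets x,b\gets y}_c=\mu^{\sigma,b\gets y}_c$ whenever the left-hand side is defined.

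The main step is to decompose over the spin of $b$ using the law of total probability. For $x\in\{0,1\}$,
\begin{align*}
\mu^{a\gets x}_c(1)&=\sum_{y\in\{0,1\}}\mu^{a\gets x}_b(y)\,\mu^{a\gets x,b\gets y}_c(1)=\sum_{y}\mu^{a\gets x}_b(y)\,\mu^{b\gets y}_c(1)\\
&=\mu^{b\gets0}_c(1)+\mu^{a\gets x}_b(1)\bigl(\mu^{b\gets1}_c(1)-\mu^{b\gets0}_c(1)\bigr).
\end{align*}
This expresses $\mu^{a\gets x}_c(1)$ as an affine function of $\mu^{a\gets x}_b(1)$, and the coefficients do not depend on $x$. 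Subtracting the case $x=0$ from the case $x=1$ cancels the constant term $\mu^{b\gets0}_c(1)$ and leaves $I(a\to c)=I(a\to b)\,I(b\to c)$. Squaring then gives $\infsq_\mu(a\to c)=\infsq_\mu(a\to b)\cdot\infsq_\mu(b\to c)$.

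The main obstacle is degenerate pinnings. If one of the values $b\gets y$ is infeasible under $\sigma$, then $\mu^{b\gets y}_c$ is undefined and the decomposition above needs care. To handle this, I will split on whether $\mu^{\sigma}_b$ is degenerate. If $b$ is frozen to some value $y_0$ under $\sigma$, then, because the support of $\mu^{\sigma,a\gets x}$ is contained in that of $\mu^\sigma$, $b$ is also frozen to $y_0$ under $\sigma,a\gets x$ for each feasible $x$. It follows that $I(a\to b)=0$. At the same time, $\mu^{a\gets x}_c=\mu^{b\gets y_0}_c$ for both values of $x$ by the Markov property, so $I(a\to c)=0$ as well. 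The identity then holds as $0=0$, under the convention that $I(b\to c)$ is either $0$ or simply irrelevant in this case. In the remaining case both values of $b$ are feasible, every conditional in the display is well defined, and the computation above goes through unchanged. Feasibility of $a\gets x$ for both $x$ is already part of the hypothesis $\mu^\sigma_a(1)\in(0,1)$ in \Cref{def:total-squared-influence}.
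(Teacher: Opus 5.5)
Your argument is correct and is the standard proof behind the cited \cite[Lemma B.2]{anari2020spectral}; the paper itself gives no proof. The Markov property at the cut vertex $b$ survives any pinning, and the law of total probability over $\sigma_b$ then makes $\mu^{a\gets x}_c(1)$ affine in $\mu^{a\gets x}_b(1)$ with $x$-independent coefficients, so the unsquared influences multiply and squaring finishes. One small imprecision: even when both values of $b$ are feasible under $\sigma$, the middle conditional $\mu^{a\gets x,b\gets y}_c$ can be undefined (e.g.\ hardcore with $a\sim b$ and $x=y=1$), but such terms carry weight $\mu^{a\gets x}_b(y)=0$ and your final expression $\sum_y \mu^{a\gets x}_b(y)\,\mu^{b\gets y}_c(1)$ is well defined, so nothing breaks.
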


For every integer $d \geq 1$, recall the multivariate tree recursion $F_d:\mathbb{R}^d \to \mathbb{R}$ is defined as
\begin{align*}
  F_d(\*x) \coloneqq \lambda \prod_{i=1}^d\frac{\beta x_i + 1}{x_i + \gamma}.
\end{align*}

\begin{lemma}[\text{\cite[Lemma 16]{chen2020rapid}}] \label{lem:formula-influence}
    Let $G = (V, E)$ be a graph and $\mu = \mu_G$ be the Gibbs distribution of a two-spin system on $G$.
    Let $v \in V$ be a vertex of degree $d$, $\Lambda \subseteq V \setminus\set{v}$ and $\tau \in \Omega(\mu_\Lambda)$.
    If the removal of $v$ disconnects every vertices in $N(v)$, we have
    \begin{itemize}
        \item $R^\tau_{G, v} = F_d((\*R^\tau_{G-v,w})_{w\in N(v)})$, where $R^\tau_{G, v} \coloneqq \frac{\mu^\tau_{G,v}(1)}{\mu^\tau_{G,v}(0)}$ is the marginal ratio at $v$;
        \item for $w \in N(v)$, $\infsq_{\mu^\tau_G}(v \to w) = h(R^\tau_{G-v, w})^2$, where the function $h$ is defined by 
        \[h(x) \coloneqq \frac{(1-\beta\gamma) x}{(\beta x + 1)(x + \gamma)}.\]
    \end{itemize}
\end{lemma}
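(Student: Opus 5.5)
The plan is a direct computation using conditional independence across the cut vertex $v$. Since $v \notin \Lambda$ and removing $v$ separates its neighbors, the graph $G - v$ splits into components $G_w$, $w \in N(v)$, with each $G_w$ containing exactly one neighbor $w$. Some components may contain no neighbor; these contribute a common factor and can be ignored. The pinning $\tau$ restricts to each $G_w$.

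Write $Z_w^{c}$ for the (pinned by $\tau$) partition function of $G_w$ with $w$ fixed to spin $c \in \set{0,1}$. Then $R^\tau_{G-v,w} = Z_w^1/Z_w^0$. Because $G - v$ is a disjoint union, this ratio is the same whether computed in $G_w$ or in $G - v$.

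First, I will expand the weight of configurations according to $\sigma_v$. The only interactions between $v$ and the rest are the edges $vw$, and the components do not interact with each other. Hence
\begin{align*}
  \mu^\tau_{G,v}(1) &\propto \lambda \prod_{w \in N(v)} \tp{\beta Z_w^1 + Z_w^0}, \\
  \mu^\tau_{G,v}(0) &\propto \prod_{w \in N(v)} \tp{Z_w^1 + \gamma Z_w^0},
\end{align*}
with the same proportionality constant. Taking the ratio and dividing each factor by $Z_w^0$ gives
\[
R^\tau_{G,v} = \lambda \prod_{w} \frac{\beta R_w + 1}{R_w + \gamma} = F_d\tp{(R^\tau_{G-v,w})_w},
\]
which is the first item.

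Second, conditioned on $\sigma_v = c$, the components are independent. The conditional law on $G_w$ is its own Gibbs measure, with the field at $w$ multiplied by the edge weight $A(c,\cdot)$. Concretely, given $\sigma_v = 1$, the ratio at $w$ becomes $\beta R_w$, so $\mu^{\tau, v\gets 1}_w(1) = \frac{\beta R_w}{\beta R_w + 1}$. Given $\sigma_v = 0$, the ratio at $w$ becomes $R_w/\gamma$, so $\mu^{\tau, v\gets 0}_w(1) = \frac{R_w}{R_w + \gamma}$. Subtracting gives
\[
\frac{\beta R_w(R_w + \gamma) - R_w(\beta R_w + 1)}{(\beta R_w + 1)(R_w + \gamma)} = -\frac{(1 - \beta\gamma) R_w}{(\beta R_w + 1)(R_w + \gamma)} = -h(R_w).
\]
Squaring this difference gives $\infsq_{\mu^\tau_G}(v \to w) = h(R^\tau_{G-v,w})^2$, where the influence is read with the pinning $\tau$ fixed, in the sense of \Cref{def:total-squared-influence}.

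The only point needing care is degenerate cases, which I expect to be bookkeeping rather than a real obstacle. If $w$ is pinned, or forced by $\tau$, then $R_w \in \set{0, \infty}$. In that case both conditional marginals at $w$ agree, and $h(0) = 0 = \lim_{x \to \infty} h(x)$ under the natural convention, so the formula still holds. The conditioning on $\sigma_v$ is well defined because $\gamma, \lambda > 0$ ensure $\mu^\tau_v(1) \in (0,1)$ whenever influences are considered. Apart from these cases, everything is the elementary algebra above.
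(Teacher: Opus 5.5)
Your proof is correct and is the standard direct computation: split the weight by $\sigma_v$ across the components of $G-v$, then read off the conditional ratios $\beta R_w$ and $R_w/\gamma$ at $w$. The paper gives no proof of its own and only cites \cite[Lemma 16]{chen2020rapid}, which rests on the same calculation.

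One correction to your degenerate-case remark. When $\beta=0$ (e.g.\ the hardcore model), $h(x)\to 1$ rather than $0$ as $x\to\infty$. Also, $\gamma,\lambda>0$ do not force $\mu^\tau_v(1)\in(0,1)$. The lemma is still unaffected: with $\beta=0$ and $R_w=\infty$, the spin $\sigma_v=1$ is infeasible, so $\mu^\tau_v(1)=0$ and the influence is undefined, a case that \Cref{def:total-squared-influence} already excludes.
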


Let $G = (V,E)$ be a graph with max degree $\Delta$.
Fix $v \in V$. Let $\mathbb{T} = \mathbb{T}_{\mathrm{SAW}}(v, G)$ and $S = \mathrm{boundary}_{\mathbb{T}, \delta}(v, N)$.
Let $G_{\mathrm{SF}} = G_{\mathrm{SF}}(G, v, S) = (V_{\mathrm{SF}}, E_{\mathrm{SF}})$ be the SAW tree with flower , and $\nu = \mu_{G_{\mathrm{SF}}}$ be the Gibbs distribution on $G_{\mathrm{SF}}$.
Let $\Lambda \subseteq V_{\mathrm{SF}}$ and $\tau \in \Omega(\nu_\Lambda)$ be a pinning such that $\nu^\tau_v(1) \in (0,1)$.
Let $\mathcal{T}$ be the induced subgraph of $G_{\mathrm{SF}}$ defined as
\[\mathcal{T} \coloneqq G_{\mathrm{SF}}[S \cup \set{\text{the connected component in $G_{\mathrm{SF}} - S$ that contains $v$}}].\]
According to the definition of $G_{\mathrm{SF}}$, we know that $\mathcal{T}$ is a tree rooted at $v$, where $S$ is a subset of leaves.
Hence \Cref{lem:formula-influence} applies to every vertices on $\mathcal{T}$.
For every vertex $u \in \mathcal{T}$, we use the notation $R_u \coloneqq R_{G(\mathcal{T}_u), u}^\tau$, where $G(\mathcal{T}_u)$ is the graph obtained by considering the sub-tree $\mathcal{T}_u$ rooted at $u$ and replace its leaves $w \in S$ with graph $G_w$ as in \Cref{lem:SAW-tree-with-flower}.

According to the tree recursion, the ratio $R_u$ can be calculated recursively by 
\[R_u \coloneqq F_{d_u}(R_{w_1}, \cdots, R_{w_{d_u}}),\] 
where $\set{w_1, \cdots, w_{d_u}}$ are children of $u$.
We first consider the following base cases:
\begin{itemize}
    \item if $u$ is pinned to $0$ or $1$, $R_u$ gets $0$ or $+\infty$;
    \item if $u \not\in S$ is free and it is a leaf of $\mathcal{T}$, we set $R_u = \lambda$;
    \item if $u \in S$, we set $R_u = R^\tau_{G_u, u}$, where $G_u$ is the graph defined in \Cref{lem:SAW-tree-with-flower}.
\end{itemize}
We note that, according to the tree recursion, for every free vertex $u$ in $\mathcal{T}$, it holds that 
\begin{align*}
    R_u \in J_{d_u}, \quad \text{where } J_{d} \coloneqq [\lambda \beta^d, \lambda\gamma^{-d}],
\end{align*}
and $d_u \coloneqq \Delta_u - 1$ such that $\Delta_u$ is the degree of the corresponding vertex in the original graph. 
Then, the squared-influence from root $v$ to a specific vertex on $\mathcal{T}$ is  the same as the squared-influence on the SAW tree by the second bullet of \Cref{lem:formula-influence}.

In~\cite{li2013correlation}, they define the following univariate function
\begin{align*}
  \alpha_d(x) \coloneqq d\sqrt{h(F_d(\*1 x)) h(x)}.
\end{align*}
\begin{lemma}[\text{\cite[Lemma 14]{li2013correlation}}] \label{lem:univariate-decay}
  If $(\beta, \gamma, \lambda)$ is up-to-$\Delta$ unique with gap $\delta$, then for every $1 \leq d < \Delta$, it holds that $\alpha_d(x) \leq \sqrt{1-\delta}$ for all $x \geq 0$.
\end{lemma}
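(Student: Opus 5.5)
The plan is to reduce the claim to a one-variable optimization whose maximum sits at the fixed point $\hat{x}_d$, where uniqueness applies directly. Fix $1 \le d < \Delta$ and write $F = F_d$, $\hat{x} = \hat{x}_d$.

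First I would evaluate $\alpha_d$ at the fixed point. Since $F(\hat{x}) = \hat{x}$,
\[
\alpha_d(\hat{x}) = d\sqrt{h(\hat{x})^2} = d\,h(\hat{x}) = \frac{d(1-\beta\gamma)\hat{x}}{(\beta\hat{x}+1)(\hat{x}+\gamma)} = \abs{F'(\hat{x})} \le 1-\delta \le \sqrt{1-\delta},
\]
using \Cref{def:d-unique}. So it suffices to show that $x \mapsto h(F(x))\,h(x)$ on $[0,\infty)$ is maximized at $x = \hat{x}$. The slack between $1-\delta$ and $\sqrt{1-\delta}$ could help if the maximum sits only near $\hat{x}$, but I would first try to prove the exact statement.

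Next I would rewrite the derivative in a usable form. Differentiating $\log F$ gives the identity
\[
F'(x) = -\frac{d\,h(x)\,F(x)}{x}.
\]
Set $\psi(x) \coloneqq x(\log h)'(x) = \frac{1}{\beta x+1} - \frac{x}{x+\gamma}$. This function is strictly decreasing in $x$, with $\psi(0)=1$ and $\psi(x) < 0$ for large $x$. With $g(x) \coloneqq \log h(F(x)) + \log h(x)$, one then gets
\[
x\,g'(x) = \psi(x) - d\,h(x)\,\psi(F(x)).
\]
The boundary behaviour is benign: $h(x)\to 0$ as $x \to 0$, and $h(x) \to 0$ as $x\to\infty$ when $\beta>0$; when $\beta = 0$ the limit $F(\infty)=0$ gives $h(F)\to 0$. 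So the supremum is attained at an interior critical point.

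The main step is a sign analysis of $x\,g'(x)$ around $\hat{x}$. Because $F$ is decreasing, $x<\hat{x}$ forces $F(x)>\hat{x}$, and conversely. I would aim to show $g'>0$ on $(0,\hat{x})$ and $g'<0$ on $(\hat{x},\infty)$. Equivalently, I want $\psi(x) > d\,h(x)\,\psi(F(x))$ for $x<\hat{x}$, with the reverse inequality for $x>\hat{x}$.

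I would split cases by the sign of $\psi$. If $\psi(F(x)) \le 0 < \psi(x)$, the inequality is immediate. In the remaining cases I would use monotonicity of $\psi$ together with a bound on $d\,h(x)$, namely $d\,h(x) \le 1$ near the relevant range, which comes from uniqueness at $\hat{x}$. I expect this case to be the main obstacle: $d\,h(x)$ can exceed $1$ away from $\hat{x}$, so a cruder comparison will not suffice.

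If that happens, my fallback is to change variables to $y = F(x)$ and exploit the symmetric roles of $x$ and $F(x)$ in $h(F(x))h(x)$. Under the substitution $x = \hat{x}e^{s}$, I would show that any critical point $x^\ast$ must satisfy $F(F(x^\ast)) = x^\ast$. Then I would use the fact that up-to-$\Delta$ uniqueness excludes $2$-cycles of $F_d$ other than the fixed point; this is the standard characterization that uniqueness holds iff $F_d \circ F_d$ has a unique fixed point. That would force $x^\ast = \hat{x}$ and complete the argument.
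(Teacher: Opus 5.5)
There is a genuine gap. The map $x\mapsto h(F_d(x))\,h(x)$ is in general \emph{not} maximized at $\hat{x}_d$, so neither your sign analysis nor your fallback can succeed. Your own formula already shows this. At $x=\hat{x}$ it reads $\hat{x}\,g'(\hat{x})=\psi(\hat{x})\,(1-d\,h(\hat{x}))$, and since $d\,h(\hat{x})=\abs{F_d'(\hat{x}_d)}<1$, this vanishes only if $\psi(\hat{x})=0$, i.e.\ $\hat{x}=\sqrt{\gamma/\beta}$. For the hardcore model $\psi(x)=\frac{1}{1+x}>0$, so $g$ is strictly increasing at $\hat{x}$. The maximizer is the root $x^*$ of $dx=1+F_d(x)$, which lies strictly to the right of $\hat{x}$.

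A concrete instance: take $\beta=0$, $\gamma=1$, $\lambda=1$, $d=2$, which is up-to-$3$ unique. Then $\hat{x}\approx 0.466$ and $\alpha_2(\hat{x})=\abs{F_2'(\hat{x})}\approx 0.635$, whereas $x^*\approx 0.678$ and $\alpha_2(x^*)\approx 0.651$. With $1-\delta=0.635$, the bound $\alpha_d\le 1-\delta$ is false, so the square root in the statement is essential, not slack. The fallback fails on the same example: $F(F(x^*))\approx 0.544\neq x^*$. Critical points of $g$ are not $2$-periodic points of $F_d$, so the absence of $2$-cycles tells you nothing about them.

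The missing idea is to compare $\alpha_d(x^*)^2$ with $d\,h(\hat{x})$, not with $(d\,h(\hat{x}))^2$. Use the critical-point equation $\psi(x^*)=d\,h(x^*)\,\psi(y^*)$, where $y^*=F_d(x^*)$, both to locate $x^*$ and to rewrite the value there.

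Since $d\,h>0$, $\psi(x^*)$ and $\psi(y^*)$ have the same sign. So $x^*$, $y^*$, and $\hat{x}$ (which lies between them) all sit on the same side of $\sqrt{\gamma/\beta}$, the maximizer of $h$ (read as $+\infty$ if $\beta=0$).

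On the side $\psi>0$, the case $x^*<\hat{x}<y^*$ is impossible. Because $\psi$ is decreasing, $d\,h(x^*)=\psi(x^*)/\psi(y^*)>1$. But $h$ is increasing there, so $d\,h(x^*)<d\,h(\hat{x})\le 1$. Hence $y^*\le\hat{x}\le x^*$, and
\[
\alpha_d(x^*)^2 = d\,h(x^*)\cdot d\,h(y^*) = \psi(x^*)\cdot\frac{d(1-\beta\gamma)\,y^*}{\gamma-\beta (y^*)^2} \le \psi(\hat{x})\cdot\frac{d(1-\beta\gamma)\,\hat{x}}{\gamma-\beta\hat{x}^2} = d\,h(\hat{x}).
\]
The inequality holds because $\psi$ is decreasing and $t\mapsto t/(\gamma-\beta t^2)$ is increasing on $[0,\sqrt{\gamma/\beta})$.

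The side $\psi<0$ is symmetric: now $x^*\le\hat{x}\le y^*$ and both factors are negative. If $\psi(x^*)=0$, then $x^*=y^*=\hat{x}$, where $\alpha_d^2=(d\,h(\hat{x}))^2$.

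Combined with your boundary analysis, this gives $\alpha_d(x)^2\le\abs{F_d'(\hat{x}_d)}\le 1-\delta$ for all $x\ge 0$, which is exactly the $\sqrt{1-\delta}$ bound. For the hardcore model it reduces to $\alpha_d(x^*)^2=d\,F_d(x^*)/(1+x^*)<d\hat{x}/(1+\hat{x})$.
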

\begin{lemma}[symmetrization] \label{lem:symmetrization}
  For any vector $\*x \in \mathbb{R}_{\geq 0}^d$, there exists $\ol{x} \geq 0$ that  
  \begin{align*}
    \sum_{i=1}^d h(x_i) \leq d \cdot h(\ol{x}) \quad \text{and} \quad F_d(\*x) = F_d(\*1 \ol{x}).
  \end{align*}
\end{lemma}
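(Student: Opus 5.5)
The approach is to change variables so that $F_d$ depends on $\*x$ only through the sum of a monotone function of the coordinates, take $\ol{x}$ to be the point where that function equals the average, and then apply Jensen's inequality in the new variable.

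\textbf{Step 1 (log-coordinates).} Define $g(x) \coloneqq \log\frac{\beta x + 1}{x + \gamma}$ for $x \geq 0$, so that $F_d(\*x) = \lambda \exp\tp{\sum_{i=1}^d g(x_i)}$. Differentiating gives
\[g'(x) = \frac{\beta}{\beta x+1} - \frac{1}{x+\gamma} = \frac{-(1-\beta\gamma)}{(\beta x+1)(x+\gamma)}.\]
Since $\beta\gamma < 1$, $g$ is continuous and strictly decreasing on $[0,\infty)$. Its image is the interval $I = (\log\beta, -\log\gamma]$, where $\log 0 = -\infty$ when $\beta = 0$. Because $I$ is an interval, the average $\bar y \coloneqq \frac1d\sum_i g(x_i)$ lies in $I$. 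We therefore set $\ol{x} \coloneqq g^{-1}(\bar y) \geq 0$. By construction $\sum_i g(x_i) = d\, g(\ol{x})$, so $F_d(\*x) = F_d(\*1\ol{x})$.

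\textbf{Step 2 (concavity).} Note that $h(x) = -x\, g'(x)$. Define $\varphi \coloneqq h \circ g^{-1}$ on $I$. It then suffices to show that $\varphi$ is concave on $I$, because Jensen's inequality gives
\[\sum_i h(x_i) = \sum_i \varphi(g(x_i)) \leq d\, \varphi(\bar y) = d\, h(\ol{x}).\]
For $x > 0$ write $y = g(x)$. By the chain rule,
\[\varphi'(y) = \frac{h'(x)}{g'(x)} = -\frac{x\,h'(x)}{h(x)} = -x\,(\log h)'(x).\]
Since $\log h(x) = \mathrm{const} + \log x - \log(\beta x+1) - \log(x+\gamma)$, this yields
\[\varphi'(y) = -1 + \frac{\beta x}{\beta x+1} + \frac{x}{x+\gamma} = \frac{x}{x+\gamma} - \frac{1}{\beta x+1}.\]
This expression is nondecreasing in $x$: the first term increases and the second decreases. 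Since $x = g^{-1}(y)$ is decreasing in $y$, $\varphi'$ is nonincreasing in $y$, so $\varphi$ is concave.

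\textbf{Step 3 (boundary cases).} The endpoint $x = 0$ (where $y = -\log\gamma$) must also be covered. There $h(0) = 0$, and $\varphi$ is continuous on the closed end of $I$, so concavity extends to the endpoint by continuity. When $\beta = 0$, the interval $I$ is unbounded below, but Jensen's inequality needs only finitely many points $g(x_i)$, so this causes no trouble.

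The main obstacle is finding the right reparametrization $y = g(x)$, under which $h$ becomes concave. Once that substitution is chosen, the computation in Step 2 is short, with the key identity being $h = -x g'$.
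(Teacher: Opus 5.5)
Your proof is correct and follows the paper's route. The paper uses the same substitution $y_i = \log\frac{\beta x_i+1}{x_i+\gamma}$, chooses $\ol{x}$ to correspond to the average of the $y_i$ (equivalently, the geometric mean of the $z_i$), and concludes by Jensen; the only difference is that it writes $h$ explicitly as $(\e^{-y}-\gamma)(\e^{y}-\beta)/(1-\beta\gamma)$ and checks that the second derivative is negative on all of $\R$, which makes your Step 3 endpoint argument unnecessary.
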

The proof of \Cref{lem:symmetrization} follows the same high-level plan as the symmetrization approach used in the previous work~\cite{li2013correlation}. We defer its proof to \Cref{sec:symmetrization}.

\begin{lemma}[\text{\cite[Lemma 36]{chen2020rapid}}]\label{lem:boundedness}
  Suppose  $(\beta,\gamma,\lambda)$ is up-to-$\Delta$ unique. For every $d \geq 1$ and for every $R_1 \in J_{d_1}$, $R_2 \in J_{d_2}$ with $1 \leq d_1,d_2 < \Delta$, $h(R_1)h(R_2) \leq C$ where $C$ is a universal constant.
\end{lemma}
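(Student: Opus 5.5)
With $h$ as defined above (the influence function in \Cref{lem:formula-influence}), I expect the bound to need neither the uniqueness assumption nor the intervals $J_{d_1}, J_{d_2}$. The plan is to show that $h$ is bounded uniformly on $[0,\infty)$ by a constant depending only on $\beta\gamma$; then take $C$ to be the square of that bound.

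The key step is to expand the denominator of $h$. For $x \ge 0$,
\begin{align*}
(\beta x + 1)(x + \gamma) = \beta x^2 + (1 + \beta\gamma)x + \gamma .
\end{align*}
Every term is nonnegative because $\beta \ge 0$ and $\gamma > 0$. So the denominator is at least $(1+\beta\gamma)x$, and for $x > 0$,
\begin{align*}
0 \le h(x) = \frac{(1-\beta\gamma)x}{(\beta x+1)(x+\gamma)} \le \frac{1-\beta\gamma}{1+\beta\gamma} \le 1 .
\end{align*}
For $x = 0$ we have $h(0)=0$. The lower bound $h \ge 0$ uses the anti-ferromagnetic assumption $\beta\gamma < 1$.

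If one wants a sharper constant, apply AM--GM to $\beta x^2 + \gamma \ge 2\sqrt{\beta\gamma}\,x$. This improves the bound to $h(x) \le \frac{1-\beta\gamma}{(1+\sqrt{\beta\gamma})^2} = \frac{1-\sqrt{\beta\gamma}}{1+\sqrt{\beta\gamma}}$. The statement does not need this.

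Consequently, for any $R_1 \in J_{d_1}$ and $R_2 \in J_{d_2}$ (indeed for any nonnegative $R_1, R_2$), we get $h(R_1)h(R_2) \le 1$, so $C = 1$ works. There is one boundary case: a pinned vertex whose ratio is $0$ or $+\infty$. Here $h(0)=0$ and $\lim_{x\to\infty} h(x) = 0$ (when $\beta>0$ the limit is $0$; when $\beta = 0$, $h(x) = x/(x+\gamma) \to 1$, which is still bounded). The lemma only concerns free vertices anyway, whose ratios lie in $J_d$.

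The main obstacle I anticipate is making sure this crude bound is really what the later argument in \Cref{sec:total-squared-influence-2-spin-on-tree} needs. The original \cite[Lemma 36]{chen2020rapid} is stated for a potential-weighted version of $h$, where the $J_d$ ranges and uniqueness matter. If the subsequent proof uses $h$ composed with a potential, I would redo the argument there: bound the potential's derivative on the compact interval $J_d$ using $\lambda\beta^d > 0$ (or handle $\beta = 0$ separately via $\lambda\gamma^{-d} < \infty$), and then use continuity on a compact set. For the statement as written, the one-line denominator estimate suffices.
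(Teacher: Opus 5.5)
Your proof is correct, but it takes a different and more elementary route than the paper. The paper gives no argument of its own; it imports \cite[Lemma 36]{chen2020rapid}, where the estimate is the boundedness ingredient of a potential-function analysis and is stated under the uniqueness hypothesis on the windows $J_{d}$. Your expansion $(\beta x+1)(x+\gamma)=\beta x^2+(1+\beta\gamma)x+\gamma\ge(1+\beta\gamma)x$ gives $0\le h\le\frac{1-\beta\gamma}{1+\beta\gamma}\le 1$ on all of $[0,\infty)$, and the same bound holds in the limit $x\to\infty$. So for the statement as written, with an unspecified universal constant, $C=1$ works, and neither uniqueness nor the restriction $R_i\in J_{d_i}$ is needed.

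The worry in your last paragraph does not materialize. The only place the lemma is used is the final display in the proof of \Cref{lem:total-squared-influence-2-spin}. That display needs exactly the unweighted product $h(R_{w_i})\max_{w}h(R_w)$, and your bound gives $C_{\Delta,\delta}=\Delta^2/(1-\delta)$ there.

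What the range-restricted, uniqueness-based viewpoint can buy is a sharper constant. For the hardcore model, $J_d=[0,\lambda]$, so $h(R)\le\frac{\lambda}{1+\lambda}\le\lambda_c(\Delta)=O(1/\Delta)$, and $C_{\Delta,\delta}$ becomes bounded independently of $\Delta$. Since $\Delta$ is a constant throughout, this only rescales $N$ by a constant factor and leaves the exponent in \Cref{thm:2-spin-reduction-inference} unchanged. Your global bound is therefore all the paper actually needs.
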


Now, we are ready to prove \Cref{lem:total-squared-influence-2-spin}.
\begin{proof}[Proof of \Cref{lem:total-squared-influence-2-spin}]
  We will use potential method with potential function $h(x)$.
  For every $u \in \mathcal{T}$, let $N_u$ denote the current value of $N$ when the algorithm $\mathrm{boundary}_{\mathbb{T},\delta}(u, N)$ in \Cref{algo:boundary-on-tree} reaches vertex $u$.
  We will show by induction on $\mathcal{T}$ that for every $u \in \mathcal{T}$, it holds that 
  \begin{align} \label{eq:except-root}
    h(R_u) \cdot \infsq_{\mathcal{T}_u}(u \to S) \leq N_u^{-1} \cdot \max_{w \in S \cap \mathcal{T}_u} h(R_w),
  \end{align}
  where $\+T_u$ is the subtree rooted at $u$ and we use the notion $\inf_{\+T_u}^2(u \to S)$ as the total squared influence from $u$ to vertices in both $\+T_u$ and $S$.
  The base case, where $u$ is a leaf of $\mathcal{T}$, follows as
  \begin{align*}
  \infsq_{\mathcal{T}_u}(u \to S) = \begin{cases}
      1, & u \in S, \\
      0, & u \not\in S.
  \end{cases}
  \end{align*}
  For the inductive case where $u$ is not a leaf, let $\set{w_1, \cdots, w_{d_u}}$ be the children of $u$. We have
  \begin{align*}
    h(R_u) \infsq_{\mathcal{T}_u}(u \to S)
    &= h(R_u) \sum_{i=1}^{d_u} \infsq_{\mathcal{T}_u}(u \to w_i) \cdot \infsq_{\mathcal{T}_{w_{i}}}(w_i \to S) \\
    &= h(R_u) \sum_{i=1}^{d_u} h(R_{w_i})^2 \cdot \infsq_{\mathcal{T}_{w_{i}}}(w_i \to S) \tag{\Cref{lem:formula-influence}} \\
    &= h(R_u) \sum_{i=1}^{d_u} h(R_{w_i}) \cdot h(R_{w_i}) \infsq_{\mathcal{T}_{w_{i}}}(w_i \to S) \\
    \tag{Induction Hypothesis} &\leq h(R_u) \sum_{i=1}^{d_u} h(R_{w_i}) \cdot  \tp{N_{w_i}^{-1} \max_{w \in S\cap \mathcal{T}_{w_i}} h(R_w) } \\
    \tag{by definition of $N_{w_i}$} &\leq h(R_u) \sum_{i=1}^{d_u} h(R_{w_i}) \cdot  \tp{\frac{d_u}{1-\delta} \cdot N_{u}^{-1} \max_{w \in S\cap \mathcal{T}_{w_i}} h(R_w) } \\
    \tag{\Cref{lem:symmetrization}}
    &\leq d_u \cdot h(F_d(\*1\ol{R_{w_\cdot}})) \cdot h(\ol{R_{w_\cdot}}) \cdot  \tp{\frac{d_u}{1-\delta} \cdot N_{u}^{-1} \max_{w \in S\cap \mathcal{T}_{u}} h(R_w)} \\
    &= \frac{\alpha\tp{\*1\ol{R_{w_\cdot}}}^2}{d_u} \cdot  \tp{\frac{d_u}{1-\delta} \cdot N_{u}^{-1} \max_{w \in S\cap \mathcal{T}_{u}} h(R_w)} \\
    \tag{by \Cref{lem:univariate-decay}} 
    &\leq N_{u}^{-1} \max_{w \in S\cap \mathcal{T}_{u}} h(R_w).
  \end{align*}
  We finish the proof by applying a similar argument to the root $v$: let $\set{w_1, \cdots, w_{\Delta_v}}$ be its children,
  \begin{align*}
    \infsq_{\mathcal{T}}(v \to S)
    &= \sum_{i=1}^{\Delta_v} h(R_{w_i}) \cdot h(R_{w_i}) \infsq_{\mathcal{T}_{w_i}}(w_i \to S) \\
    \tag{by \eqref{eq:except-root}} 
    &\leq \sum_{i=1}^{\Delta_v} h(R_{w_i}) \cdot \tp{ N_{w_i}^{-1} \cdot \max_{w\in S\cap \mathcal{T}_{w_i}} h(R_{w}) } \\
    &= \frac{\Delta_v}{1-\delta} N_v^{-1} \cdot \sum_{i=1}^{\Delta_v} h(R_{w_i}) \max_{w\in S\cap \mathcal{T}_{w_i}} h(R_{w}) \\
    &\leq C_{\Delta, \delta} \cdot N_v^{-1}
  \end{align*}
  where the last inequality holds by \Cref{lem:boundedness}.
\end{proof}

With few modifications, the same proof also implies bounded total-squared-influence on trees, which may be of independent interest.
\begin{lemma} \label{lem:total-squared-influence-2-spin-on-tree}
    Let $\delta > 0$ and $\Delta \geq 1$ be an constant integer.
    Fix $\beta \geq 0, \gamma > 0, \lambda > 0$, $\beta\gamma < 1$ such that $(\beta, \gamma, \lambda)$ are up-to-$\Delta$ unique with gap $\delta$.
    Let $\mathbb{T} = (V,E)$ be a tree with max degree $\Delta$, $v$ be a vertex in $\mathbb{T}$, and $\mu = \mu_{\mathbb{T}}$ be the Gibbs distribution.
    Let $\Lambda \subseteq V$ and $\tau \in \Omega(\mu_\Lambda)$ such that $\mu^\tau_v(1) \in (0,1)$.
    For $N \geq 1$,
    \begin{align*}
        \infsq_{\mu^\tau}(v \to \mathrm{boundary}_{\mathbb{T}, \delta}(v, N)) \leq C_{\Delta,\delta} \cdot N^{-1},
    \end{align*}
     where $C_{\Delta,\delta}$ is a constant that only depends on $\Delta$ and $\delta$.
\end{lemma}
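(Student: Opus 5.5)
We follow the proof of \Cref{lem:total-squared-influence-2-spin} essentially verbatim, with the tree $\mathbb{T}$ playing the role of $\mathcal{T}$. There is no flower to replace subtrees at the boundary: every $u \in S \coloneqq \mathrm{boundary}_{\mathbb{T},\delta}(v,N)$ is simply a vertex of $\mathbb{T}$, and its ratio $R_u \coloneqq R^\tau_{\mathbb{T}_u,u}$ is computed by the tree recursion on the subtree $\mathbb{T}_u$. We may restrict attention to the subtree $\mathcal{T}$ consisting of $S$ together with the component of $\mathbb{T}-S$ containing $v$, because every $u \in S$ separates $v$ from $\mathbb{T}_u$. Hence the influences from $v$ to $S$ are governed by the recursion along the path from $v$ to $u$, through \Cref{lem:product} and the second bullet of \Cref{lem:formula-influence}.

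The first preliminary step is to dispose of pinned vertices. If some $u$ on the path is pinned, then $h(R_u)=0$, or equivalently removing $u$ kills the influence, so any pinned vertex of $\mathcal{T}$ contributes nothing. We may therefore assume the relevant vertices are free and that $R_u \in J_{d_u}$. There is one difference from the SAW setting: a non-root vertex in a general tree of max degree $\Delta$ has $d_u \le \Delta-1$ children. This is exactly the range $1 \le d < \Delta$ covered by \Cref{lem:univariate-decay} and \Cref{lem:boundedness}. Only the root may have $\Delta$ children, and it is handled separately, as before.

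The main step is to prove by induction from the leaves the potential inequality \eqref{eq:except-root}, namely $h(R_u)\,\infsq_{\mathcal{T}_u}(u\to S) \le N_u^{-1}\max_{w\in S\cap\mathcal{T}_u} h(R_w)$ for every non-root $u$.
\begin{itemize}
\item Base case: for $u\in S$ we have $N_u\le 1$ and the influence equals $1$, so the inequality reads $h(R_u)\le h(R_u)$. For a free leaf not in $S$ the influence is $0$.
\item Inductive step: decompose via \Cref{lem:product} and substitute \Cref{lem:formula-influence}. Then use $N_{w_i}=N_u(1-\delta)/d_u$ from \Cref{algo:boundary-on-tree}, symmetrize with \Cref{lem:symmetrization}, and close with $\alpha_{d_u}^2\le 1-\delta$ from \Cref{lem:univariate-decay}. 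The factor $\frac{d_u}{1-\delta}$ coming from the budget split cancels exactly against $\alpha^2/d_u\le(1-\delta)/d_u$.
\end{itemize}
At the root, with $\Delta_v\le\Delta$ children, we write $\infsq(v\to S)=\sum_i h(R_{w_i})\cdot h(R_{w_i})\infsq(w_i\to S)$. Applying the inductive bound gives at most $\frac{\Delta_v}{1-\delta}N^{-1}\sum_i h(R_{w_i})\,h(R_w)$ for some $w\in S$. \Cref{lem:boundedness} then bounds each product $h(R_{w_i})h(R_w)$ by a constant, so the total is at most $C_{\Delta,\delta}N^{-1}$. When $N\le 1$ we get $S=\{v\}$, and the claim holds trivially by enlarging the constant.

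The main obstacle is the root and the boundedness step. \Cref{lem:boundedness} requires both ratios to lie in some $J_{d}$ with $1\le d<\Delta$, which holds for free non-root vertices, but leaves (with $d=0$, $R=\lambda$) need checking. For these I would note that $h(\lambda)$ is a constant depending only on the parameters, so $h(R_1)h(R_2)$ remains bounded by a constant depending on $\beta,\gamma,\lambda,\Delta$. This constant is absorbed into $C_{\Delta,\delta}$, the parameters being fixed constants.
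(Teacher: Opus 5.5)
Your proposal is correct and follows the paper's own route: the paper obtains this lemma by rerunning the proof of \Cref{lem:total-squared-influence-2-spin} with $\mathbb{T}$ in place of $\mathcal{T}$, that is, induction on the potential inequality \eqref{eq:except-root} followed by a separate root step, which is exactly what you do, and you correctly note that only the root can have $\Delta$ children. Two small fixes: the root step needs neither \Cref{lem:boundedness} nor any special treatment of leaves, because $h(x)\le \frac{x}{x+\gamma}<1$ for all $x\ge 0$, which gives $C_{\Delta,\delta}=\Delta^2/(1-\delta)$ depending only on $\Delta,\delta$ as stated; and when $\beta=0$ a vertex pinned to $1$ has $h(+\infty)=1$, not $0$, so that branch is killed by its parent's factor $h(R_p)=h(0)=0$ rather than by $h(R_u)$.
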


\section{Aggregate perfect marginal sampler}
\subsection{Aggregate perfect marginal sampler for explicit perfect marginal sampler}
\label{subsec:aggregate-explicit}


In this section, we describe the following abstract perfect marginal sampler for distributions over $[q]^V$ in the spirit of the framework of Anand and Jerrum. To sample the spin of vertex $u$, the sampler first identifies a set $S$ of vertices to be revealed and then determines their spins via recursive calls.

\begin{algorithm}[H]
    \KwIn{Graph $G = (V, E)$, vertex $u \in V$, and the pinning $\tau$}
    \KwOut{A marginal sample $\sigma_u \in [q]$}
    \caption{Abstract marginal sampler: MS-abstract($G,u,\tau$)}
    \label{alg:marginal-sampler-general}
    \BlankLine
    Sample $S \subseteq V$ according to some distribution $\nu = \nu(G,u,\tau)$\;\label{line:contraction-set-general}
    \ForEach{$v  \in S$}{
        $c \gets \mathrm{MS\text{-}abstract}(G,v,\tau)$\;
        Compute $\+C=\+C(G,u,v,\tau) \subseteq [q]$ of size at most $q_{\mathrm{c}}$ in $\widetilde{O}(1)$ time\;
        Update $\tau$ with $\tau \cup (v \gets c)$\; 
        \If(\tcp*[f]{Early exit}){$c \notin \+C$}{
            \Return a random sample $\sigma_u$ based on $\tau$ in $\widetilde{O}(1)$ time\;
        }
    }
    \Return a random sample $\sigma_u$ based on $\tau$ in $\widetilde{O}(1)$ time\;
\end{algorithm}

\begin{remark}
Anand and Jerrum's perfect marginal sampler for the hardcore model is easily fit into the absract framework: 
\begin{itemize}
\item The distribution $S \sim \nu$ is defined as follows: if $u$ is pinned in $\Lambda$, then $S \gets \emptyset$. Otherwise, $S$ is set to $N(u)$ with probability $\frac{\lambda}{1+\lambda}$, and to $\emptyset$ with the remaining probability;
\item The continuation set $\+C = \{0\}$, and if the spin $c \gets \text{MS-abstract}(G,v,\tau)$ is not in $\+C$, the algorithm returns $0$;
\item If the algorithm reaches the final step, the algorithm returns $1$.
\end{itemize}
Furthermore, this abstraction subsumes the perfect marginal samplers for spin systems with weak interactions \cite{LWY26} and polymer models. We provide a detailed discussion of these connections later in this section.
\end{remark}

As mentioned in the introduction, the underlying subcritical branching process of Anand and Jerrum's perfect marginal sampler for the hardcore model enables the simultation of $N$ independent executions in a single batch.
This intuition carries over to the abstract marginal sampler, which we formalized as below.

\begin{theorem}\label{thm:general-explicit}
     Let $\delta \in (0,1)$, $D \ge 1$ and $q \ge 2$ be constants. Suppose the distribution $\nu$ in \Cref{line:contraction-set-general} of \Cref{alg:marginal-sampler-general} satisfies $\mathbb{E}_{\nu}[|S|] \le 1 - \delta$ and $\abs{\bigcup_{S \in \mathrm{supp}(\nu)} S} \le D$. 
     
    There exists an algorithm that, given a graph $G=(V,E)$, a vertex $u \in V$, computes the list of indicator sums $\tp{\sum_{i=1}^N \mathbf{1}[X_i = c]}_{c \in [q]}$ in $\widetilde{O}(N^{1-\alpha})$ time, where $X_1, \dots, X_N$ are independent marginal samples of vertex $u$ generated by {MS-abstract}($G,u,\emptyset$), and $\alpha = \frac{\delta}{100 D \log q}$.
\end{theorem}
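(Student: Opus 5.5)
The plan is to simulate $N$ independent runs of MS-abstract($G,u,\emptyset$) in a single batch, keeping a count of how many runs occupy each distinct ``state'' rather than tracking individual runs. This is the same idea as Aggregate-AJ (\Cref{alg:AJ-marginal-sampler-aggregate}).

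\textbf{The batched procedure.} Write a generic call as $\mathrm{Agg}(u,\tau,N)$. It returns the frequency vector of the outputs of $N$ independent runs of MS-abstract($G,u,\tau$).

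\begin{enumerate}
\item Draw the multinomial $M(N;\nu(G,u,\tau))$, which splits the $N$ runs according to the set $S$ they sample. The support of $\nu$ consists of subsets of a set of size at most $D$, so it has at most $2^D$ atoms. By \Cref{lem:binomial} this draw costs $\widetilde{O}(2^D)=\widetilde{O}(1)$.
\item For each group with a common $S=\{v_1,\dots,v_k\}$ and a common current pinning, process the runs vertex by vertex. Call $\mathrm{Agg}(v_j,\tau,N')$ on the $N'$ runs of that group still alive. This splits them into at most $q$ subgroups by the returned color $c$, each subgroup with its updated pinning $\tau\cup(v_j\gets c)$.
\item Subgroups with $c\notin\+C$ exit early. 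Their final outputs $\sigma_u$ are i.i.d.\ given $\tau$, so their color counts are obtained with one more multinomial draw.
\item Subgroups with $c\in\+C$ continue to $v_{j+1}$, each as a separate batch. At most $q_{\mathrm{c}}\le q$ such subgroups exist, so after $j$ steps a group has fragmented into at most $q^{j}$ batches.
\end{enumerate}

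\textbf{Correctness.} Runs in the same batch share identical history. Conditioned on that history, their futures are i.i.d., because each run uses fresh randomness. So replacing $N'$ independent recursive runs by one aggregated recursive call gives exactly the joint law of the counts. The argument is a straightforward induction on the recursion depth, which is finite almost surely since the process is subcritical.

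\textbf{Running time.} Let $W(N)$ be the maximal expected number of aggregated calls started with batch size $N$ (at most $N$ runs). Each call costs $\widetilde{O}(1)$. Every call has $W\le 1+(\text{number of children})$, and the total batch size passed to recursive calls at a given vertex $v\in S$ is at most the number of runs whose $S$ contains $v$. I will bound $W$ two ways.

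\begin{itemize}
\item \emph{Linear bound.} Ignoring batching, the expected number of nodes in the recursion tree of a single run is $\le 1/\delta$, since $\mathbb{E}|S|\le 1-\delta$ makes the branching process subcritical. Hence $W(N)\le N/\delta$.
\item \emph{Depth bound.} Group the calls by depth $h$ in the recursion. The number of distinct batches at depth $h$ is at most $(2^D q^{D})^{h}$: each level branches into at most $2^D$ choices of $S$, each of at most $D$ vertices, each with at most $q$ outcomes. Meanwhile, by subcriticality the expected number of individual runs' nodes at depth $h$ is at most $N(1-\delta)^h$.
\end{itemize}

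A nonempty batch contains at least one run-node, so the number of nonempty batches at depth $h$ is at most $\min\bigl((2^Dq^D)^h,\ N(1-\delta)^h\bigr)$ in expectation. Summing over $h$ and splitting at $h^\ast=\log N/\log(2^Dq^D/(1-\delta))$:
\begin{itemize}
\item depths below $h^\ast$ contribute $O\bigl((2^Dq^D)^{h^\ast}\bigr)$;
\item depths above $h^\ast$ contribute a geometric tail $O\bigl(N(1-\delta)^{h^\ast}/\delta\bigr)$.
\end{itemize}
Both terms are $N^{1-\alpha'}$ with $\alpha'=\log\frac1{1-\delta}\big/\log\frac{2^Dq^D}{1-\delta}\ge \frac{\delta}{100D\log q}$ for $q\ge2$. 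This gives total time $\widetilde{O}(N^{1-\alpha})$.

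\textbf{Main obstacle.} The delicate step is counting batches correctly. Recursive calls nested inside a child call also fragment by color outcomes, so ``depth'' must mean depth in the run-level recursion tree. The branching factor per level must then account for $2^D$ choices of $S$, for up to $D$ sequential vertices, and for the continuation splits into at most $q$ colors. I would first formalize a batch as an equivalence class of identical histories and verify that the number of possible histories of length $h$ is at most $(2^Dq^D)^h$. The multinomial sampling relies on \Cref{lem:binomial} after an $\widetilde{O}(\sqrt N)$ initialization, which is dominated by the main term.
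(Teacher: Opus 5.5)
Your proof is correct, but it takes a different route from the paper.

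\textbf{The paper's route.} It proves \Cref{lem:number-of-calls} by a self-similar bound. Assume every recursive call on a batch of size $m$ costs at most $Cm^{1-\alpha}$. Concavity of $x\mapsto x^{1-\alpha}$ over the at most $q_c$ continuing colors then bounds the cost of one set $S$ by $Cm^{1-\alpha}g_\alpha(|S|)$. Concavity again gives $\mathbb{E}[Q_S^{1-\alpha}]\le (N\nu(S))^{1-\alpha}$. Everything therefore reduces to the one-step condition $\beta(\alpha)=\sup\sum_S \nu(S)^{1-\alpha}g_\alpha(|S|)<1$, which \Cref{lem:exponent-1} checks.

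\textbf{Your route.} You count level by level: a deterministic width $(2^Dq^D)^h$ against an expected population $N(1-\delta)^h$. This is the strategy of the paper's proof of \Cref{thm:batch-automaton}, but indexed by recursion depth rather than automaton steps.

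\textbf{What the paper's version buys.} Its criterion is local and needs no uniform bound on the number of children of an aggregated call. That is why it transfers to the polymer sampler, where $\nu_u$ has unbounded support and your width bound is unavailable.

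\textbf{What your version buys.}
\begin{itemize}
\item It avoids the fixed-point induction in \Cref{lem:number-of-calls}. Child calls there can carry the full batch size, so that induction implicitly needs a depth truncation.
\item It gives the explicit exponent $\alpha'=\log\frac{1}{1-\delta}\big/\log\frac{2^Dq^D}{1-\delta}$. Write out the comparison rather than asserting it: since $\log\frac{1}{1-\delta}\ge\delta$ and $x\mapsto x/(a+x)$ is increasing, $\alpha'\ge\frac{\delta}{D\log(2q)+\delta}\ge\frac{\delta}{100D\log q}$.
\item The population shrinks by a factor $1-\delta$ per level, rather than at a weak per-step tail rate. So it avoids the loss the paper observes for the automaton route.
\item With the sharper child count ($q_c=1$, at most $\Delta$ children per aggregated call, $1-\delta=1/c$), it reproduces the exponent $\log c/\log(c\Delta)$ of \Cref{prop:hardcore}, even at the endpoint.
\end{itemize}

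\textbf{One point to fix in the write-up.} Define a batch as an equivalence class of coarse histories: the sets $S$ drawn and the colors returned at ancestors and earlier siblings, not full executions. That is what the algorithm groups by, and it is what makes the $(2^Dq^D)^h$ count valid. The conditional i.i.d.\ property still holds, because a run depends on a recursive call only through its returned color.
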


Analogue to the perfect marginal sampler for the hardcore model, we consider the following aggregate version of~\Cref{alg:marginal-sampler-general}. 
The aggregate perfect marginal sampler draws a sample $(Q_S)_{S}$ from multinomial distribution $M(N,\nu)$ to simulate the execution of~\Cref{line:contraction-set-general} in~\Cref{alg:marginal-sampler-general} across $N$ independent runs of the perfect marginal samplers.
For each subset $S$, the procedure would simulate $Q_S$ independent runs in a single batch by dividing $Q_S$ queries into several branches depending on the outputs of recursive calls. Hence, the aggregate perfect marginal sampler indeed faithfully simulates $N$ independent runs of perfect marginal sampler in~\Cref{alg:marginal-sampler-general}. It remains to show the efficiency of the aggregate perfect marginal sampler, which is given as follows.

\begin{algorithm}[H]
    \KwIn{Graph $G = (V, E)$, vertex $u \in V$, pinning $\tau$, and a non-negative integer $N$ denoting the number of samples}
    \KwOut{A non-negative integer list of size $q$, denoting the number of samples for each spin}
    \caption{Aggregate abstract marginal sampler: aggregate-MS-abstract($G,u,\tau,N$)}
    \label{alg:aggregate-marginal-sampler-general}
    \BlankLine
    \If{$N = 0$}{
        \Return $[0,0,\ldots,0]$\;
    }
    Sample an integer list $(Q_{S})_{S}$ according to the multinomial distribution $M(N;\nu)$, where $\nu$ is given in~\Cref{line:contraction-set-general} of~\Cref{alg:marginal-sampler-general}\;\label{line:contraction-set-general-aggregate}
    Initialize an integer list $R \gets [0,0,\ldots,0]$ of size $q$\;
    \tcp{The list $R$ tracks the number of samples for each spin}
    
    \ForEach{$S \in \mathrm{supp}(\nu)$\label{line:enumerate}}{
        $\text{query-list} \gets [(\tau,Q_S)]$ and $\text{new-query-list} \gets []$\;
        \ForEach{$v \in S$}{
            \ForEach{$(\tau^\star,Q^\star)$ in query-list}{
                Compute $\+C=\+C(G,u,v,\tau^\star) \subseteq [q]$ of size at most $q_{c}$\;
                $[r_1,r_2,\ldots,r_q] \gets \mathrm{aggregate\text{-}MS\text{-}abstract}(G,v,\tau^\star,Q^\star)$\;
                \ForEach(\tcp*[f]{Continuing branches}){$c \in \+C$}{
                    Append $(\tau^\star \cup (v \gets c),r_c)$ to the end of new-query-list\;
                }
                \ForEach(\tcp*[f]{Branches early terminated}){$c \not\in \+C$}{
                    Compute the frequency list $[r_1,r_2,\ldots,r_q]$ based on $\tau^\star \cup (v \gets c)$\;
                    Set $R \gets R+[r_1,r_2,\ldots,r_q]$\;
                }
            }
            $\text{query-list} \gets \text{new-query-list}$ and set $\text{new-query-list} \gets []$\;
        }
        \ForEach{$(\tau^\star,Q^\star)$ in query-list}{
            Compute the frequency list $[r_1,r_2,\ldots,r_q]$ based on $\tau^\star$\;
            Set $R \gets R+[r_1,r_2,\ldots,r_q]$\;
        }
    }
    \Return $R$\;
\end{algorithm}

\begin{remark}
    If $\nu$ has large support, we represent the sample in \Cref{line:contraction-set-general-aggregate} as a sparse mapping $\{(S, Q_S)\}$ and enumerate its entries accordingly in \Cref{line:enumerate}.
\end{remark}


\begin{lemma}\label{lem:number-of-calls}
    For $\alpha \in (0,1)$ and $k \ge 0$ be an integer, define
    \begin{align*}
        g_\alpha(k) \coloneqq \sum_{i=0}^{k-1} q_{c}^{\alpha i},
    \end{align*}
    where the parameter $q_c$ is the upper bound for the size of continuation set $\+C$ in~\Cref{alg:marginal-sampler-general}.
    Let
    \begin{align*}
        \beta(\alpha) \coloneqq \sup \sum_{S} \nu(S)^{1-\alpha} \cdot g_\alpha(\abs{S}),
    \end{align*}
    where the supremum is taken over all scenarios (in particular, over all distributions $\nu$).
    Suppose there exists $\alpha \in (0,1)$ such that
    $\beta(\alpha) < 1$ and~\Cref{line:contraction-set-general-aggregate} can be implemented in $\widetilde{O}(N^{1-\alpha})$ time in expectation, then the aggregate perfect marginal sampler in~\Cref{alg:aggregate-marginal-sampler-general}
    can be implemented in $\widetilde{O}\tp{N^{1-\alpha}}$ time in expectation.
\end{lemma}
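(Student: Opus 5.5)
We bound the expected total work by induction on $N$. Let $W(N)$ be the supremum, over all inputs $(G,u,\tau)$ admissible for the abstract sampler, of the expected running time of aggregate-MS-abstract with parameter $N$. We aim to show $W(N) \le K \cdot N^{1-\alpha}\,\mathrm{polylog}(N)$ for a suitable constant $K$, arguing by strong induction on $N$.

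\textbf{Accounting for one call.} Fix a call with parameter $N \ge 1$. Besides recursive calls, it performs the following work. First, it draws the multinomial sample in \Cref{line:contraction-set-general-aggregate}, which costs $\widetilde{O}(N^{1-\alpha})$ by assumption. Second, for each $S$ with $Q_S>0$, it does bookkeeping proportional to the number of query-list entries. When processing the $j$-th vertex of $S$ (with $j$ starting at $0$), the query list has at most $q_c^{j}$ entries, because each entry branches into at most $q_c$ continuing branches. So the $j$-th vertex of $S$ triggers at most $q_c^{j}$ recursive calls, and their parameters $Q^\star_{1},\dots,Q^\star_{m}$ with $m \le q_c^j$ sum to at most $Q_S$. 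This holds because the counts $r_c$ split $Q^\star$.

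\textbf{Concavity step.} Since $x \mapsto x^{1-\alpha}$ is concave, Jensen's inequality (power mean) gives
\[\sum_{i=1}^m (Q^\star_i)^{1-\alpha} \le m^{\alpha} Q_S^{1-\alpha} \le q_c^{\alpha j} Q_S^{1-\alpha}.\]
Summing over $j=0,\dots,|S|-1$ gives a contribution of at most $g_\alpha(|S|)\,Q_S^{1-\alpha}$ from the set $S$. Taking expectation over the multinomial draw and applying Jensen again, $\E{Q_S^{1-\alpha}} \le (N\nu(S))^{1-\alpha}$. Hence the expected recursive cost is at most
\[K\,\mathrm{polylog}(N) \cdot N^{1-\alpha} \sum_S \nu(S)^{1-\alpha} g_\alpha(|S|) \le \beta(\alpha)\, K\, \mathrm{polylog}(N)\, N^{1-\alpha}.\]
Two details need care here. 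The bound on recursive costs conditions on the random splits, and the induction hypothesis is applied to each subcall. Every subcall has parameter $\le N$; when a subcall has parameter exactly $N$, we handle it either by induction on recursion depth or by noting that the inequality form closes anyway (see below).

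\textbf{Closing the recursion.} Combining the pieces, $W(N) \le c\,N^{1-\alpha}\mathrm{polylog}(N) + \text{(bookkeeping)} + \beta(\alpha) W'(N)$, where $W'$ bounds the subcalls. Choosing $K \ge c/(1-\beta(\alpha))$ closes the induction. Parameters equal to $N$ are not a problem, since the inequality $W \le c' + \beta W$ with $\beta<1$ solves to $W \le c'/(1-\beta)$, provided $W$ is finite a priori. We would justify finiteness by truncating the recursion depth and taking a limit (monotone convergence).

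\textbf{Main obstacle: bookkeeping.} The bookkeeping cost, namely query-list entries with $Q^\star = 0$ and the final frequency computations, must also be charged. Each entry costs $\widetilde{O}(1)$, and the number of entries is at most $\sum_{S: Q_S>0} g_{0}(|S|)q_c \le \#\{S : Q_S>0\}\cdot O(q_c^{D})$. This is where I expect the main difficulty. I would enumerate only entries with nonzero count, and charge each such entry to its own recursive call, whose cost is at least $1 \le (Q^\star)^{1-\alpha}$. The charge is then absorbed into the same $\beta(\alpha)$ sum up to a constant factor, and that constant is folded into the per-call $\widetilde{O}$ term through the assumption on \Cref{line:contraction-set-general-aggregate}.
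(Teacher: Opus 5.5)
Your proposal is correct and is essentially the paper's argument. The paper bounds the cost of processing a fixed $S$ with $Q_S=m$ by $C\,m^{1-\alpha}g_\alpha(\abs{S})$, using induction on $\abs{S}$ and Jensen over the at most $q_c$ continuing branches; your level-by-level bound $\sum_i (Q^\star_i)^{1-\alpha}\le q_c^{\alpha j}Q_S^{1-\alpha}$ is that induction unrolled. It then uses $\E{Q_S^{1-\alpha}}\le (N\nu(S))^{1-\alpha}$ and closes with $C=1/(1-\beta(\alpha))$, exactly as you do. Your extra care on two points goes beyond the paper, whose proof silently glosses over both: subcalls whose parameter equals $N$ (handled by depth truncation plus monotone convergence), and skipping zero-count entries so that the bookkeeping is absorbed into the same $\beta(\alpha)$-sum.
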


\begin{proof}
    Without loss of generality, we assume $q_{c} = q$ and~\Cref{line:contraction-set-general-aggregate} takes exactly $N^{1-\alpha}$ time to process.
    Let $E_{N}$ be an upper bound for the total cost (of time) to generate $N$ samples in~\Cref{alg:aggregate-marginal-sampler-general} in expectation.
    We will show that $E_N = C \cdot N^{1-\alpha}$ (when $N \ge 1$) is indeed an upper bound for some constant $C \ge 1$ to be determined.
    Suppose a recursive call aggregate-MS-abstract$(\cdot,\cdot,\cdot,m)$ incurs a cost of $E_m$.
    To prove the claim, it suffices to show that the total expected cost of aggregate-MS-abstract$(\cdot,\cdot,\cdot,N)$ is bounded by $E_N$.

    Suppose we have drawn the integer list $(Q_S)$ in~\Cref{line:contraction-set-general-aggregate}. Fix a subset $S$ and condition on the event
    that $Q_S=m$. We first show that the process of handling this subset incurs a total cost at most
    \begin{align}
        F_{\abs{S},m} \le C \cdot m^{1-\alpha} \cdot g_\alpha(\abs{S}).
        \label{eq:F-bound}
    \end{align}

    We prove~\eqref{eq:F-bound} by induction on $\abs{S}$. The base case $\abs{S}=0$ is immediate. Let $\abs{S}=k\ge 1$ and consider the first vertex $v\in S$ processed by the algorithm.
    To handle this subset, the algorithm makes one recursive call
    $\mathrm{aggregate\text{-}MS\text{-}abstract}(G,v,\tau,m)$ and obtains a list $[r_1,\ldots,r_q]$ with $\sum_{c=1}^q r_c = m$. Then, the algorithm proceeds with $q$ subsets of size $k-1$ with parameters $r_1,r_2,\ldots,r_q$ to compute the correct distribution for set $S$. By induction hypothesis, the total cost is bounded by
    \begin{align}\label{eq:cost}
        F_{k,m} \le C \cdot m^{1-\alpha} + \sum_{i=1}^q F_{k-1,r_i} \le C \cdot m^{1-\alpha} + C \cdot g_\alpha(k-1) \cdot \sum_{i=1}^q r_i^{1-\alpha}. 
    \end{align}
    Since $\alpha\in(0,1)$, the map $x\mapsto x^{1-\alpha}$ is concave, Jensen's inequality yields
    \begin{align*}
        \sum_{c=1}^q r_c^{1-\alpha}
        \le q\cdot \tp{\frac{\sum_{c=1}^q r_c}{q}}^{1-\alpha}
        = q\cdot \tp{\frac{m}{q}}^{1-\alpha}
        = q^\alpha \cdot m^{1-\alpha}.
    \end{align*}
    Plugging into~\eqref{eq:cost}, this proves~\eqref{eq:F-bound}.

    Since the multinomial distribution $M(N;\nu)$ in \Cref{line:contraction-set-general-aggregate} can be implemented in $N^{1-\alpha}$ cost in expectation.
    Plugging the above bound into the total cost, we obtain
    \begin{align*}
        E_N
        &\le N^{1-\alpha} + \E[(Q_S)\sim M(N;\nu)]{\sum_{S} F_{\abs{S},Q_S}}\\
        &\le N^{1-\alpha} + C \cdot \E[(Q_S)\sim M(N;\nu)]{\sum_{S} g_\alpha(\abs{S}) \cdot (Q_S)^{1-\alpha}}.
    \end{align*}
    For each fixed $S$, $Q_S$ has marginal distribution $\mathrm{Bin}(N,\nu(S))$, and by concavity of $x^{1-\alpha}$ again,
    \begin{align*}
        \E[]{Q_S^{1-\alpha}} \le \tp{\E[]{Q_S}}^{1-\alpha}= \tp{N\nu(S)}^{1-\alpha}.
    \end{align*}
    Hence,
    \begin{align*}
        E_N
        &\le N^{1-\alpha} + C \cdot \sum_{S} g_\alpha(\abs{S}) \cdot (N\nu(S))^{1-\alpha}\\
        &= N^{1-\alpha} + C \cdot N^{1-\alpha} \cdot \sum_{S} \nu(S)^{1-\alpha} \cdot g_\alpha(\abs{S})\\
        &\le N^{1-\alpha} + C \cdot \beta(\alpha) \cdot N^{1-\alpha},
    \end{align*}
    where we recall $\beta(\alpha) = \sup \sum_{S} \nu(S)^{1-\alpha} g_\alpha(\abs{S})$ and the supremum is taken over all distribution $\nu$.

    Finally, choose $C \coloneqq \frac{1}{1-\beta(\alpha)}$. 
    Since $\beta(\alpha) < 1$, $C$ is positive.
    Then for all $N\ge 1$,
    \begin{align*}
        N^{1-\alpha} + C \beta(\alpha) \cdot N^{1-\alpha}
        = C(1-\beta(\alpha)) \cdot N^{1-\alpha} + C \beta(\alpha) \cdot N^{1-\alpha}
        = C \cdot N^{1-\alpha},
    \end{align*}
    and therefore $E_N \le C \cdot N^{1-\alpha}$, completing the proof.
\end{proof}

Observe that $\beta(\alpha)$ approaches $\sup \E[\nu]{\abs{S}}$ as $\alpha$ approaches $0$. 
Thus, for perfect marginal samplers exhibiting a subcritical branching process, a sufficiently small $\alpha > 0$ is guaranteed to exist. 
The following lemma provides a quantitative bound for this parameter.

\begin{lemma}\label{lem:exponent-1}
   Under the assumptions in~\Cref{thm:general-explicit}, it holds $\beta(\alpha) < 1$ for $\alpha = \frac{\delta}{100 D \log q}$.
\end{lemma}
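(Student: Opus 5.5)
We need to show $\sum_S \nu(S)^{1-\alpha} g_\alpha(|S|) < 1$ for $\alpha = \frac{\delta}{100D\log q}$, uniformly over all admissible $\nu$. We use only two facts: every $S\in\mathrm{supp}(\nu)$ is a subset of a fixed set $U$ with $|U|\le D$, and $\E[\nu]{|S|}\le 1-\delta$. Throughout, $q_c\le q$.

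\textbf{Step 1: bound $g_\alpha$.} Since $g_\alpha(k)=\sum_{i<k} q_c^{\alpha i}$ and $k\le D$, we have
\[
g_\alpha(k)\le k\, q^{\alpha(D-1)}\le k\, e^{\alpha D\log q}=k\,e^{\delta/100}.
\]
Also $g_\alpha(0)=0$, so the empty set contributes nothing to the sum.

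\textbf{Step 2: handle the factor $\nu(S)^{-\alpha}$.} Write $\nu(S)^{1-\alpha}=\nu(S)\cdot\nu(S)^{-\alpha}$. The factor $\nu(S)^{-\alpha}$ can blow up for tiny $\nu(S)$, so we split $\mathrm{supp}(\nu)\setminus\{\emptyset\}$ by a threshold $\theta$.

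For sets with $\nu(S)\ge\theta$, the contribution is at most
\[
\theta^{-\alpha}e^{\delta/100}\sum_S \nu(S)|S|\le \theta^{-\alpha}e^{\delta/100}(1-\delta).
\]
For sets with $\nu(S)<\theta$, there are at most $2^D$ subsets of $U$, and each satisfies $\nu(S)^{1-\alpha}|S|\le D\theta^{1-\alpha}$. So this part contributes at most $2^D D\theta^{1-\alpha}e^{\delta/100}$.

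\textbf{Step 3: choose the threshold.} Take $\theta=(\delta/(2^{D+3}D))^{2}$, or something similar. Then $\theta^{-\alpha}=\exp(\alpha\log(1/\theta))$. Since $\alpha\log(1/\theta)=O\!\big(\delta(D+\log(D/\delta))/(100D\log q)\big)$, this is at most $e^{c\delta}$ for a small $c$, using $\log q\ge\log 2$. The one place needing care is the term $\delta\log(1/\delta)/(100D)$, which must stay below a small multiple of $\delta$. If it does not, we either refine the threshold choice or absorb it using $\delta<1$ together with the slack the constant $100$ provides.

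The small-set term is at most about $\delta/8$, because $\theta^{1-\alpha}\le\theta^{1/2}$ (as $\alpha\le 1/2$).

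\textbf{Step 4: combine.} This gives
\[
\beta(\alpha)\le e^{\delta/50}(1-\delta)+\delta/4<1.
\]
The main obstacle is exactly Step 2: without the support-size bound $D$, the $\nu(S)^{-\alpha}$ factor cannot be controlled. The threshold split, with the $2^D$ count of subsets, is the key step, and the constants must be tuned so that the $\log(1/\delta)$ loss fits inside $\alpha$. If that tuning fails, we fall back on a Hölder-type bound
\[
\sum_S \nu(S)^{1-\alpha}|S|\le\Big(\sum_S\nu(S)|S|\Big)^{1-\alpha}\Big(\sum_S |S|\Big)^{\alpha}\le (1-\delta)^{1-\alpha}(D2^D)^{\alpha}.
\]
Since $(D2^D)^\alpha=e^{O(\delta/\log q)}$, this gives the conclusion cleanly, and it may in fact be the better primary route.
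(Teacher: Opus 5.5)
Your closing H\"older bound is a complete proof, and it is essentially the paper's argument. The threshold split you present as the primary route (Steps 2--4), however, has a genuine gap that no retuning of $\theta$ repairs. With the bounds you use, the large-mass part is at most $\theta^{-\alpha}e^{\delta/100}(1-\delta)$, and this bound is at least $1-\delta$. So your small-mass bound $2^{D}D\,\theta^{1-\alpha}e^{\delta/100}$ must be below $\delta$, which forces $\theta<\delta$. Hence $\theta^{-\alpha}>\delta^{-\alpha}=\exp\bigl(\tfrac{\delta\log(1/\delta)}{100D\log q}\bigr)$. Once $\log(1/\delta)$ exceeds roughly $100D\log q$, this factor alone outweighs $1-\delta\approx e^{-\delta}$, so the large-mass bound by itself already exceeds $1$. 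The lemma must hold for every $\delta\in(0,1)$, so the claim $\theta^{-\alpha}\le e^{c\delta}$ in Step 3 is false in general. Nor can you absorb the $\delta\log(1/\delta)$ term using $\delta<1$, because $\log(1/\delta)$ is unbounded.

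Your H\"older inequality $\sum_S(\nu(S)|S|)^{1-\alpha}|S|^{\alpha}\le(1-\delta)^{1-\alpha}(D2^{D})^{\alpha}$ avoids this entirely. The $\alpha$-power falls on a quantity depending only on $D$, not on $\delta$. The paper reaches the same place in two concavity steps. It groups sets by size to get $\sum_{|S|=k}\nu(S)^{1-\alpha}\le\binom{D}{k}^{\alpha}P_k^{1-\alpha}$, writes $kP_k^{1-\alpha}\le D^{\alpha}(kP_k)^{1-\alpha}$, and applies concavity again over $k$, ending at $e^{2\delta/100}D^{2\alpha}(1-\delta)^{1-\alpha}$. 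Your one-shot version is the same idea and loses slightly less.

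Make it your main argument and state the constants:
\begin{itemize}
\item $(D2^{D})^{\alpha}=\exp\bigl(\delta\,\tfrac{\log D+D\log 2}{100D\log q}\bigr)\le e^{0.016\delta}$;
\item $(1-\delta)^{1-\alpha}\le e^{-(1-\alpha)\delta}$, with $\alpha\le\tfrac{1}{100\log 2}<0.015$.
\end{itemize}
Together with your Step 1 factor $e^{\delta/100}$, this gives $\beta(\alpha)\le e^{(0.01+0.016-0.985)\delta}<1$.
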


\begin{proof}
    Let $\theta = 1/100$ for ease of notation.
    Fix a distribution $\nu$ in the supremum of $\beta(\alpha)$, it holds 
    \begin{align*}
        g_\alpha(k) \le q^{\alpha D} \cdot k = \exp\tp{\theta \delta} \cdot k.
    \end{align*} 
    Let $P_k = \Pr[S \sim \nu]{\abs{S} = k}$ and $U = \bigcup_{S \in \mathrm{supp}(\nu)} S$. 
    Recall that $\abs{U} \leq D$.
    By the concavity of $x \mapsto x^{1-\alpha}$, 
    \begin{align*}
        \sum_{\substack{S \subseteq U\\\abs{S} = k}} \nu(S)^{1-\alpha} \le \binom{D}{k}^{\alpha} P_k^{1-\alpha} \le 2^{\alpha D} \cdot P_k^{1- \alpha} \le \exp\tp{\theta \delta} \cdot P_k^{1-\alpha} 
    \end{align*}
    Therefore, $\beta(\alpha)$ is bounded by
    \begin{align*}
        \beta(\alpha) \le \exp\tp{2 \theta \delta}\sum_{k=1}^{D} k \cdot P_k^{1-\alpha} &\le \exp\tp{2\theta \delta} \cdot D^\alpha \sum_{k=1}^{D} (k \cdot P_k)^{1-\alpha} \\ 
        &\le \exp\tp{2\theta\delta} D^{2 \alpha } (1-\delta)^{1-\alpha} \le \exp(4\theta \delta) (1-\delta)^{1-\theta} < 1. \qedhere
    \end{align*}
\end{proof}

\Cref{thm:general-explicit} immediately follows from~\Cref{lem:number-of-calls} and~\Cref{lem:exponent-1}. 
Though, for specific applications, the exponent $\alpha$ can be further improved. 

\paragraph{Hardcore model}
    The original Anand and Jerrum's perfect marginal sampler in~\Cref{alg:AJ-marginal-sampler-new} is the prototype of the generalized version in~\Cref{alg:marginal-sampler-general}: The distribution $\nu$ either samples $N(u)$ with probability $\frac{\lambda}{1+\lambda}$, or an empty set with the remaining probability $\frac{1}{1+\lambda}$. The continuation set $\+C$ contains the single element $0$, thus $q_c = 1$. 
    In this case, the exponent $\alpha$ can be improved from $\Omega\tp{\frac{1}{\Delta}}$ to $\Omega \tp{\frac{1}{\log \Delta}}$ as shown in the following proposition.
    \begin{proposition}\label{prop:hardcore}
        Consider the hardcore model specified by a graph $G$ with maximum degree $\Delta$ and fugacity $\lambda$ satisfying $\frac{\lambda}{1+\lambda} = \frac{1}{c\Delta}$ for some $c > 1$. The exponent $\alpha < \frac{\log c}{\log \Delta + \log c}$ satisfies
        \begin{align*}
            \beta(\alpha) = \Delta \cdot \tp{\frac{\lambda}{1+\lambda}}^{1-\alpha} < 1.
        \end{align*}
    \end{proposition}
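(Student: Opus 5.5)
We compute $\beta(\alpha)$ directly from its definition in \Cref{lem:number-of-calls}, specialized to the Anand--Jerrum sampler. In this setting $\nu$ puts mass $p \coloneqq \frac{\lambda}{1+\lambda}$ on $S = N(u)$ (or on $N(u)\setminus\Lambda$) and mass $1-p$ on $S=\emptyset$. If $u$ is pinned, then $\nu$ is the point mass on $\emptyset$. The continuation set is $\+C=\{0\}$, so $q_c = 1$. This gives $g_\alpha(k) = \sum_{i=0}^{k-1} 1^{\alpha i} = k$, and in particular $g_\alpha(0)=0$. The empty set therefore contributes nothing to the sum, and for every scenario
\begin{align*}
    \sum_S \nu(S)^{1-\alpha} g_\alpha(\abs{S}) = p^{1-\alpha}\cdot \abs{S} \le \Delta \cdot p^{1-\alpha},
\end{align*}
using $\abs{S}\le \abs{N(u)} \le \Delta$. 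Taking the supremum over all scenarios, the worst case is an unpinned vertex of degree $\Delta$ with no pinned neighbors, which attains equality. This gives $\beta(\alpha) = \Delta p^{1-\alpha}$, the identity claimed in the statement.

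Next we check that this quantity is below $1$. Substituting $p = \frac{1}{c\Delta}$ gives
\[
\Delta p^{1-\alpha} = \Delta (c\Delta)^{-(1-\alpha)} = c^{-1}(c\Delta)^{\alpha}.
\]
This is less than $1$ exactly when $\alpha\log(c\Delta) < \log c$, that is, when $\alpha < \frac{\log c}{\log\Delta+\log c}$, which is the assumed range. Since $c>1$, this range is a nonempty subinterval of $(0,1)$.

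Finally, to apply \Cref{lem:number-of-calls} we must confirm that \Cref{line:contraction-set-general-aggregate} runs in $\widetilde O(N^{1-\alpha})$ expected time. Here $\nu$ has support of size at most $2$, so the multinomial sample is a single draw from $\mathrm{Bin}(N,p)$. By \Cref{lem:binomial} this costs $O(\log^2 N)$ time after $\widetilde O(\sqrt N)$ initialization. The initialization is shared across all calls and is within budget whenever $\alpha \le 1/2$, so we may shrink $\alpha$ if needed.

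The only step requiring care is the supremum over scenarios, in particular the treatment of pinned neighbors and the $\emptyset$ term. Both are handled by $g_\alpha(0)=0$ and $\abs{S}\le\Delta$, so I expect no real obstacle.
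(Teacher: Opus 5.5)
Your proposal is correct and matches the paper, which treats this proposition as a direct calculation: with $q_c=1$ one has $g_\alpha(k)=k$, so $\beta(\alpha)=\Delta\bigl(\frac{\lambda}{1+\lambda}\bigr)^{1-\alpha}=c^{-1}(c\Delta)^{\alpha}$, and this is below $1$ exactly when $\alpha<\frac{\log c}{\log\Delta+\log c}$. Your closing remark about the $\widetilde O(\sqrt N)$ initialization of the binomial sampler goes beyond the proposition and concerns applying \Cref{lem:number-of-calls}, but it is accurate and harmless.
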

    The proof of~\Cref{prop:hardcore} follows from a direct calculation. As a corollary of~\Cref{lem:number-of-calls,prop:hardcore}, for the hardcore model satisfying the Dobrushin condition (i.e., $\lambda < \frac{1}{\Delta-1}$), we can compute the empirical mean of $N = O\tp{n/\epsilon^2}$ samples in sublinear time, thus implying a subquadratic time approximating algorithm for the partition function.

    \begin{corollary}[Hardcore model]\label{corollary:hardcore}
    Let $\Delta \ge 3$ be a constant, and $\lambda \le \frac{1}{c \Delta-1}$ for some $c > 1$. For graph with maximum degree $\Delta$ and $n$ vertices, there exists an $\mathsf{FPRAS}$ for the partition function of the hardcore model with fugacity $\lambda$ in time $\widetilde{O}(n^{1+\frac{\log \Delta}{\log \Delta + \log c} + \delta} \cdot \epsilon^{-\frac{2 \log \Delta}{\log \Delta + \log c} + 2\delta})$ for any constant $\delta > 0$, where $\epsilon > 0$ is the (multiplicative) error bound and $n$ is the number of vertices in the graph.
\end{corollary}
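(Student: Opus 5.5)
The plan is to combine the counting-to-inference reduction \Cref{lem:reduction-counting-to-inference} with the aggregate Anand--Jerrum sampler (\Cref{alg:AJ-marginal-sampler-aggregate}), analyzed through \Cref{lem:number-of-calls} and \Cref{prop:hardcore}. Fix a feasible configuration, say $\sigma^* = \mathbf{0}$ (the empty independent set), so that $Z = 1/\mu(\sigma^*)$. By the chain rule we must estimate $n$ conditional marginals $\mu_{v_i}^{\tau_i}(0)$, where $\tau_i$ pins $v_1,\dots,v_{i-1}$ to $0$. These are exactly the pinnings handled by AJ$(G,u,\Lambda)$. Since $\lambda \le \frac{1}{c\Delta-1}$, each such marginal lies in $[\frac{1}{1+\lambda},1]$, a constant lower bound. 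For the estimator $P = \frac{1}{N}\sum_{j=1}^N \mathbf{1}[X_j=0]$ with $X_j$ i.i.d.\ perfect samples, we get $\E{P}=\mu^{\tau}_v(0)$ and $\Var{P}/\E{P}^2 \le \frac{1}{N}\cdot\frac{1-p}{p} = O(1/N)$. Taking $N = \Theta(n/\epsilon^2)$ therefore meets the hypothesis of \Cref{lem:reduction-counting-to-inference}.

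To produce $P$, I would call Aggregate-AJ$(G,v_i,\Lambda_i,N)$, which by construction returns the exact law of $\sum_j \mathbf{1}[X_j=1]$. The running time follows from \Cref{lem:number-of-calls}. Here $q_c=1$, so $g_\alpha(k)=k$, and $\nu$ puts mass $\frac{\lambda}{1+\lambda}$ on $N(u)\setminus\Lambda$ and the rest on $\emptyset$. Hence $\beta(\alpha)\le \Delta(\frac{\lambda}{1+\lambda})^{1-\alpha}$, which is the bound of \Cref{prop:hardcore}. Note that $\lambda\le\frac{1}{c\Delta-1}$ gives $\frac{\lambda}{1+\lambda}\le\frac{1}{c\Delta}$. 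The condition $\Delta (c\Delta)^{-(1-\alpha)}<1$ becomes $\alpha<\frac{\log c}{\log\Delta+\log c}$. I would take $\alpha = \frac{\log c}{\log \Delta+\log c}-\delta'$ for an arbitrarily small $\delta'>0$. The multinomial step in \Cref{line:contraction-set-general-aggregate} is a single binomial draw, and by \Cref{lem:binomial} it costs $O(\log^2 N)$ after one $\widetilde{O}(\sqrt{N})$ initialization that is shared across all calls. So each marginal estimate costs $\widetilde{O}(N^{1-\alpha})$ in expectation.

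Multiplying by the $n$ marginals gives total time $\widetilde{O}(n\cdot (n/\epsilon^2)^{1-\alpha})$. With $1-\alpha = \frac{\log\Delta}{\log\Delta+\log c}+\delta'$, this is $\widetilde{O}(n^{1+\frac{\log\Delta}{\log\Delta+\log c}+\delta'}\epsilon^{-\frac{2\log\Delta}{\log \Delta+\log c}-2\delta'})$, which matches the statement up to renaming $\delta$. The expected-time bound becomes a worst-case bound with constant success probability by Markov's inequality, after adjusting the constant $3/4$ through a union with the estimator's failure probability.

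The main obstacle is making the cost recursion of \Cref{lem:number-of-calls} rigorous here, specifically the base case and the constant $C$. The cost is $\le C N^{1-\alpha}$ only for $N\ge1$, and each call incurs an $O(1)$ overhead even when the subsequent $X$ is small. I would handle this by charging $\max(1,N^{1-\alpha})$ and noting that $N=0$ calls return immediately. The Jensen step $\E{X^{1-\alpha}}\le (N\lambda/(1+\lambda))^{1-\alpha}$ then closes the induction, since $\Delta(\lambda/(1+\lambda))^{1-\alpha}<1$ strictly. I also need termination: the recursion depth is almost surely finite because the batch sizes form a subcritical multitype branching process in expectation. This follows from the same potential argument.
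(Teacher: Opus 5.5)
Your proposal is correct and is essentially the paper's argument: the paper obtains the corollary directly from \Cref{lem:number-of-calls} and \Cref{prop:hardcore} (the aggregate Anand--Jerrum sampler with $q_c=1$, $g_\alpha(k)=k$, and $\beta(\alpha)\le\Delta(\frac{\lambda}{1+\lambda})^{1-\alpha}<1$ for $\alpha<\frac{\log c}{\log\Delta+\log c}$), combined with \Cref{lem:reduction-counting-to-inference} at $N=\Theta(n/\epsilon^2)$, and you fill in details the paper leaves implicit (the choice $\sigma^*=\mathbf{0}$, the marginal lower bound $\frac{1}{1+\lambda}$, the $\max(1,N^{1-\alpha})$ charging, and termination). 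One small correction: your computation gives the factor $\epsilon^{-\frac{2\log\Delta}{\log\Delta+\log c}-2\delta}$, whereas the statement prints $+2\delta$, so this is not literally a match ``up to renaming $\delta$''; the printed sign looks like a typo (this argument cannot produce it for $\epsilon<1$), and your exponent is the one the argument actually gives.
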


\begin{remark}
    Analyzing carefullly, the small slackness $\delta > 0$ can be fully removed.
\end{remark}
\begin{remark}
    When $\lambda \le \frac{1}{\Delta^{1.5+\zeta}}$ for some small $\zeta > 0$ considered in~\cite{AFFGW25}, the runtime of the aggregate sampler becomes $\widetilde{O}(n^{1.5})$, which improves their $\widetilde{O}(n^{2-\delta})$ result for some small $\delta > 0$.
\end{remark}

\paragraph{Polymer model}

Consider a polymer model $(\+C(\cdot),w,\+G)$ with $q$-spins on a class of $\+G$ of graphs with maximum degree $\Delta$ satisfies the \emph{polymer sampling condition} up to a constant, i.e. there exist $C \ge 10$ and $\theta \ge C\tp{1+ \log \tp{(q-1)\Delta}}$ such that
\begin{align}\label{eq:polymer-condition}
    \forall G \in \+G \text{ and } \gamma \in \+C(G), \quad w_\gamma \le \exp\tp{-\theta \abs{\gamma}}.
\end{align}
Inspired by Anand and Jerrum's perfect marginal sampler and the polymer chain~\cite{CGGPSV21}, we consider the following perfect marginal sampler for the polymer model. Given an undirected graph $G = (V,E)$, for each vertex $u \in V$, we define the distribution $\nu_u$ as follows: For each non-empty polymer $\gamma$ including $u$, sample $\gamma \sim \nu_u$ with probability $w_\gamma$; with the remaining probability, return the empty set.
With the oracle $\nu$, we are able to introduce the perfect marginal sampler.
For convenience, we  treat the connected component $\gamma$ as a vertex set, and use the following notation:
\begin{align*}
    N^+(\gamma) = \bigcup_{u \in \gamma} N^+(u).
\end{align*}

\begin{algorithm}[H]
    \KwIn{Graph $G=(V, E)$, vertex $u \in V$, pinned vertices $\Lambda \subseteq V$ and ground-state spins $g \in [q]^V$;}
    \KwOut{Either an empty set or a polymer $\gamma$ including $u$, following the marginal distribution on vertex $u$ conditioned on vertices in $\Lambda$ being pinned by their ground-state spins}
    \KwOracle{Distributions $\nu_v$ for each $v \in V$}
    \caption{Marginal sampler for polymer model: MS-polymer($G,u,\Lambda,g$)}
    \label{alg:polymer-marginal-sampler}   
    \BlankLine
    \If{$u \in \Lambda$}{
        \Return $\emptyset$\;
    }
    Propose a polymer $\gamma \sim \nu_u$\;
    
    \If{$\gamma \cap \Lambda \neq \emptyset$}{
        \Return $\emptyset$\;
    }

    \ForEach(\tcp*[f]{Enumerate vertex $u$ first}){$v \in N^+(\gamma)$\label{line:enumerate-first}}{ 
        $\gamma' \gets \text{MS-polymer}(G,v,\Lambda,g)$\;
        \If{$ \gamma'\neq \emptyset$}{
            \Return $\emptyset$\;
        }
        $\Lambda \gets \Lambda \cup \{v\}$\;
    }
    \Return $\gamma$\;
\end{algorithm}

\begin{remark}
    In~\Cref{line:enumerate-first}, we enumerate the vertex $u$ first. The condition is optional but serves to simplify the analysis of correctness.
\end{remark}


Let $\Gamma \subseteq \mathcal{C}(G)$ and $u$ be a vertex in $G$.
We will use the following notation
\begin{align} \label{eq:def-Gamma-u}
    \Gamma(u) \coloneqq \begin{cases}
        \gamma, &\text{if $\gamma \in \Gamma$ contains $u$,} \\
        \emptyset, &\text{no such $\gamma \in \Gamma$.}
    \end{cases}
\end{align}
Also, we will use the following notation to denote the fact that no $\gamma \in \Gamma$ contains any vertex in $\Lambda$:
\begin{align*}
    \Gamma \cap \Lambda = \emptyset.
\end{align*}

The following lemma claims the correctness of the perfect marginal sampler.
\begin{lemma}
    The perfect marginal sampler~\Cref{alg:polymer-marginal-sampler} terminates in finite time almost surely and 
    \begin{align}\label{eq:polymer-correctness-criterion}
       \Pr[]{\text{MS-polymer}(G,u,\Lambda,g) \text{ outputs } \gamma} 
       &= \Pr[\Gamma \sim \mu]{\Gamma(u) = \gamma \mid \Gamma \cap \Lambda = \emptyset}.
    \end{align}
\end{lemma}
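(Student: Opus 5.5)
Two things must be shown: termination almost surely, and the exact identity \eqref{eq:polymer-correctness-criterion}. I will treat both at once by induction, in the style of Anand and Jerrum.

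\textbf{Termination.} Each call proposes a polymer $\gamma \ni u$ with probability at most $\sum_{\gamma \ni u} w_\gamma$. If $\gamma$ is nonempty, the call spawns at most $|N^+(\gamma)| \le (\Delta+1)|\gamma|$ children. By \Cref{lem:connected-subgraph}, the number of size-$k$ polymers containing $u$ is at most $(q-1)^k(\e\Delta)^{k-1}$. Combined with \eqref{eq:polymer-condition}, the expected number of children is at most
\[
\sum_{k\ge1}(\Delta+1)k\,(q-1)^k(\e\Delta)^{k-1}\e^{-\theta k} < 1,
\]
because $\theta \ge C(1+\log((q-1)\Delta))$ with $C\ge 10$. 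So the recursion tree is dominated by a subcritical Galton--Watson process and is finite almost surely.

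\textbf{Correctness: the recursive identity.} I will use induction on $|V\setminus\Lambda|$. This is legitimate because every recursive call made in the loop happens after at least $u$ has been added to $\Lambda$, since $u$ is enumerated first. The case $u\in\Lambda$ is trivial. The target is the identity
\[
\Pr[\mu]{\Gamma(u)=\gamma \mid \Gamma\cap\Lambda=\emptyset} = w_\gamma\cdot\Pr[\mu]{\Gamma\cap(\Lambda\cup N^+(\gamma))=\emptyset \mid \Gamma\cap\Lambda=\emptyset}
\]
for every nonempty $\gamma\ni u$ with $\gamma\cap\Lambda=\emptyset$. It follows from the product form of $\mu$. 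Configurations with $\Gamma(u)=\gamma$ are in bijection with configurations $\Gamma'$ that avoid $\Lambda$ and avoid $N^+(\gamma)$, via $\Gamma=\Gamma'\cup\{\gamma\}$. Compatibility means distance at least $2$, so this is exactly avoiding $N^+(\gamma)$. The weight of $\Gamma'\cup\{\gamma\}$ is $w_\gamma$ times the weight of $\Gamma'$. Dividing by the partition function restricted to configurations avoiding $\Lambda$ gives the identity.

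\textbf{Correctness: matching the algorithm.} The algorithm outputs $\gamma$ exactly when two things happen: it proposes $\gamma$, which has probability $w_\gamma$, and every sequential recursive call returns $\emptyset$. Write $N^+(\gamma)=\{v_1=u,v_2,\dots\}$. By induction, the $j$-th call returns $\emptyset$ with probability
\[
\Pr[\mu]{\Gamma(v_j)=\emptyset \mid \Gamma\cap(\Lambda\cup\{v_1,\dots,v_{j-1}\})=\emptyset}.
\]
The event $\Gamma(v_j)=\emptyset$ combined with avoiding $\Lambda\cup\{v_1,\dots,v_{j-1}\}$ is the same as avoiding $\Lambda\cup\{v_1,\dots,v_j\}$. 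So the product of these conditional probabilities telescopes by the chain rule to $\Pr[\mu]{\Gamma\cap(\Lambda\cup N^+(\gamma))=\emptyset \mid \Gamma\cap\Lambda=\emptyset}$. Multiplying by $w_\gamma$ and applying the recursive identity gives \eqref{eq:polymer-correctness-criterion} for every nonempty $\gamma$.

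\textbf{The empty output.} Both sides of \eqref{eq:polymer-correctness-criterion} are probability distributions over $\{\emptyset\}\cup\{\gamma\ni u\}$. The algorithm's output is a genuine distribution because it terminates almost surely. Hence equality for all nonempty $\gamma$ forces equality for $\emptyset$ by complement.

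\textbf{Expected obstacle.} The step needing the most care is justifying the induction measure. The first call $\mathrm{MS\text{-}polymer}(G,u,\Lambda,g)$ in the loop must see $u$ treated as pinned, so that it returns $\emptyset$ immediately. The induction must also be well-founded even though subsequent calls reuse the growing $\Lambda$. This is exactly why the remark insists on enumerating $u$ first.
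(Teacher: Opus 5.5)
Your termination argument is fine and matches the paper's (expected offspring $\sum_{\gamma\ni u}w_\gamma(\Delta+1)\abs{\gamma}<1$). The correctness part has a genuine gap, caused by a misreading of the loop.

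In \Cref{alg:polymer-marginal-sampler} the update $\Lambda\gets\Lambda\cup\{v\}$ happens \emph{after} the recursive call on $v$. So the first call of the loop is $\text{MS-polymer}(G,u,\Lambda,g)$ with the \emph{original} $\Lambda$. It is a fresh independent run of exactly the instance you are analyzing, with the same number of free vertices. Your step ``by induction, the $j$-th call returns $\emptyset$ with probability \dots'' is therefore unjustified for $j=1$. The factor $\Pr[\mu]{\Gamma(u)=\emptyset\mid\Gamma\cap\Lambda=\emptyset}$ you plug in is precisely the $\emptyset$-case of the statement being proved. So your derivation for nonempty $\gamma$ is circular, and the final complement step rests on it.

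Your ``expected obstacle'' reading, where the call on $u$ sees $u$ as pinned and returns $\emptyset$ at once, is also inconsistent with your own telescoping. Under that reading the $j=1$ factor would be $1$, and the product would telescope to $\Pr[\mu]{\Gamma\cap(\Lambda\cup N^+(\gamma))=\emptyset\mid\Gamma\cap(\Lambda\cup\{u\})=\emptyset}$. The output probability would then be $\pi(\gamma)/\pi(\emptyset)$ rather than $\pi(\gamma)$, where $\pi(\eta)\coloneqq\Pr[\mu]{\Gamma(u)=\eta\mid\Gamma\cap\Lambda=\emptyset}$. In the base case $V\setminus\Lambda=\{u\}$ this gives $w_{\{u\}}$ instead of $w_{\{u\}}/(1+w_{\{u\}})$, so the sampler would be biased. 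The self-referential call is exactly what supplies the correct normalization.

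The repair is the paper's argument. Let $p_\eta$ be the probability that the call outputs $\eta$, and treat $p_\emptyset$ as an unknown.

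\begin{enumerate}
\item The first loop call returns $\emptyset$ with probability exactly $p_\emptyset$, since it is a copy of the same instance.
\item After that call $u\in\Lambda$, so the induction hypothesis legitimately applies to the calls on $v_2,v_3,\dots$.
\item Your telescoping, now started from $\Lambda\cup\{u\}$, combined with your product-form identity gives
\[
p_\gamma \;=\; w_\gamma\, p_\emptyset\,\frac{Z(\Lambda\cup N^+(\gamma))}{Z(\Lambda\cup\{u\})} \;=\; p_\emptyset\cdot\frac{\pi(\gamma)}{\pi(\emptyset)},
\]
where $Z(A)$ is the total weight of compatible families avoiding $A$.
\item Almost-sure termination gives $p_\emptyset+\sum_{\gamma\ni u}p_\gamma=1$. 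This linear equation forces $p_\emptyset=\pi(\emptyset)$, and hence $p_\gamma=\pi(\gamma)$ for all $\gamma$.
\end{enumerate}

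So normalization is not an after-the-fact complement step; it is the equation that determines $p_\emptyset$. This is also the real role of enumerating $u$ first. It makes the single call with unchanged $\Lambda$ an exact copy of the current instance, so one unknown $p_\emptyset$ suffices. It is not there to make that call return immediately.
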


\begin{proof}
    It suffices to show the following properties of the perfect marginal sampler:
    \begin{enumerate}
    \item The process would terminate with probability $1$;
    \item Suppose each recursive call $\text{MS-polymer}(G,v,\Lambda,g)$ in~\Cref{alg:polymer-marginal-sampler} returns a polymer with the correct marginal distribution for vertex $v$, then the whole process would return a polymer with the correct marginal distribution for vertex $u$.
    \end{enumerate}
    
    The first property follows from the polymer sampling condition: The expected number of recursive calls is bounded by
    \begin{align}\label{eq:calc-polymer}
        \nonumber \sum_{\gamma \text{ contains } u} w_\gamma \cdot (\Delta + 1)\abs{\gamma} &\le \sum_{k=1}^{+\infty} (\Delta + 1) k \cdot \sum_{\substack{\gamma \text{ contains } u\\\abs{\gamma}=k}} w_\gamma \\
        \nonumber &\overset{(\star)}{\le} \sum_{k \ge 1}(\e \Delta)^{k-1} (q-1)^{k} (\Delta+1) k \cdot \exp\tp{-\tau k}\\
        \nonumber &\le \frac{\Delta+1}{\e \Delta}\sum_{k \ge 1} k \cdot \exp\tp{-(\tau-1-\log (q-1)\Delta) k}\\
        &< 1,
    \end{align}
    where the inequality $(\star)$ follows from the polymer sampling condition and~\Cref{lem:connected-subgraph}.
    
    Now it remains to check the second property. Without loss of generality, we assume that $u \not\in \Lambda$. 
    We will prove \eqref{eq:polymer-correctness-criterion} by induction on the number of free vertices $\abs{V\setminus\Lambda}$.
    For the base case where $V \setminus \Lambda = \set{u}$, let $p_\gamma \coloneqq \Pr[]{\text{MS-polymer}(G,u,\Lambda,g) \text{ outputs }\gamma}$ for $\gamma \in \set{\emptyset, \set{u}}$, we have
    \begin{align*}
        p_\emptyset = (1 - w_{\set{u}}) + w_{\set{u}} \tp{1 - p_\emptyset}.
    \end{align*}
    This implies that $p_\emptyset = \frac{1}{1+w_{\set{u}}} = 1 - p_{\set{u}}$, and finishes the proof for the base case.

    For the inductive case, we also use the notation $p_\gamma \coloneqq \Pr[]{\text{MS-polymer}(G,u,\Lambda,g) \text{ outputs }\gamma}$ for all $\gamma$.
    Note that in the for-loop of in line 6 of \Cref{alg:polymer-marginal-sampler}, we can always handle vertex $u$ first.
    Then, according to the induction hypothesis, for any $\gamma \ni u$, we have
    \begin{align*}
        p_\gamma 
        &= w_\gamma \cdot p_\emptyset \cdot \Pr[\Gamma\sim \mu]{\Gamma \cap (\Lambda \cup N^+(\gamma)) = \emptyset \mid \Gamma \cap (\Lambda \cup \set{u}) = \emptyset} \\
        &= w_\gamma \cdot p_\emptyset \cdot  \frac{\sum_{\Gamma \cap (\Lambda \cup N^+(\gamma) = \emptyset} \prod_{\gamma' \in \Gamma} w_{\gamma'}}{\sum_{\Gamma \cap (\Lambda \cup \set{u}) = \emptyset} \prod_{\gamma' \in \Gamma} w_{\gamma'}} \\
        &= p_\emptyset \cdot \frac{\sum_{\Gamma \cap \Lambda = \emptyset \text{ and } \Gamma(u) = \gamma} \prod_{\gamma' \in \Gamma} w_{\gamma'}}{\sum_{\Gamma \cap \Lambda = \emptyset \text{ and } \Gamma(u) = \emptyset} \prod_{\gamma' \in \Gamma} w_{\gamma'}},
    \end{align*}
    where we recall \eqref{eq:def-Gamma-u} for the definition of $\Gamma(u)$.
    Since we also have $p_\emptyset + \sum_{\gamma \ni u} p_\gamma = 1$, it gives
    \begin{align*} 
        p_\emptyset &=  \frac{\sum_{\Gamma \cap \Lambda = \emptyset \text{ and } \Gamma(u) = \emptyset } \prod_{\gamma' \in \Gamma} w_{\gamma'}}{\sum_{\Gamma \cap \Lambda = \emptyset} \prod_{\gamma' \in \Gamma} w_{\gamma'}} = \Pr[\Gamma\sim \mu]{\Gamma(u) = \emptyset \mid \Gamma\cap \Lambda = \emptyset} \\
        \forall \gamma \ni u, \quad p_\gamma
        &= \frac{\sum_{\Gamma \cap \Lambda = \emptyset \text{ and } \Gamma(u) = \gamma} \prod_{\gamma' \in \Gamma} w_{\gamma'}}{\sum_{\Gamma \cap \Lambda = \emptyset} \prod_{\gamma' \in \Gamma} w_{\gamma'}}
        =  \Pr[\Gamma\sim \mu]{\Gamma(u) = \gamma \mid \Gamma\cap \Lambda = \emptyset}.
    \end{align*}
    This finishes the proof.
\end{proof}

To fit into the framework of the abstract marginal sampler (\Cref{alg:marginal-sampler-general}), we slightly modify the marginal sampler to return a spin of a given vertex.
The correctness of the modified algorithm follows directly from the definition of the polymer model.

\begin{algorithm}[H]
    \KwIn{Graph $G=(V, E)$, vertex $u \in V$, pinning $\tau$ and ground-state spins $g \in [q]^V$;}
    \KwOut{A marginal sampler $\sigma_u \in [q]$}
    \KwOracle{Distributions $\nu_v$ for each $v \in V$}
    \caption{Modified marginal sampler for polymer model: MMS-polymer($G,u,\tau,g$)}
    \label{alg:polymer-marginal-sampler-2}   
    \BlankLine
    \If{$u \in \mathrm{supp}(\tau)$}{
        \Return $\sigma_u$\;
    }
    Propose a polymer $\gamma \sim \nu_u$\;
    \If{$\gamma \cap \mathrm{supp}(\tau) \neq \emptyset$}{
        \Return $g_u$\;
    }
    \ForEach{$v \in N^+(\gamma)$}{
        $c \gets \text{MMS-polymer}(G,v,\tau,g)$\;
        \If{$c \neq g_v$}{
            \Return $g_u$\;
        }
        $\tau \gets \tau \cup (v \gets g_v)$\;
    }
    \Return $\gamma_u$\;
\end{algorithm}

Unfortunately, since the support size $\nu_u$ is no longer constant, \Cref{thm:general-explicit} is no longer applicable. However, the prerequisites for \Cref{lem:number-of-calls} remain satisfied: (1) a sample from the multinomial distribution $M(N;\nu_v)$ can be generated in sub-linear time, and (2) $\beta(\alpha) < 1$ for a chosen $\alpha$. This ensures the efficiency of the ``aggregated'' perfect marginal sampler for the polymer model.

\begin{lemma}[Efficiency of sampling from multinomial distribution]
    The multinomial distribution $M(N;\nu_v)$ can be implemented in $\widetilde{O}(N^{2/(C-2)})$ time in expectation.
\end{lemma}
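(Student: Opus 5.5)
Our plan is to sample $M(N;\nu_v)$ in two regimes, separated by polymer size. We fix a threshold $k_0$ and first draw, via \Cref{lem:binomial}, the number $Q_\emptyset \sim \mathrm{Bin}(N, \nu_v(\emptyset))$ of trials that return the empty set. We then draw the number of trials that land on a polymer of size at most $k_0$ and the number that land on a polymer of size larger than $k_0$. Each of these is one further binomial draw with the conditional success probability. The aggregate masses are computable: the mass of size-$k$ polymers containing $v$ is obtained by enumerating them with \Cref{lem:connected-subgraph}, together with their $(q-1)^k$ spin assignments.

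\emph{Small polymers.} For size at most $k_0$, we enumerate all polymers $\gamma \ni v$ with $\abs{\gamma} \le k_0$. By \Cref{lem:connected-subgraph} this takes $O(k_0^7(\e\Delta)^{2k_0}(q-1)^{k_0})$ time. We then split the count among them by sequential conditional binomials (the standard chain-rule construction of a multinomial), which costs $O(\log^2 N)$ per polymer.

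\emph{Large polymers.} For size larger than $k_0$, we simply generate each such trial individually. For each trial we first pick the size $k > k_0$ with probability proportional to the total weight at that size, then pick the connected subgraph and its spins by enumeration at that size. Alternatively, we can use a rejection scheme over a proposal proportional to $(\e\Delta(q-1))^{k}\e^{-\theta k}$ to keep each trial's cost proportional to $(\e\Delta)^{2k}$-type enumeration. The expected number of large trials is $N\sum_{k>k_0}(\e\Delta)^{k-1}(q-1)^k \e^{-\theta k} \le N\e^{-(\theta - 1-\log((q-1)\Delta))k_0}$. Each large trial of size $k$ costs roughly $(\e\Delta (q-1))^{2k}$, and the weights decay like $\e^{-\theta k}$ with $\theta \ge C(1+\log((q-1)\Delta))$. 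The expected cost of the large part is therefore a geometric series dominated by its first term, roughly $N\exp(-(C-2)(1+\log((q-1)\Delta))k_0)$.

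\emph{Balancing and the main obstacle.} The small part costs roughly $\exp(2(1+\log((q-1)\Delta))k_0)$ up to polynomial factors in $k_0$. Set $L \coloneqq 1+\log((q-1)\Delta)$ and choose $k_0$ so that $\e^{L k_0} = N^{1/(C-2)}$. Then the small part costs $\widetilde{O}(N^{2/(C-2)})$, and the large part costs $N \cdot N^{-(C-2)/(C-2)} \cdot \mathrm{poly} = \widetilde{O}(1)$. This gives the claimed $\widetilde{O}(N^{2/(C-2)})$ bound. The main obstacle is the per-trial sampling of a single large polymer exactly from $\nu_v$ restricted to size larger than $k_0$ in time comparable to enumeration, while keeping the expected cost series convergent. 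We expect to handle it by sampling the size first, with the size distribution computed lazily by enumerating level by level until a uniform threshold is passed. The factor-of-two gap between the enumeration exponent $2k$ and the weight decay $\theta k$ with $C \ge 10$ leaves enough room for this series to converge.
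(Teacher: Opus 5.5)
There is a genuine gap in your first step. Drawing $Q_\emptyset \sim \mathrm{Bin}(N,\nu_v(\emptyset))$ needs the exact value $\nu_v(\emptyset)=1-\sum_{\gamma\ni v}w_\gamma$. Your small-versus-large split then needs the exact tail mass $\sum_{\gamma\ni v,\,\abs{\gamma}>k_0}w_\gamma$. \Cref{lem:connected-subgraph} gives the mass of any fixed level $k$ at cost roughly $(\e\Delta)^{2k}(q-1)^k$. But these sums run over polymers of every size up to the size of $v$'s component, i.e.\ up to $n$, so computing them exactly costs $\exp(\Omega(n))$ in general. The condition $w_\gamma\le\e^{-\theta\abs{\gamma}}$ only gives an upper bound on the tail, and a bound is not enough for an exact sample, which is what a perfect marginal sampler must produce. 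The same problem reappears in your large part. Picking the size ``proportional to the total weight at that size'', or scaling a uniform threshold for lazy enumeration, needs the unknown normalizer. Your rejection-with-retry variant does correctly sample a polymer conditioned on being large, but it cannot tell you how many of the $N$ trials are large rather than empty. Since $\emptyset$ carries almost all the mass, resolving each ``empty or large'' trial individually is $\Theta(N)$ work, which is too slow.

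The missing idea is never to compute the total mass and to let $\emptyset$ absorb the unknown slack. The paper does this with the truncation trick of \cite{CGGPSV21}:
\begin{enumerate}
\item Each trial gets an independent level $K\sim\mathrm{Geo}(1-\e^{-r})$ with $r=\theta-2-\log((q-1)\Delta)$, so the probability that $K\ge j$ decays like $\e^{-rj}$.
\item The level counts over all $N$ trials come from $O(\log N)$ sequential binomials.
\item For each realized level $k$ with count $X_k$, one samples $M(X_k;\widetilde\nu_k)$ with $\widetilde\nu_k(\gamma)=w_\gamma\e^{r\abs{\gamma}}$ for $\abs{\gamma}\le k$, and the leftover mass on $\emptyset$.
\end{enumerate}
Every probability used is a finite computable sum. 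Summing over $k\ge\abs{\gamma}$ returns exactly $w_\gamma$, once the indexing of the geometric is matched. The cost is enumeration up to the maximum of $N$ geometrics, which is $\sum_k\min(1,N\e^{-r(k-1)})\,k^8(\e\Delta)^{2k}=\widetilde{O}(N^{2\log(\e\Delta)/r})$, with $r\ge(C-2)\log(\e\Delta)$.

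Your two-regime plan can be repaired in the same spirit. Draw one multinomial over three kinds of outcome: the individual small polymers (masses $w_\gamma$), an ``uncertain'' slab of computable mass $\sum_{k>k_0}(\e\Delta)^{k-1}(q-1)^k\e^{-\theta k}$, and ``empty'' for the rest. Each uncertain trial picks $k>k_0$ proportionally to the level-$k$ bound. It then outputs a size-$k$ polymer $\gamma$ with probability $w_\gamma$ divided by that bound, and $\emptyset$ otherwise; that is, a rejection outputs $\emptyset$ instead of retrying. This is exact, and your balancing then goes through. One correction: the large part is really about $N\e^{-(C-3)Lk_0}$, not $(C-2)$. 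At your choice of $k_0$ that is $\widetilde{O}(N^{1/(C-2)})$ rather than $\widetilde{O}(1)$, which is still dominated by the $\widetilde{O}(N^{2/(C-2)})$ small part.
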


\begin{proof}
The sampling procedure for $M(N;\nu_u)$ generalizes the implementation of distribution $\nu_u$ described in~\cite{CGGPSV21}:
\begin{enumerate}
\item Sample $\{(k,X_k): k, X_k \ge 1\} \sim M(N;\mathrm{Geo}(1-\exp\tp{-r}))$, where $r = \theta -2 -\log (q-1)\Delta$.
\item For each entry $(k,X_k)$, enumerate all non-empty polymers $\gamma$ containing $u$ of size at most $k$, and then sample from multinomial distribution $M(X_k; \widetilde{\nu}_k)$, where $\widetilde{\nu}_k(\gamma) = w_\gamma \cdot \exp\tp{r \abs{\gamma}}$ for all such non-empty polymer $\gamma$, and the distribution $\widetilde{\nu}_k$ returns empty set with the remaining probability.
\end{enumerate}

We now show that both steps can be implemented efficiently. To implement the first step, we maintain a random variable $M$ initialized by $N$. In the $t$-th step, we let $X_t \sim \mathrm{Bin}(M,1-\exp\tp{-r})$ and update $M$ with $M - X_t$. The process terminates when $M$ reaches zero. 
Let $m = \max(Z_1,Z_2,\ldots,Z_N)$ where $Z_1,Z_2,\ldots,Z_N$ are independent copies of $\mathrm{Geo}(1-\exp\tp{-r})$.
The expected number of rounds is exactly $\E[]{m}$, which is $\Theta(\log N)$. 

By~\Cref{lem:connected-subgraph,lem:binomial}, the second part can be implemented within time $\widetilde{O}\tp{\E[]{\sum_{k=1}^m k^7 (\e \Delta)^{2k}}}$, which is in the order of $\widetilde{O}\tp{\E[]{m^8 (\e \Delta)^{2m}}} $.
By a standard argument, it holds
\begin{align*}
    \E[]{m^8 (\e \Delta)^{2m}} &\le \sum_{k \ge 1} \Pr[]{m \ge k} \cdot k^8 (\e \Delta)^{2k}\\
    &= \sum_{k \ge 1} \tp{1-\tp{1 - \exp\tp{-r(k-1)}}^N} \cdot k^8 (\e \Delta)^{2k}\\
    &\le \sum_{k \ge 1} \min\tp{1,N \exp\tp{-r(k-1)}} \cdot k^8 (\e \Delta)^{2k}\\
    &= \widetilde{O}(N^{\frac{2 \log \e \Delta}{r}}).
\end{align*}
By the choice of parameter, we have  $r \ge (C-2) \log \e \Delta$, thus the expected total runtime of the second step is bounded by $\widetilde{O}(N^{\frac{2}{C-2}})$.
\end{proof}

\begin{proposition}
    The exponent $\alpha = 1-\frac{4}{C-2}$ satisfies
    \begin{align*}
        \beta(\alpha) \le \sum_{\substack{\gamma \text{ containing } u}} w_\gamma^{1-\alpha} \cdot g_\alpha((\Delta + 1) \abs{\gamma}) = \sum_{\substack{\gamma \text{ containing } u}} (\Delta+1) \abs{\gamma} \cdot w^{1-\alpha}_\gamma < 1.
    \end{align*}
\end{proposition}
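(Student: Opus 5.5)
We need to bound $\beta(\alpha)=\sup\sum_S \nu(S)^{1-\alpha} g_\alpha(|S|)$ for the polymer sampler MMS-polymer. Here the support of $\nu_u$ is $\{\emptyset\}\cup\{\gamma\ni u\}$, and the set recursively revealed for a proposal $\gamma$ is $N^+(\gamma)$, of size at most $(\Delta+1)|\gamma|$. The empty proposal contributes $g_\alpha(0)=0$. Also $\nu_u(\gamma)=w_\gamma$ (at most $w_\gamma$ after the pinning check, and truncating only shrinks the reveal set), and $g_\alpha$ is nondecreasing. Together these give the first inequality $\beta(\alpha)\le \sum_{\gamma\ni u} w_\gamma^{1-\alpha} g_\alpha((\Delta+1)|\gamma|)$, with the supremum over pinnings handled uniformly.

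For the middle equality, the continuation set is $\mathcal{C}=\{g_v\}$: any non-ground spin at a revealed vertex causes an early exit. So $q_c=1$ and $g_\alpha(k)=\sum_{i<k}1=k$, which turns the bound into $\sum_{\gamma\ni u}(\Delta+1)|\gamma|\,w_\gamma^{1-\alpha}$.

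The main step is the final strict inequality $<1$. We group polymers by size $k$ and use \Cref{lem:connected-subgraph}: there are at most $(\mathrm{e}\Delta)^{k-1}$ connected sets, times $(q-1)^k$ spin assignments. Combined with \eqref{eq:polymer-condition}, $w_\gamma^{1-\alpha}\le \exp(-(1-\alpha)\theta k)$. With $1-\alpha=\frac{4}{C-2}$, the sum is at most
\[
\frac{\Delta+1}{\mathrm{e}\Delta}\sum_{k\ge1}k\exp\!\Big(-\big(\tfrac{4\theta}{C-2}-1-\log((q-1)\Delta)\big)k\Big).
\]
Since $\theta\ge C(1+\log((q-1)\Delta))$ and $C\ge 10$, we have $\frac{4\theta}{C-2}\ge \frac{4C}{C-2}L$ with $L=1+\log((q-1)\Delta)$, and $\frac{4C}{C-2}\ge 4$. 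Hence the exponent rate is at least $3L\ge 3$. The series $\sum_k k e^{-3Lk}=e^{-3L}/(1-e^{-3L})^2$ is then tiny, and together with the prefactor $\frac{\Delta+1}{\mathrm{e}\Delta}\le 1$ the total is below $1$, mirroring the calculation \eqref{eq:calc-polymer}.

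The expected obstacle is this exponent bookkeeping. The weight exponent shrinks to $1-\alpha$, so we must check that $\frac{4}{C-2}\theta$ still dominates the entropy $1+\log((q-1)\Delta)$ with margin. This is exactly why the hypothesis has $\theta$ scaling with $C$. If $\frac{4}{C-2}<1$ fails for small $C$, we simply take $1-\alpha=\min(1,\cdot)$, where the bound is only easier. A minor second check is that $\alpha\in(0,1)$ requires $C>6$; this is guaranteed by $C\ge10$.
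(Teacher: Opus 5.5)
Your proof is correct and follows the paper's route. The reduction $q_c=1\Rightarrow g_\alpha(k)=k$ (from $\mathcal{C}=\{g_v\}$), combined with $|N^+(\gamma)|\le(\Delta+1)|\gamma|$, gives the first two relations. The strict inequality is the same counting calculation as \eqref{eq:calc-polymer}, via \Cref{lem:connected-subgraph} and \eqref{eq:polymer-condition}, with $\theta$ replaced by $(1-\alpha)\theta=\frac{4\theta}{C-2}\ge 4\bigl(1+\log((q-1)\Delta)\bigr)$; the paper omits this as routine, and your residual rate $3L\ge 3$ and prefactor $\frac{\Delta+1}{\mathrm{e}\Delta}\le 1$ check out (your $\min(1,\cdot)$ aside is unnecessary since $C\ge 10$ already gives $1-\alpha\le\frac12$).
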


We note that the equation in the middle holds because $q_c = 1$ in function $g_\alpha(\cdot)$ (i.e. $\mathcal{C} = \set{g_v}$ in \Cref{alg:marginal-sampler-general}).
The calculation is almost the same as~\eqref{eq:calc-polymer} and we omit the proof.
As a corollary, for polymer models satisfying the polymer sampling condition up to a constant factor, we have a subquadratic time approximating algorithm for the partition function via the self-reducibility.
The latter part of~\Cref{cor:polymer} follows from the fact that the hardcore model on $\rho$-expander bipartite graphs with fugacity $\lambda \ge (3\Delta)^{C/\rho}$ can be expressed as a polymer model satisfying the polymer sampling condition(see~\cite{CGGPSV21}).
\begin{corollary}[Polymer model]\label{cor:polymer}
    Let $\Delta \ge 3$ and $C \ge 10$ be constants. There exist $\mathsf{FPRAS}$ for estimating the partition function of polymer models on graphs with maximum degree $\Delta$ and $n$ vertices satisfying~\eqref{eq:polymer-condition} with constant $C$ in expected time $\widetilde{O}(n^{1+\frac{4}{C-2}} \eps^{-\frac{8}{C-2}})$, where $\epsilon > 0$ denotes the multiplicative error bound.

    Additionally, there exists $\mathsf{FPRAS}$ for estimating the partition function of the hardcore model on $\rho$-expander bipartite graphs with maximum degree $\Delta$ and fugacity $\lambda \ge (3\Delta)^{C/\rho}$ in $\widetilde{O}\tp{n^{1+\frac{4}{C-2}} \cdot \eps^{-\frac{8}{C-2}}} $ time in expectation, where $C \ge 10$ and $\rho \in (0,1)$ are constants.
\end{corollary}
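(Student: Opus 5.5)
The plan is to combine the counting-to-inference reduction (\Cref{lem:reduction-counting-to-inference}) with the aggregate polymer sampler, whose efficiency comes from \Cref{lem:number-of-calls}.

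\textbf{Unbiased low-variance estimator.} Fix a vertex $v$ and a pinning $\tau$ (pinning vertices to their ground-state spins). We run the aggregate version of \Cref{alg:polymer-marginal-sampler-2} (i.e.\ \Cref{alg:aggregate-marginal-sampler-general} with $\nu=\nu_v$ and $q_c=1$) with $N = \lceil C_0 n/\eps^2\rceil$. This returns the frequency vector of $N$ i.i.d.\ perfect marginal samples of $\sigma_v$ under $\mu^\tau$. Dividing the count of a spin $c$ by $N$ gives an unbiased estimator $P$ of $\mu^\tau_v(c)$, with
\[
  \Var{P}=\frac{\mu_v^\tau(c)\bigl(1-\mu^\tau_v(c)\bigr)}{N}.
\]

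\textbf{Marginal lower bound.} To get $\Var{P}/\E{P}^2\le \eps^2/n$, we need $\mu^\tau_v(c)$ bounded below by a constant. The natural choice of completion $\sigma$ in \Cref{lem:reduction-counting-to-inference} is the all-ground-state configuration (the empty polymer set). Along the chain rule every conditional marginal is then $\mu^\tau_v(g_v)=\Pr{\Gamma(v)=\emptyset\mid\cdot}$. This is at least $1-\sum_{\gamma\ni v}w_\gamma\ge 1/2$ by the same calculation as \eqref{eq:calc-polymer}. Hence $N=O(n/\eps^2)$ suffices, and the reduction outputs $\widehat Z = \prod_i P_i^{-1}$, using that the ground-state configuration has weight $1$ (the empty product).

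\textbf{Running time.} By the multinomial-sampling lemma, \Cref{line:contraction-set-general-aggregate} costs $\widetilde O(N^{2/(C-2)})$. The proposition preceding the corollary gives $\beta(\alpha)<1$ with $\alpha=1-\frac{4}{C-2}$. Since $2/(C-2)\le 4/(C-2)=1-\alpha$, the hypothesis of \Cref{lem:number-of-calls} holds, so each aggregate call costs $\widetilde O(N^{4/(C-2)})$. Multiplying by the $n$ calls gives a total of
\[
  \widetilde O\bigl(n\cdot (n/\eps^2)^{4/(C-2)}\bigr)=\widetilde O\bigl(n^{1+\frac{4}{C-2}}\eps^{-\frac{8}{C-2}}\bigr).
\]
One detail is that the pinning must be passed through the aggregate recursion so that conditional samplers are simulated. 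This works because \Cref{alg:polymer-marginal-sampler-2} only reads $\tau$ through membership queries.

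\textbf{Hardcore on expanders.} Following \cite{CGGPSV21}, express the partition function, up to an exponentially small additive error from configurations far from both ground states, as a sum of two polymer partition functions, one for each side of the bipartition as the ground state. Each satisfies \eqref{eq:polymer-condition} with constant $C$ once $\lambda\ge(3\Delta)^{C/\rho}$. Estimate both with the previous step and add the results.

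\textbf{Main obstacle.} I expect the hard step to be checking that the polymer-condition constant transfers with the claimed $C$ and that the truncation error is at most $\eps Z$. Here I would rely directly on \cite{CGGPSV21}, choosing $\eps$-dependence only through a $\log(1/\eps)$ requirement on $n$ (or brute force when $n$ is small).
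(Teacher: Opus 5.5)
Your proposal is correct and follows essentially the same route as the paper. The per-marginal cost comes from simulating the polymer sampler in aggregate, via the multinomial-sampling lemma, the bound $\beta(\alpha)<1$ at $\alpha = 1-\frac{4}{C-2}$, and \Cref{lem:number-of-calls}. This is fed into the counting-to-inference reduction (\Cref{lem:reduction-counting-to-inference}), and the expander hardcore case is delegated to \cite{CGGPSV21}. Your explicit choice of the all-ground-state completion, which keeps every conditional marginal at least $1/2$ and gives $Z = 1/\mu(\emptyset)$, is a useful detail that the paper leaves implicit under ``self-reducibility.''
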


\paragraph{Soft-constraint spin systems with sufficiently weak interactions}
Let $\+S = (G,\lambda,A)$ be a $q$-spin system on graph $G=(V,E)$. The spin system is said to exhibit \emph{sufficiently weak interactions} if for any pair of spins $(u,v) \in [q]^2$, it holds
\begin{align*}
    1-\frac{1}{2\Delta}\le A(u,v) \le 1.
\end{align*}
For such spin systems, we consider the following perfect marginal sampler proposed in~\cite{LWY26}.

\begin{algorithm}[H]
    \KwIn{An undirected graph $G=(V, E)$, vertex $u \in V$ and pinning $\tau \in [q]^\Lambda$ where $\Lambda \subseteq V$;}
    \KwOut{A spin $\sigma_u \in [q]$ following the distribution of $\mu_u^\tau$;}
    \caption{Marginal sampler for spin systems with weak constraints: MS-spin($G,u,\tau$)}
    \label{alg:LWY-sampler}
    
    \BlankLine
    \If{$u \in \Lambda$}{\Return $\tau_u$\;}
    $\tau_{\mathrm{new}} \gets \emptyset$\;
    \ForEach{$v \in N(u)$}{\label{line:begin}
            $X \sim \mathrm{Ber}(1/2\Delta)$\;\label{line:bin-most-LWY}
            \If{$X = 1$}{
                $c \gets \text{MS-spin}(G,v,\tau \cup \tau^{\mathrm{new}})$\;
                $\tau_{\mathrm{new}} \gets \tau_{\mathrm{new}} \cup (v \gets c)$\;\label{line:oracle}
            }    
    }
    Sample a spin $c \in [q]$ with probability proportional to $\lambda_u \in \mathbb{R}_{\ge 0}^q$\;\label{line:assign-spin}
    $\text{accept} \gets \texttt{true}$\;
    \ForEach{$v \in \Lambda_{\mathrm{new}} = \mathrm{supp}(\tau)$}{
        \If{$\mathrm{Ber}(2\Delta \cdot (1-A(c,\tau_{\mathrm{new},v})) = 1$}{
             $\text{accept} \gets \texttt{false}$\;\label{line:reject}
        }
    }
    \If{$\text{accept} = \texttt{true}$}{\label{line:accept-branch}
        \Return $c$\;
    }
    \Return MS-spin($G,u,\tau \cup \tau_{\mathrm{new}}$)\;
\end{algorithm}

To fit into the framework of the abstract marginal sampler (\Cref{alg:marginal-sampler-general}), we translate the marginal sampler into the following form.

\begin{algorithm}[H]
    \KwIn{An undirected graph $G=(V, E)$, vertex $u \in V$ and pinning $\tau \in [q]^\Lambda$ where $\Lambda \subseteq V$;}
    \KwOut{A spin $\sigma_u \in [q]$ following the distribution of $\mu_u^\tau$;}
    \caption{Modified marginal sampler for spin systems: MMS-spin($G,u,\tau$)}
    \label{alg:LWY-sampler-modified}
    
    \BlankLine
    \If{$u \in \Lambda$}{\Return $\tau_u$\;}
    Sample $S \sim \mathrm{Bin}(N(u),1/2\Delta)$\;
    \If{$S \neq \emptyset$}{
        $S \gets S \cup \{u\}$\;
    }
    \ForEach(\tcp*[f]{Enumerate $u$ last whenever $S$ is non-empty}){$v \in S$}{
        $c \gets \text{MMS-spin}(G,v,\tau)$\;
        $\tau \gets \tau \cup (v \gets c)$\;
    }
    \Return a spin $\sigma_u$ in $[q]$ according to the following rules:
    \begin{enumerate}
        \item Sample a spin $c \in [q]$ with probability proportional to $\lambda_u \in \mathbb{R}_{\ge 0}^q$;
        \item With probability $\prod_{v \in S \setminus \{u\}} (2\Delta \cdot (1-A(c,\tau_{v})))$, set $\sigma_u \gets c$; otherwise, set $\sigma_u \gets \tau_u$.
    \end{enumerate}
\end{algorithm}

The perfect marginal sampler in~\Cref{alg:LWY-sampler-modified} is easily seen to fit into the framework of abstract marginal sampler in~\Cref{alg:marginal-sampler-general}. The distribution $\nu$ is the law of subset $S \subseteq N^+(u)$ generated by adding each vertex $v \in N(u)$ into $S$ independently with probability $\frac{1}{2\Delta}$, and adding vertex $u$ into $S$ if $S$ is non-empty. The continuation set $\+C = [q]$. In this case, it holds
\begin{align*}
    \beta(\alpha) = \sum_{k=1}^{\Delta} \binom{\Delta}{k} \frac{(2\Delta-1)^{(\Delta-k) (1-\alpha)}}{\tp{2\Delta}^{\Delta (1-\alpha)}}g_\alpha(k+1),
\end{align*}
where $g_\alpha(k) = \frac{q^{\alpha k} - 1}{q^{\alpha} - 1}$.
We show that, exponent $\alpha$ could be also improved to $\Omega\tp{\frac{1}{\log q\Delta}}$. 

\begin{proposition}\label{prop:two-spin}
     Let $\Delta \ge 3$ be a constant.
    The exponent $\alpha = \frac{1}{1000 \log q\Delta}$ satisfies $\beta(\alpha) < 1$.
\end{proposition}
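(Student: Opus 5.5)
The plan is to treat $\beta(\alpha)$ as a small perturbation of $\beta(0)$, which equals $\E{\abs{S}}$ (the $k=0$ term is absent, so no $S=\emptyset$ contribution). The key observation is that $\beta(0)$ is bounded away from $1$ uniformly in $\Delta\ge 3$. Then, with $\alpha = \frac{1}{1000\log q\Delta}$, every $\alpha$-dependent factor will be shown to be $1+O(10^{-3})$ uniformly in $q$ and $\Delta$.

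Set $p \coloneqq \frac{1}{2\Delta}$ and let $K\sim \mathrm{Bin}(\Delta,p)$. The $k$-th summand of $\beta(\alpha)$ is $\binom{\Delta}{k}\big(p^k(1-p)^{\Delta-k}\big)^{1-\alpha} g_\alpha(k+1)$. I will bound its three pieces as follows.

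1. Since $g_\alpha(k+1)=\sum_{i=0}^{k} q^{\alpha i}$, we have $g_\alpha(k+1)\le (k+1)q^{\alpha k}$.
2. For the weight, $\big(p^k(1-p)^{\Delta-k}\big)^{-\alpha}\le (2\Delta)^{\alpha k}(1-p)^{-\alpha\Delta}$.
3. Since $(1-\frac{1}{2\Delta})^{-\Delta}\le 2$ for $\Delta\ge 3$, the last factor is at most $2^\alpha$.

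Writing $r\coloneqq (2q\Delta)^{\alpha}=\exp\big(\frac{\log 2q\Delta}{1000\log q\Delta}\big)\le e^{1/500}$ (using $\log 2q\Delta \le 2\log q\Delta$ since $q\Delta\ge 6$), these combine to
\begin{align*}
\beta(\alpha)\le 2^{\alpha}\,\E{(K+1)\,r^{K}\,\mathbf 1[K\ge 1]} .
\end{align*}

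Next I evaluate this expectation with the generating function of $K$:
\begin{align*}
\E{(K+1)r^K\mathbf 1[K\ge1]} = (1-p+pr)^{\Delta} + \Delta p r (1-p+pr)^{\Delta-1} - (1-p)^{\Delta}.
\end{align*}
Use $(1-p+pr)^{\Delta}\le \exp(\Delta p(r-1))=\exp((r-1)/2)\le 1+10^{-3}$ and $\Delta p r = r/2$. This gives the bound $(1+10^{-3})(1+\frac{r}{2}) - (1-\frac{1}{2\Delta})^{\Delta}$.

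Finally, $(1-\frac{1}{2\Delta})^{\Delta}$ is decreasing in $\Delta$ with limit $e^{-1/2}\approx 0.6065$, so it is at least $0.606$ for every $\Delta\ge 3$. The unperturbed value $\beta(0)=\frac32-(1-\frac1{2\Delta})^\Delta$ is therefore at most about $0.894$ for every $\Delta\ge 3$. The perturbation factors $r$, $(1+10^{-3})$ and $2^\alpha\le 2^{1/1000}$ each move the bound by at most $O(10^{-2})$, so the total stays below $1$.

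The main obstacle is purely quantitative: the argument needs the gap $1-\beta(0)\approx 0.1$ to hold uniformly in $\Delta$. This requires the explicit monotonicity of $(1-\frac{1}{2\Delta})^{\Delta}$, together with the observation that the enumeration of $u$ adds only one extra call per nonempty $S$, which is what costs the $+1$ in $g_\alpha(k+1)$. The remaining work is careful bookkeeping of the constants against the $\frac{1}{1000}$ in $\alpha$.
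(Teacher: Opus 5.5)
Your route is essentially the paper's. Both use the bound $g_\alpha(k+1)\le (k+1)q^{\alpha k}$ and then the closed form $\sum_{k\ge 1}(k+1)\binom{\Delta}{k}x^{\Delta-k}y^k=(x+y)^{\Delta-1}(x+(\Delta+1)y)-x^{\Delta}$. The paper applies it with $x=A=(1-\frac{1}{2\Delta})^{1-\alpha}$ and $y=B=(2\Delta)^{\alpha-1}q^{\alpha}$, then bounds $A$ and $B$ numerically. You first pull out $(1-p)^{-\alpha\Delta}\le 2^{\alpha}$, so $x=1-p$, $y=pr$, and the comparison with $\beta(0)=\mathbb{E}|S|$ becomes explicit.

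That normalization is fine, but one step is false as stated. The quantity $(1-\frac{1}{2\Delta})^{\Delta}$ is \emph{increasing} in $\Delta$, not decreasing, because $\Delta\log(1-\frac{1}{2\Delta})=-\frac12-\frac{1}{8\Delta}-\frac{1}{24\Delta^{2}}-\cdots$. It therefore approaches $e^{-1/2}$ from below, and your claimed lower bound $0.606$ fails for small $\Delta$; at $\Delta=3$ it equals $125/216\approx 0.579$. Consequently $\beta(0)=\frac32-(1-\frac{1}{2\Delta})^{\Delta}$ is largest at $\Delta=3$, where it is about $0.921$, not $0.894$.

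The argument still closes once you use the correct worst case $(1-\frac{1}{2\Delta})^{\Delta}\ge 125/216$. Your final expression then gives $\beta(\alpha)\le 2^{1/1000}\bigl[(1.0011)(1.501)-0.5787\bigr]\approx 0.925<1$. A harmless nit: $r\le e^{1/500}$ only yields $\exp((r-1)/2)\le 1.0011$, not $1+10^{-3}$. With the monotonicity claim replaced by the evaluation at $\Delta=3$, the proof is correct.
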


\begin{proof}[Proof of~\Cref{prop:two-spin}]
By a straightforward calculation, $g_\alpha(k) \le q^{\alpha (k-1)} \cdot k$. 
Therefore, $\beta(\alpha)$ can be bounded by
\begin{align*}
    \beta(\alpha) \le \sum_{k=1}^{\Delta} \binom{\Delta}{k} \frac{(2\Delta-1)^{(\Delta-k) (1-\alpha)}}{(2\Delta)^{\Delta (1-\alpha)}} (k+1) q^{\alpha k}.
\end{align*}
For simplicity, let $A = \tp{\frac{2\Delta-1}{2\Delta}}^{1-\alpha}$, $B = \tp{\frac{1}{2\Delta}}^{1-\alpha} \cdot q^{\alpha}$. We have
\begin{align}\label{eq:beta-bound}
    \beta(\alpha) = \sum_{k=1}^{\Delta} (k+1) \binom{\Delta}{k} A^{\Delta-k} B^k = (A+B)^{\Delta - 1} (A+(\Delta+1)B) - A^\Delta
\end{align}
We first claim the following numeric bounds for $A$ and $B$:
\begin{align}\label{eq:numeric-bounds}
1 - \frac{1}{2\Delta} \le A \le 1 - \frac{0.99}{2\Delta} \text{ and } B \le \frac{1.01}{2\Delta}.
\end{align}
Plugging into~\eqref{eq:beta-bound}, we are able to show that $\beta(\alpha) \le 0.99$ for all $\Delta \ge 3$. 

It remains to show~\eqref{eq:numeric-bounds}. By a straightforward calculation, it holds
\begin{align*}
    B = \frac{(2q\Delta)^{\alpha}}{2\Delta} = \frac{\exp\tp{\alpha \log (2q\Delta)}}{2\Delta} \le \frac{1.01}{2\Delta}.
\end{align*}
Use inequality $1-x \le (1-x)^{1-\alpha} \le 1-(1-\alpha) x$ where $x, \alpha \in (0,1)$, we have
\begin{align*}
    1-\frac{1}{2\Delta}\le A \le 1 - \frac{1-\alpha}{2\Delta} \le 1-\frac{0.99}{2\Delta}.
\end{align*}
This concludes the proof.
\end{proof}

As a corollary, we have a sub-linear time algorithm to proceed $N=\Theta(n/\epsilon^2)$ perfect marginal samplers in a single batch for spin systems with soft constraints, thus implying a subquadratic time approximating algorithm for partition function.

\begin{corollary}[Soft-constraint spin system with weak interactions]\label{cor:spin}
    Let $\Delta \ge 3$ be a constant. There exist $\mathsf{FPRAS}$ for estimating the partition function of soft-constraint spin systems with sufficiently weak interaction on graphs with maximum degree $\Delta$ and $n$ vertices in expected time $\widetilde{O}(n^{2 - \frac{1}{1000 \log q\Delta}} \cdot \epsilon^{2 - \frac{1}{1000 \log q\Delta}})$, where $\epsilon > 0$ denotes the multiplicative error bound.
\end{corollary}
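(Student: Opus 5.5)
The plan is to combine the aggregate sampler with the standard reduction. First, I will fit Algorithm~\ref{alg:LWY-sampler-modified} into the abstract framework of Algorithm~\ref{alg:marginal-sampler-general}. Then Lemma~\ref{lem:number-of-calls} together with Proposition~\ref{prop:two-spin} gives a sublinear-time batched marginal sampler. Finally, Lemma~\ref{lem:reduction-counting-to-inference} turns this into an $\mathsf{FPRAS}$.

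\textbf{Step 1: an unbiased marginal estimator from batched samples.} Fix a vertex $v$, a spin $c$, and a pinning $\tau$. Pinning keeps the system inside the weak-interaction class, so the sampler applies to every conditional instance. Running aggregate-MS-abstract (Algorithm~\ref{alg:aggregate-marginal-sampler-general}) with $N$ samples returns the frequency vector of $N$ i.i.d.\ draws from $\mu^\tau_v$. This is because the multinomial split over $S$ and the branch splitting over the returned spins faithfully simulate $N$ independent runs. The estimator is $P \coloneqq r_c/N$, which satisfies $\E{P}=\mu^\tau_v(c)$ and $\Var{P}=\mu^\tau_v(c)(1-\mu^\tau_v(c))/N$.

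\textbf{Step 2: the required accuracy.} The weak-interaction condition forces every entry of $A$ to lie in $[1-\frac{1}{2\Delta},1]$. Hence each conditional marginal is within a constant factor of $\lambda_v(c)/\sum_{c'}\lambda_v(c')$. Assuming $\lambda_v$ has positive entries bounded by constants (zero entries can be discarded), these marginals are bounded below by a constant $b>0$. Then $\Var{P}/\E{P}^2 \le 1/(bN)$, and choosing $N=\Theta(n/\epsilon^2)$ gives the $\epsilon^2/n$ requirement of Lemma~\ref{lem:reduction-counting-to-inference}.

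\textbf{Step 3: running time.} I need the hypotheses of Lemma~\ref{lem:number-of-calls}. The distribution $\nu$ is supported on subsets of $N^+(u)$, which is a set of constant size. So $M(N;\nu)$ can be sampled in polylogarithmic time, for example by sampling $\mathrm{Bin}(N,\cdot)$ counts through the chain rule using Lemma~\ref{lem:binomial}. Proposition~\ref{prop:two-spin} gives $\beta(\alpha)<1$ for $\alpha=\frac{1}{1000\log q\Delta}$. Lemma~\ref{lem:number-of-calls} therefore yields expected time $\widetilde O(N^{1-\alpha})$ per marginal estimate. Each call also reconstructs the final spin, but this happens once per branch and costs $\widetilde O(1)$, since $\tau^\star$ only matters on $N(u)$. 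Multiplying by the $O(n)$ marginal estimates from Lemma~\ref{lem:reduction-counting-to-inference} gives $\widetilde O(n\cdot(n/\epsilon^2)^{1-\alpha})$ in total.

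\textbf{Expected obstacle.} The main subtlety is the validity of the abstract fit and its recursion. The early-exit and final sampling rules in Algorithm~\ref{alg:LWY-sampler-modified} must depend only on spins already revealed in $S$. In addition, $u$ itself is resolved recursively within $S$, so the $q$-way branching counted by $g_\alpha$ must cover this step. That is why $g_\alpha(k+1)$ appears in $\beta(\alpha)$.

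The expected-time bound is stated for a fixed $N$ and composes over the $O(n)$ calls only in expectation. Markov's inequality, or aborting runs that exceed a constant multiple of the expected time, preserves the $3/4$ success probability at a constant-factor cost.

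Finally, there is an exponent mismatch. My bound has $\epsilon^{-2(1-\alpha)}$, whereas the statement writes $\epsilon^{2-\alpha}$. I would report $\epsilon^{-2(1-\alpha)}$ and treat the printed $\epsilon$-exponent as a typo.
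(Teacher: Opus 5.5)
Your proposal is correct and follows the paper's own one-line derivation: fit \Cref{alg:LWY-sampler-modified} into \Cref{alg:marginal-sampler-general}, get a sublinear batched sampler from \Cref{lem:number-of-calls} together with \Cref{prop:two-spin}, and plug $N=\Theta(n/\epsilon^2)$ into \Cref{lem:reduction-counting-to-inference}; you are also right that the $\epsilon$-exponent should read $-2(1-\alpha)$. The only thing to tidy is Step~2, where you don't need bounded external fields: choose the reference configuration with $\sigma_v\in\arg\max_c\lambda_v(c)$, so every chain-rule marginal is at least $(1-\frac{1}{2\Delta})^{\Delta}/q\ge\frac{1}{2q}$.
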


\subsection{Aggregate perfect marginal sampler for implicit perfect marginal sampler}\label{sec:implicit}
Unfortunately, not all perfect marginal samplers fall into the framework of~\Cref{alg:marginal-sampler-general} and~\Cref{thm:general-2}. 
For the perfect marginal sampler for hypergraph independent set, the expected number of visiting neighbors can exceed $1$, failing to satisfy the requisite in~\Cref{thm:general-explicit}.
For the perfect marginal sampler for proper colorings  in~\cite{LWY26}, it does not recursively reveal the colors of neighboring vertices. Instead, it asks for certificates whether a particular color is assigned to a neighbor vertex, which can not be expressed as~\Cref{alg:marginal-sampler-general}.

To resolve this challenge, we adopt the \emph{probabilistic automaton} framework. As a general computational model involving randomness, it naturally subsumes all perfect marginal samplers.
\begin{definition}[\cite{Rabin63}]
    A probabilistic automaton is a tuple $(\+I,\+S,i,\delta,\perp)$ where
    \begin{itemize}
    \item $\+I$ denotes the collection of inputs;
    \item $\+S$ denotes the collection of states and $i \in \+S$ denotes the initial state;
    \item $\delta : (S \setminus \perp) \times \+I \mapsto \+P(S)$ denotes the transition rules; Specifically, given input $I \in \+I$, the automaton transitions from state $x$ to other states according to the law of distribution $\delta_I(x,\cdot)$. 
    \item $\perp \subseteq \+S$ denotes the absorbing states reaching which the automaton would terminate and return the current state.
    \end{itemize}
\end{definition}

We now show that for probabilistic automaton with bounded width, efficient implementation and geometric tail, we can proceed $N$ independent runs jointly in sub-linear time.

\begin{theorem}\label{thm:batch-automaton}
        If the probabilistic automaton satisfies the following criteria:
        \begin{enumerate}
        \item (Bounded width) For each state $x \in \+S$ and input $I \in \+I$, the cardinality of the support of $\delta_I(x)$ is bounded by a constant $D \ge 2$;
        \item (Efficient implementation) For any transition path $x_0 = i, x_1,\ldots,x_\ell$ and input $I \in \+I$, the transition kernel $\delta_I(x_\ell)$ can be constructed in $O(\ell^c)$ time for some constant $c \ge 0$.
        \item (Exponential decay) Let $T$ be the stopping time of the probabilistic automaton. Specifically, let $x_0=i,x_1,\ldots,x_T$ be a transition path with $x_T \in \perp$. There exists $C > 0$ and $\alpha \in (0,1)$ such that for any input $I \in \+I$, it holds
        \begin{align*}
            \forall t \ge 0, \quad \Pr[]{T \ge t} \le C \cdot \alpha^{t}.
        \end{align*}
        \end{enumerate}
        Then, there is an algorithm simulating $N$ independent probabilistic automatons jointly, but only tracks the number of samples entering each absorbing state in $\widetilde{O}(N^{1-\eps})$ runtime, where $\eps = \frac{\log 1/\alpha}{\log D+\log 1/\alpha}$.
\end{theorem}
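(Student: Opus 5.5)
We will simulate the $N$ automata jointly as a branching computation over \emph{transition paths}. The joint algorithm maintains a collection of pairs $(x_0x_1\cdots x_\ell, m)$, meaning that $m$ of the $N$ runs have followed exactly the path $x_0=i,x_1,\dots,x_\ell$. We start from the single pair $(i,N)$. To process a pair $(x_0\cdots x_\ell,m)$ with $x_\ell\notin\perp$, we build $\delta_I(x_\ell)$ in $O(\ell^c)$ time using the efficient-implementation assumption. We then draw $(m_y)_y\sim M(m;\delta_I(x_\ell))$. By bounded width the support has at most $D$ points, so by \Cref{lem:binomial} this costs $O(D\log^2 N)$ time via sequential conditional binomials. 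We push $(x_0\cdots x_\ell y, m_y)$ for every $m_y\ge 1$. A pair with $x_\ell\in\perp$ just adds $m$ to the counter of $x_\ell$. Correctness follows exactly as for \Cref{alg:aggregate-marginal-sampler-general}: conditional on the counts at a path, the next-step counts are multinomial. Hence the multiset of $N$ trajectories has the law of $N$ independent runs, and the absorbing counts are distributed correctly.

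For the running time, the cost is $\widetilde{O}(\ell^c)$ per processed pair, where $\ell$ is the depth. Long paths are rare, so we will show $\ell=O(\log N)$ with high probability and absorb $\ell^c$ into the $\widetilde{O}$. The core is to bound the expected number of pairs (nodes with $m\ge1$) at depth $t$. Let $\mathcal{P}_t$ be the set of non-absorbed length-$t$ paths, and for $p\in\mathcal{P}_t$ let $\pi(p)$ be its probability under a single run. A path $p$ is present iff at least one of the $N$ runs follows it, which happens with probability at most $\min(1,N\pi(p))$. Therefore
\[
\E{\#\text{nodes at depth } t}\le \sum_{p\in\mathcal{P}_t}\min\bigl(1,N\pi(p)\bigr)\le \min\Bigl(|\mathcal{P}_t|,\; N\sum_{p\in\mathcal P_t}\pi(p)\Bigr)\le \min\bigl(D^t,\; NC\alpha^{t}\bigr).
\]
Here $|\mathcal P_t|\le D^t$ by bounded width, and $\sum_{p\in\mathcal P_t}\pi(p)=\Pr{T> t}\le C\alpha^t$ by the exponential decay assumption.

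Summing over $t$, the two bounds cross at $t^\star=\log N/(\log D+\log(1/\alpha))$ up to an additive constant. Both sides are geometric, so the total is $O(D^{t^\star}+NC\alpha^{t^\star})=O(N^{\log D/(\log D+\log 1/\alpha)})=O(N^{1-\eps})$ with $\eps=\frac{\log 1/\alpha}{\log D+\log 1/\alpha}$, as claimed.

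The remaining obstacle is the polynomial factor $\ell^c$, because paths can in principle be long. The expected contribution of depth $t$ is at most $t^c\min(D^t,NC\alpha^t)$, which is still summable and gives only an extra $\mathrm{polylog}(N)$ factor: for $t>t^\star$ the term decays geometrically, so $\sum_t t^c NC\alpha^t=\widetilde{O}(N^{1-\eps})$. Storing paths explicitly would cost $\ell$ per child copy. To avoid this we store them as a trie, with parent pointers, so each new node costs $O(1)$ beyond the $O(\ell^c)$ kernel construction. Finally, \Cref{lem:binomial} needs an $\widetilde{O}(\sqrt N)$ initialization, which is dominated since $1-\eps\ge 1/2$ whenever $\alpha\ge 1/D$. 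If instead $\alpha<1/D$, we can replace $\alpha$ by $1/D$ in the argument, or simply note that the initialization is a lower-order term only when $\eps\le1/2$; the stated bound should be read in the regime where $\eps\le 1/2$, or with the additive $\widetilde{O}(\sqrt N)$.
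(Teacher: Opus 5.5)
Your proposal is correct and follows essentially the same argument as the paper. The paper aggregates runs by state at each time step rather than by path, but it bounds the number of active states at depth $t$ by the number of realized paths. It reaches the same $\min(D^t, NC\alpha^t)$ estimate, via Jensen rather than your termwise $\min(1,N\pi(p))$ bound, and the same crossover at $t^\star\approx\log N/\log(D/\alpha)$.

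Your remark about the $\widetilde{O}(\sqrt N)$ initialization cost of \Cref{lem:binomial} is a valid refinement that the paper leaves out. It only affects the stated bound when $\eps>1/2$, i.e.\ when $\alpha<1/D$.
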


\begin{proof}
        Suppose we secretly run $N$ independent copies of the probabilistic automaton on input instance $I \in \+I$. Let $x^{(i)}_{0}, x^{(i)}_1,\ldots,x_{T_i}^{(i)}$ with $x_{T_i}^{(i)} \in \perp$ be the transition path of the $i$-th run.
        
         To proceed these $N$ independent runs jointly, we maintain sparse mappings $L_t = \{(x,M_{x,t})\}_{x \in \+S}$ for each time step $t \ge 0$. Here, $x$ denotes a state in $\+S$, and $M_{x,t}$ denotes the occurrences in $X^{(1)}_t, X_t^{(2)}, \ldots, X_t^{(N)}$ (If $X_t^{(i)}$ does not exist, we skip this item). We also maintain a sparse mapping $O = \{(x,\mathrm{occ})\}_{x \in \perp}$ to record the final frequencies of the absorbing states. For $t = 0$, initialize $L_0 = \{(i,N)\}$. Starting from $t = 0$, for each pair $(x,\mathrm{occ}) \in L_t$, we consider the following cases:
        \begin{enumerate}
        \item If $x \in \perp$ is one of the absorbing states, we update the entry $(x,z)$ in $O$ by $(x,z+\mathrm{occ})$;
        \item Otherwise, we construct the transition kernel $\delta_I(x_\ell,\cdot)$ and then draw from the multinomial distribution $M(\mathrm{occ};\delta_I(x_\ell))$. Suppose state $x_{\mathrm{next}}$ is picked $y$ times in the multinomial distribution. We update the entry $(x_{\mathrm{next}},M_{x_{\mathrm{next}},t+1})$ in $L_{t+1}$ by $(x_{\mathrm{next}},M_{x_{\mathrm{next}},t+1} + y)$.
        \end{enumerate}
        The process terminates when $L_{t}$ becomes empty.

        Let $N_\ell$ be the number of states with non-zero occurrences in $L_\ell$ and $P_\ell$ be the number of transition paths starting from $x_0 = i$ with non-zero probability. We notice that
        \begin{align*}
            \E[]{L_\ell} &\le \sum_{\substack{(x_0,x_1,\ldots,x_\ell) \in \+S^\ell\\x_0 = i}} \tp{1 -\tp{1- \prod_{i=1}^\ell \delta_{I}(x_{\ell-1},x_\ell)}^N} \\
            (\text{convexity of }x \mapsto x^N)\quad &{\le} P_\ell \cdot \tp{1 - \tp{1-\frac{1}{P_\ell} \sum_{\substack{(x_0,x_1,\ldots,x_\ell) \in S^\ell\\ x_0=i}} \prod_{i=1}^\ell \delta_I(x_{\ell-1},x_\ell)}^N}\\
            (\text{by bounded width, }P_\ell \le D^\ell)\quad &\le D^\ell \cdot \tp{1-\tp{1 - \frac{1}{D^\ell} \Pr[]{T \ge \ell}}^N}\\
            (\text{exponential decay})\quad &\le D^\ell \cdot \min\tp{1, CN \cdot \tp{\frac{\alpha}{D}}^\ell}.
        \end{align*}
        
        Together with~\Cref{lem:binomial}, the time complexity of algorithm is bounded by the following up to logarithmic factors:
        \begin{align*}
            \sum_{\ell \ge 0} \ell^c \cdot \E[]{L_\ell}  &\le \sum_{\ell \ge 0} \ell^c D^\ell \cdot \min\tp{1,C N \cdot \tp{\frac{\alpha}{D}}^\ell}\\
            (\text{set }B = \left\lceil \frac{\log (CN)}{\log (D/\alpha)}\right\rceil)\quad & \le \sum_{\ell=0}^B \ell^c D^\ell +\sum_{\ell=B}^{\infty} CN \cdot \ell^c \alpha^\ell\\
            &= \Theta(B^c D^B + CN \cdot B^c \alpha^B)\\
            &= \widetilde{\Theta}(N^{\frac{\log D}{\log D+\log 1/\alpha}}).
        \end{align*}
        This concludes the proof.
\end{proof}

\subsubsection{Applications}

In this section, we apply~\Cref{thm:batch-automaton} to perfect marginal samplers in various settings.
The high-level idea is to translate a recursive algorithm into a non-recursive one using \emph{call stack}, and then express it in the framework of probabilistic automaton. We demonstrate this transformation for the hardcore model in detail, and provide sketches for other models of interest.

\paragraph{Hardcore model}
We consider the seminal Anand-Jerrum marginal sampler (\Cref{alg:AJ-marginal-sampler-new}) as an illustrative example. Let $S$ denote the call stack maintaining the state of unresolved calls. Each element in $S$ takes the form $((u, \tau), \text{status})$, where the pair $(u, \tau)$ represents an unresolved function call with its associated parameters, and \texttt{status} indicates the current execution phase of that call.

The status of a function call $(G, u, \tau)$ is categorized into one of the following four phases:
\begin{description}
    \item[Initialization (\texttt{init}):] The entry point of an unresolved function call.
    \item[Inspection (\texttt{inspect}, $\tau$):] The stage where the algorithm is ready to inspect neighbor vertices.
    \item[Holding (\texttt{hold}, $\tau, u$):] A suspended state awaiting the return value from a recursive call.
    \item[Halting (\texttt{halt}, $0/1$):] The termination of a single function call with a return value of $0$ or $1$.
\end{description}
Initially, the stack $S = \{((u,\emptyset),\texttt{init})\}$. In each step, the process updates the stack $S$ according to its top (or top few) elements. Let $((w,\tau),\texttt{status})$ be the top element in the stack $S$, the process complies the following rules:
\begin{enumerate}
\item If \texttt{status} $=$ \texttt{init}, the process tosses a biased coin to either update the top element of stack $S$ by $((w,\tau), (\texttt{halt}, 0))$ with probability $\frac{1}{1+\lambda}$ or otherwise $((w,\tau),(\texttt{inspect}, \tau))$;
\item If \texttt{status} $=$ (\texttt{inspect}, $\tau^\star$) where $\tau^\star \in \{0,1\}^{\Lambda^\star}$, there are two cases to consider:
\begin{enumerate}
\item If $N(w) \subseteq \Lambda^\star$, update the top element of the stack $S$ by $((w,\tau), (\texttt{halt}, 1))$;
\item Otherwise, select $v \in N(w) \setminus \Lambda^\star$, update the top element of stack $S$ by $((w,\tau),(\texttt{hold}, \tau^\star, v))$, and then push $((w,\tau^\star),\texttt{init})$ into the stack.
\end{enumerate}
\item If \texttt{status} $=$ (\texttt{halt}, $\mathrm{sgn}$), where $\mathrm{sgn} \in \{0,1\}$, we first pop the top element from the stack $S$. There are two cases to consider:
\begin{enumerate}
    \item If $S$ is empty, the whole stochastic automaton reaches the termination state $\perp_{\mathrm{sgn}}$;
    \item Otherwise, the top element of the current stack $S$ has the form of $((w^\star,\tau^\star),(\texttt{hold}, \tau^\circ, v))$. 
    \begin{enumerate}
    \item If $\mathrm{sgn} = 1$, we update the top element by $((w^\star,\tau^\star),(\texttt{halt},0))$;
    \item Otherwise, replace by $((w^\star,\tau^\star), (\texttt{inspect},\tau^\circ \cup (v \gets 0 )))$.
    \end{enumerate}
\end{enumerate}
\end{enumerate}
We verify the prerequisites in~\Cref{thm:batch-automaton}.
\begin{enumerate}
    \item As it can be seen in the transition rules, the width is bounded by $2$;
    \item For any transition path $p = (i,x_1,x_2,\ldots,x_\ell)$, the transition step from $x_i$ to $x_{i+1}$ takes $O(i)$ time to implement;
    \item The termination time $T$ is at most a constant times of the number of recursive calls the whole process triggers. Specifically, $T$ is stochastically bounded by $3T_0$, where $T_0 \coloneqq \min \{t \ge 0 \mid \Phi_t = 0\}$ is the stopping time of the following $1$-dimensional random walk:
    \begin{align*}
        \Phi_0 = 1 \text{ and } \Phi_t - \Phi_{t-1} = 
        \begin{cases}
            -1 & \text{with probability $\frac{1}{1+\lambda}$,}\\
            \Delta-1 & \text{otherwise.}
        \end{cases}
    \end{align*}
    When $\lambda \le \frac{1-c}{\Delta -1}$ for some $c \in (0,1)$, by Chernoff bound it holds
    \begin{align*}
        \Pr[]{T_0 \ge \ell} \le \Pr[]{\Phi_\ell \ge 1} \le \exp\tp{-\frac{\ell}{2(1+\lambda)}\tp{\frac{1-(\Delta-1) \lambda}{\Delta}}^2} \le \exp\tp{-\frac{c^2}{4\Delta^2} \cdot \ell}.
    \end{align*}
\end{enumerate}
Plugging $\alpha = \exp\tp{-\frac{c^2}{4\Delta^2}}$ into~\Cref{thm:batch-automaton}, the runtime for simulating $N$ independent runs jointly is $\widetilde{O}(N^{1-\eps})$ where $\eps = \frac{c^2}{4\Delta^2 + c^2}$.

\paragraph{Hypergraph independent sets}

The perfect marginal sampler for hypergraph independent sets in~\cite{FGWWY23} is based on the ``coupling towards the past'' paradigm. At a high level, this is an algorithmic realization of the information percolation technique: one unrolls the randomness backward in time and tracks the (random) dependency cluster of a queried site. Under suitable conditions, this dependency cluster is typically small, leading to efficient termination while preserving the correct marginal distribution.

We first recall the perfect marginal sampler for hypergraph independent sets in~\cite{FGWWY23}.
Let $G=([n],E)$ be a hypergraph, and let $\mu$ denote the uniform distribution over independent sets.
Consider the systematic-scan Glauber dynamics $(X_t)_{t\ge 0}$ starting from an arbitrary initial state $X_0$.
In round $t$, the chain updates the vertex $v_t \coloneqq t \bmod n$ as follows.
First draw a coin $r_t \sim \mathrm{Ber}(1/2)$.
If $r_t=0$, then the update sets $X_t(v_t)\gets 0$.
If $r_t=1$, the update attempts to set $X_t(v_t)\gets 1$, but this is vetoed whenever doing so would violate feasibility:
namely, if there exists a hyperedge $e\in E$ with $v_t\in e$ such that all other vertices in $e\setminus\{v_t\}$ are currently occupied, then we must set $X_t(v_t)\gets 0$; otherwise we set $X_t(v_t)\gets 1$.

The perfect marginal sampler can be viewed as the time-reversal exploration process of this systematic scan.
Suppose we would like to determine the value of $X_t(v_t)$ at some time $t\le 0$ in the stationary limit (formally, as the starting time tends to $-\infty$).
We first reveal the random bit $r_t$.
If $r_t=0$, then $X_t(v_t)=0$ immediately.
If $r_t=1$, then $X_t(v_t)$ equals $1$ unless some incident hyperedge $e\ni v_t$ forces it to be unoccupied.
To check whether an edge $e$ forces $v_t$, we must know whether all vertices $u\in e\setminus\{v_t\}$ were last updated to $1$ before time $t$.
This induces a recursive dependency on earlier times: for each such $u$, we look at $\mathrm{prev}(u,t)$, the most recent time $t'\le t$ at which vertex $u$ is updated, and recursively resolve $X_{\mathrm{prev}(u,t)}(u)$.
Taking the limit where the time horizon goes to $-\infty$, this procedure yields a perfect sample from the stationary marginal distribution.

We now give a formal description.

\begin{algorithm}[H]
    \KwIn{A hypergraph $G=(V,E)$ and a time $t\le 0$}
    \KwOut{The spin $X_t(v_t)\in\{0,1\}$, where $v_t \coloneqq t \bmod n$}
    \KwGlobal{Random bits $(r_s)_{-\infty < s \le 0}$ and memo table $(Y_s)_{-\infty < s \le 0}$ initialized with $Y_s=\perp$}
    \caption{Marginal sampler for hypergraph independent set: $\mathrm{Resolve}(G,t)$}
    \label{alg:hypergraph-IS-marginal}
    \If{$Y_t \neq \perp$}{
        \Return $Y_t$\;
    }
    \If{$r_t = 0$}{
        \Return $Y_t \gets 0$\;
    }
    \ForEach{$e \in E$ with $v_t \in e$}{
        \If{$\forall u \in e \setminus \{v_t\}$, $r_{\mathrm{prev}(u,t)} = 1$}{
            \If{$\forall u \in e \setminus \{v_t\}$, $\mathrm{Resolve}(G,\mathrm{prev}(u,t)) = 1$}{
                $Y_t \gets 0$\;
                \Return $0$\;
            }
        }
    }
    \Return $Y_t \gets 1$\;
\end{algorithm}

Here $\mathrm{prev}(u,t)$ denotes the maximum integer $t'\le t$ such that $t' \equiv u \pmod n$, i.e., the last update time of vertex $u$ before time $t$.

This procedure can be cast in the probabilistic automaton framework by taking the automaton state to consist of (i) the current call stack of $\mathrm{Resolve}$, together with (ii) the set of revealed global variables $\{r_s\}_{s\le 0}$ and memoized outputs $\{Y_s\}_{s\le 0}$.
The only source of branching is the first revelation of an unrevealed coin $r_s\in\{0,1\}$; hence every configuration has at most two outgoing probabilistic transitions, and the resulting automaton has width $2$. And obviously, each transition step can be efficiently implemented, and most importantly, the process exhibits a stopping time with geometric decay, which is suggested by the following proposition.

\begin{proposition}[Lemma 5.9, \cite{FGWWY23}]
    Let $R$ be the number of random bits accessed during the process. Upon the termination of the perfect marginal sampler, it holds
    \begin{align*}
        \Pr[]{R \ge 3 \Delta^2 k^4 \cdot \eta} \le \exp\tp{-\eta}. 
    \end{align*}
\end{proposition}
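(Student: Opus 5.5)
The plan is a witness (certificate) argument in the style of the information-percolation analysis of \cite{FGWWY23}: to each run of $\mathrm{Resolve}(G,0)$ that reads many coins we attach a combinatorial structure whose existence has exponentially small probability, and then union-bound over all such structures.

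\textbf{Dependency graph on update times.} Form a graph on update times $s \le 0$. Join $s$ to $s'$ when $v_s$ and $v_{s'}$ share a hyperedge and $|s-s'| < n$; these are exactly the times that $\mathrm{prev}$ can reach from $s$. Each time has at most $k\Delta$ such neighbours within each of the two windows of length $n$ around it, so the maximum degree is $O(k\Delta)$.

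\textbf{Key structural claim.} The times at which $\mathrm{Resolve}$ is invoked form a connected subgraph containing $0$. Moreover, a call at time $t$ recurses into a hyperedge $e$ only after it has observed $r_{\mathrm{prev}(u,t)}=1$ for every $u\in e\setminus\{v_t\}$. So each expanding call certifies $k-1$ revealed coins equal to $1$ at distinct times. The total number of revealed coins is at most $(k\Delta+1)$ times the number of calls, since each call inspects at most $\Delta$ edges, each of size $k$.

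\textbf{Extracting a sparse witness.} If $R \ge 3\Delta^2k^4\eta$, then the call set $\mathcal{U}$ has size $m \gtrsim \Delta k^3\eta$. From $\mathcal{U}$ I would greedily extract a set $W$ of times with two properties:
\begin{itemize}
\item $W$ is connected in the square (or a bounded power) of the dependency graph;
\item the times in $W$ are pairwise far apart, so that their certifying sets of coins are disjoint.
\end{itemize}
The extraction works as follows. Pick a time, delete its $O(k^2\Delta)$-neighbourhood, and repeat while maintaining connectivity in the power graph. This yields $|W| \ge m/(ck^2\Delta)$. The coins certifying distinct elements of $W$ are then disjoint, so the probability that a fixed $W$ is realized is at most $2^{-(k-1)|W|}$.

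\textbf{Union bound.} By \Cref{lem:connected-subgraph}, applied to the power graph of degree $D = \mathrm{poly}(k\Delta)$, the number of connected sets of size $w$ through $0$ is at most $(\e D)^{w}$. Hence
\[
\Pr{R \ge 3\Delta^2 k^4 \eta} \le \sum_{w \ge \eta} (\e D)^w\, 2^{-(k-1)w} \le \e^{-\eta},
\]
where the last inequality uses the hypothesis $2^{k/2}\ge \sqrt{8\e}\,k^2\Delta$. This hypothesis makes $\e D\,2^{-(k-1)}$ (more precisely, the version with $k/2$ after halving) at most $\e^{-1}/2$.

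\textbf{Main obstacle.} The hard step is balancing the witness. A naive witness uses the whole call set, which has degree $k\Delta$ and carries only $k-1$ certifying coins per expansion; those coins can overlap heavily between nearby calls. The separation step must therefore cost at most about $k/2$ bits per witness vertex against the branching factor $\mathrm{poly}(k\Delta)$. The exact constants $3\Delta^2k^4$ come from how tightly this is done. I would try first to take only "expanding" calls, each tied to a disjoint $(k-1)$-set of $1$-coins, and to use distance-$2$ independence in the hyperedge-adjacency graph.
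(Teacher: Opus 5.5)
The paper does not prove this bound; it imports it from Lemma 5.9 of \cite{FGWWY23}. Your plan is in the right genre, a witness extracted from the recursion plus a union bound. However, the witness you choose cannot give the stated inequality, and the one step where the hypothesis enters is asserted rather than derived.

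First, you extract $W$ from the set $\mathcal{U}$ of all call times, but only a call that actually recursed through a hyperedge carries a certificate. A leaf call certifies nothing beyond its own coin. For a recursing call at $t$, the certifying set $\{\mathrm{prev}(u,t) : u \in e\setminus\{v_t\}\}$ depends on which $e \ni v_t$ fired. So for a fixed set of times, ``$W$ is realized'' has probability about $(\Delta\, 2^{-(k-1)})^{|W|}$, not $2^{-(k-1)|W|}$.

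Second, in your graph on times (degree about $2(k-1)\Delta$), two times at distance $2$ can share a certificate coin: a common neighbour $s$ can equal $\mathrm{prev}(v_s,t)=\mathrm{prev}(v_s,t')$. So disjointness needs distance at least $3$. Your greedy step then deletes $2$-balls of size $\Theta(k^2\Delta^2)$, not $O(k^2\Delta)$, and $W$ is only connected in the cube of the graph, whose degree is $\Theta(k^3\Delta^3)$. The union bound you would have to close then has ratio of order $k^3\Delta^4\,2^{-k}$. At the boundary $2^k = 8\e k^4\Delta^2$ this is of order $\Delta^2/k$, which is large since $\Delta$ may be as big as $2^{k/2}/(\sqrt{8\e}\,k^2)$. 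Writing $D=\mathrm{poly}(k\Delta)$ hides exactly where the proof lives: the hypothesis tolerates a branching factor of only about $k^4\Delta^2$ against $2^{-k}$ per witness element.

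The missing idea is to put the witness on bad events rather than on times.
\begin{itemize}
\item Take vertices $(t,e)$ with $v_t\in e$ and certificate $C(t,e)=\{t\}\cup\{\mathrm{prev}(u,t): u\in e\setminus\{v_t\}\}$, so ``all $k$ coins equal $1$'' has probability $2^{-k}$. Join two pairs when their certificates intersect.
\item A time $s$ lies in at most $\Delta+(k-1)\Delta=k\Delta$ certificates, so this graph has maximum degree $D<k^2\Delta$.
\item The set $\mathcal{P}$ of pairs through which the algorithm recursed is connected and contains a pair at the root time. Indeed, a child call at $s$ recursing through $e'$ has $s\in C(s,e')\cap C(t,e)$.
\item There are at most $1+(k-1)|\mathcal{P}|$ distinct calls, each reading at most $1+(k-1)\Delta$ coins. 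Hence $R\le k^2\Delta\,|\mathcal{P}|$ once $\mathcal{P}\neq\emptyset$, which holds for $\eta$ bounded below.
\item Your greedy extraction, run in this graph, yields a $2$-tree (pairwise non-adjacent, connected in the square) through the root pair. Its size is at least $|\mathcal{P}|/(D+1)\ge R/(k^4\Delta^2)\ge 3\eta$, and its certificates are disjoint, so the events are independent.
\item By \Cref{lem:connected-subgraph} applied to the square graph, there are at most $(\e k^4\Delta^2)^{m-1}$ such $2$-trees of size $m$.
\item With $\Delta$ choices of root pair, the probability is at most $\Delta\sum_{m\ge 3\eta}(\e k^4\Delta^2)^{m-1}2^{-km}\le 8^{-3\eta}\le \e^{-\eta}$. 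This uses $\e k^4\Delta^2\,2^{-k}\le 1/8$, which is exactly the hypothesis $2^{k/2}\ge\sqrt{8\e}\,k^2\Delta$.
\end{itemize}
This is also where $3\Delta^2k^4=3(k^2\Delta)^2$ comes from: one factor $k^2\Delta$ from coins read per recursing pair, and one from the $2$-tree extraction.
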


Therefore, we are able to apply~\Cref{thm:batch-automaton} to obtain the sub-linear time aggregate sampler and thus a subquadratic time algorithm for estimating the partition function.

\begin{corollary}[Hypergraph independent set]\label{cor:hyper-indset}
    Let $\Delta \ge 3$ and $k \ge 2$ be constants satisfying $2^{k/2} \ge \sqrt{8\e} k^2 \Delta$. There exist $\mathsf{FPRAS}$ for estimating the number of independent sets in $k$-uniform hypergraph with maximum degree $\Delta$ and $n$ vertices in expected time $\widetilde{O}(n^{2-\delta} \eps^{-2+2\delta})$, where $\delta = \Omega_{\Delta,k}(1)$ and $\epsilon$ denotes the multiplicative error bound.
\end{corollary}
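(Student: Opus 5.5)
The plan is to combine the counting-to-inference reduction (\Cref{lem:reduction-counting-to-inference}) with the aggregate simulation guarantee of \Cref{thm:batch-automaton}, applied to the automaton form of \Cref{alg:hypergraph-IS-marginal}.

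\textbf{Marginal estimator.} Fix any pinning $\tau$ that arises in the chain rule. Hypergraph independent sets are self-reducible. Pinning a vertex to $0$ deletes it from every hyperedge containing it; edges containing a $0$-pinned vertex can simply be discarded. Pinning a vertex to $1$ shrinks each incident hyperedge by one. To avoid creating small edges, I will order the chain rule with respect to the all-zero configuration $\sigma^\ast=\emptyset$. Then every pinning only removes vertices and edges, so each conditional instance is again a hypergraph with edges of size $k$ and maximum degree at most $\Delta$. The condition $2^{k/2}\ge\sqrt{8\e}k^2\Delta$ is therefore inherited, and so is Lemma 5.9 of \cite{FGWWY23}.

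Each marginal $\mu^\tau_v(0)$ is at least $1/2$, since flipping a $1$ to a $0$ preserves independence. So the marginal lower bound is $b=1/2$. Taking the empirical mean of $N=\Theta(n/\eps^2)$ i.i.d.\ perfect samples gives an unbiased estimator $P$ with $\Var{P}/\E{P}^2\le 4/N\le\eps^2/n$.

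\textbf{Aggregate sampling.} The empirical mean only needs the count of outcomes equal to $1$. This is exactly the frequency of the absorbing state $\perp_1$ among $N$ independent runs of the automaton built from \Cref{alg:hypergraph-IS-marginal}. I will verify the three hypotheses of \Cref{thm:batch-automaton}.

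\emph{Bounded width.} Branching occurs only when an unrevealed coin is revealed, so the width is $D=2$.

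\emph{Efficient implementation.} A state consists of the call stack together with the revealed coins and memo entries. All of these can be recomputed by replaying the transition path, and each step touches $O(k\Delta)$ neighbouring times. Hence the next kernel can be constructed in $\mathrm{poly}(\ell)$ time. The automaton accesses the graph only via $\mathsf{neighbors}$ and $\mathsf{pinning}$ oracles, which are constant-time for the pinned instance.

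\emph{Geometric tail.} The number of automaton steps $T$ is at most $O(k\Delta)$ times the number of random bits accessed, $R$, plus $O(1)$. This holds because every resolve call reveals or reads at least one coin, and between calls it does only $O(k\Delta)$ bookkeeping. The proposition from \cite{FGWWY23} then gives
\[
\Pr{T\ge t}\le \e\cdot \exp\bigl(-t/(c\Delta^3k^5)\bigr)
\]
for a constant $c$. This is the required bound with $\alpha=\exp(-1/(c\Delta^3k^5))$.

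I expect this third step to be the main obstacle. The difficulty is converting a bound on accessed bits into a bound on automaton steps when memoization is present. In particular, repeated lookups of memoized $Y_s$ still cost steps, and the count must remain linear in $R$. I would handle this by charging each step to the coin whose call it serves, noting that each call is entered at most $O(k\Delta)$ times from its parents.

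\textbf{Conclusion.} \Cref{thm:batch-automaton} then yields the count in $\widetilde{O}(N^{1-\eps'})$ time, with $\eps'=\frac{\log(1/\alpha)}{\log 2+\log(1/\alpha)}=\Omega_{\Delta,k}(1)$. Applying \Cref{lem:reduction-counting-to-inference} over the $n$ vertices gives total time
\[
\widetilde{O}\bigl(n\cdot (n/\eps^2)^{1-\eps'}\bigr)=\widetilde{O}\bigl(n^{2-\eps'}\eps^{-2+2\eps'}\bigr).
\]
Setting $\delta=\eps'$ completes the proof. The initialization cost of $\widetilde{O}(\sqrt N)$ from \Cref{lem:binomial} is negligible.
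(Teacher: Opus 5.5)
Your proposal is correct and follows the paper's route: encode the $\mathrm{Resolve}$ procedure of \cite{FGWWY23} as a width-$2$ probabilistic automaton, get the geometric tail from Lemma~5.9 of \cite{FGWWY23}, and combine \Cref{thm:batch-automaton} with \Cref{lem:reduction-counting-to-inference}. The paper leaves several points implicit that you supply, and they are sound: chaining along the all-zero configuration so every conditional instance stays $k$-uniform with degree at most $\Delta$, the marginal bound $b=1/2$, and charging automaton steps to revealed coins to convert the tail bound on $R$ into one on $T$.
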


\paragraph{Proper colorings}
    The perfect marginal sampler we consider for the proper colorings is proposed in~\cite{LWY26}. At high level, their perfect marginal sampler is a refined implementation of the ``coupling towards the past'' framework. 
    
    Let $G=(V,E,L)$ be a graph together with lists of colorings $L=(L_v)_{v \in V}$. Consider the systematic-scan Glauber dynamics $(X_t)_{t \ge 0}$ starting from an arbitrary initial state $X_0$. In the round $t$, the Glauber dynamics would update the vertex $v_t \coloneqq t \mod n$ by one of the available colors $\+A_{t} = \{c \in L_{t} \mid \forall u \in N(v_t),X_{t-1}(u) \neq c\}$ uniformly at random, where $L_t \coloneqq L_{v_t}$.

    To implement the time-reversal exploration process efficiently, Liu, Wang and Yin observed that instead of querying the exact colors of the neighboring vertices, we only needs to certify if a particular color $c$ is assigned to $X_t(v_t)$ at time $t$ or not.
    Specifically, their algorithm includes a subroutine $\texttt{Certify}(t,c)$ which outputs a one-bit information about whether $X_t(v_t) = c$ or not. With this subroutine, we are able to resolve the color $X_t(v_t)$ by repeatedly proposing candidates $c \in L_{v_t}$ until $c \in \+A_t$. Now it remains to implement the \texttt{Certify} procedure. There are two separate cases to consider:
    \begin{enumerate}
    \item If $\abs{L_{t}} \le 50\Delta$, call the resolve procedure to determine the exact color of $X_t(v_t)$;
    \item Otherwise, the process should output $1$ with probability $\frac{1}{\abs{\+A_t}}$ if $c \in \+A_t$, and deterministically output $0$ if $c \not\in \+A_t$. This is effectively the same by immediately returning $0$ (and excludes color $c$ from $L_t$) with probability $1-\frac{2}{\abs{L_t}}$, and then check if $c \in \+A_t$ (by recursively calling the \texttt{Certify} procedure). When positive, the process returns $1$ with probability $\frac{\abs{L_t}}{2\abs{\+A_t}}$ via Bernoulli factory. In all other scenarios, the process returns $0$ (and excludes color $c$ from $L_t$).
    \end{enumerate}
    
    The formal description of this algorithm can be found in Algorithm 7-11 of~\cite{LWY26}. We would like to remark that this procedure can also be expressed in the probabilistic automaton framework by taking the automaton state to consist of (i) current call stacks of the \texttt{Resolve} and \texttt{Certify} process, together with (ii) the current available colors $(L_t)_{t \ge 0}$. The branching comes from proposing a uniformly random colors in $L_t$, or toss a random (biased) coin; hence the resulting automaton has width at most $q$. Each transition step can be implemented efficiently, and most importantly, the automaton also exhibits a stopping time with geometric decay, suggested by the following proposition.

    \begin{proposition}[\cite{LWY26}]
        Let $q,\Delta$ be constants with $q \ge 65 \Delta$.
        Let $\tau$ be the total number of function calls when resolving the color at time $t$. It holds for any $\eta \ge 60\Delta$ that
        \begin{align*}
            \Pr[]{\tau \ge \eta} \le \exp\tp{-\Omega_{q,\Delta}\tp{\eta}}.
        \end{align*}
    \end{proposition}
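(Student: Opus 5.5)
The tail bound is for the \texttt{Resolve}/\texttt{Certify} procedure of~\cite{LWY26}, so the plan is to redo their witness-based counting argument, in the same style as the hypergraph bound (Lemma 5.9 of~\cite{FGWWY23}). First, I will record the execution as a \emph{dependency forest}. Its nodes are the pairs $(t',c)$ or times $t'$ on which \texttt{Resolve} or \texttt{Certify} is invoked. There is an edge from a call to each recursive call it spawns. Every recursive call goes to $\mathrm{prev}(u,t')$ for some $u \in N^+(v_{t'})$, so each node has at most $\Delta+1$ possible "time-children" and at most $q$ color labels. Hence the number of rooted connected shapes with $\eta$ nodes is at most $(\e(\Delta+1)q)^{\eta}$, by the connected-subgraph count of \Cref{lem:connected-subgraph} applied to the space--time dependency graph.

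Second, for a fixed shape I will bound the probability that the execution realizes it, by attaching an independent "bad" random event to each internal node. In the \texttt{Certify} branch with $|L_t| > 50\Delta$, a call recurses only if it avoids the immediate-return event, which has probability $1-\frac{2}{|L_t|}$. The recursion therefore happens with probability at most $\frac{2}{|L_t|} \le \frac{2}{q-\Delta}$, since lists shrink by at most the number of excluded colors. Each \texttt{Resolve} loop repeatedly proposes colors until one lands in $\mathcal{A}_t$. Since $|\mathcal{A}_t| \ge q-\Delta$, the number of proposals is geometric with failure probability at most $\frac{\Delta}{q}$. Each failed proposal costs one \texttt{Certify} call per neighbor whose color could block it. These events are determined by fresh coins at distinct nodes, so the probability of a given shape factorizes to at most $\bigl(\frac{C\Delta}{q}\bigr)^{\#\text{internal}}$.

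Third, I will sum over shapes. With $q \ge 65\Delta$, the product $(\e(\Delta+1)q)\cdot\frac{C\Delta}{q}$ does not obviously fall below $1$, so the enumeration needs care. For that reason I will instead bound the \emph{branching}: each node has expected offspring $\le (\Delta+1)\cdot\frac{2}{q-\Delta}\cdot(\text{const}) < 1-\delta_0$ when $q \ge 65\Delta$. I will then apply the supermartingale/Chernoff argument exactly as in the hardcore case above. Set $\Phi_t$ to be the number of pending calls. Its increments are bounded by $\Delta+1$ and have drift $\le -\delta_0$, so Azuma gives $\Pr{\tau \ge \eta} \le \Pr{\Phi_\eta \ge 1} \le \exp(-\Omega_{q,\Delta}(\eta))$. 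The condition $\eta \ge 60\Delta$ absorbs the initial constant overhead, namely the first up-to-$\Delta$ certifications at the root.

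The main obstacle is the case $|L_t| \le 50\Delta$. There, \texttt{Certify} falls back to a full \texttt{Resolve}, which is not a single cheap step, and lists shrink along the recursion, so the drift is not uniform. I would handle this by showing that the list at a node shrinks by at most the number of distinct colors already excluded at that time. With $q \ge 65\Delta$, lists stay above $50\Delta$ except after $\Omega(\Delta)$ exclusions at one time. Each exclusion is itself a rare event, which costs an extra factor of $(\Delta/q)^{\Omega(\Delta)}$. This keeps the effective offspring mean below $1$, and the rest follows from the bounded-increment supermartingale tail above.
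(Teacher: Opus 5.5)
The paper gives no proof of this proposition; it imports it from \cite{LWY26}. Judged on its own, your reconstruction has a genuine gap, and it sits exactly where the real difficulty is. You were right to abandon the shape enumeration of your first two steps, since it does not close. However, the drift claim in your third step only covers a \texttt{Certify} call with $\abs{L_t}>50\Delta$. The root is a \texttt{Resolve} call, and so is every call spawned by a \texttt{Certify} in the small-list regime. Such a call makes a geometric number of colour proposals and certifies each one against up to $\Delta$ neighbours. The failure probability of a proposal is up to $\Delta/\abs{L_t}$ for the current, shrunken list, not $\Delta/q$. The Bernoulli factory for $\abs{L_t}/(2\abs{\mathcal{A}_t})$ also needs a random, unbounded number of coins whose bias depends on $\mathcal{A}_t$, i.e.\ further neighbour certifications. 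So the number of pending calls has drift of order $+\Delta$ at these steps. It is not a supermartingale, and its increments are not bounded by $\Delta+1$ per call. Any potential that repairs this must weight \texttt{Resolve} calls by $\Theta(\Delta)$. You then need that, conditionally on every history, a \texttt{Certify} call lands in the small-list regime with probability $O(1/\Delta)$.

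Your last paragraph does not supply this, and its key premise is false. Excluding a colour from $L_t$ is not a rare event. It is the default outcome of \texttt{Certify}$(t,c)$, which returns $0$ and excludes $c$ with probability $1-2/\abs{L_t}$. So $L_t$ shrinks by one for essentially every distinct colour queried at time $t$. It falls to $50\Delta$ after $q-50\Delta$ such queries (about $15\Delta$ when $q=65\Delta$), with no factor $(\Delta/q)^{\Omega(\Delta)}$ attached. What must be shown rare is that a single time receives that many queries. All queries to $t$ come from the at most $\Delta$ neighbouring update times. So you would have to show that this forces $\Omega(\Delta)$ genuinely rare events (the $2/\abs{L}$ branch, rejected proposals, factory coins) in the neighbourhood of $t$. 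This cannot be encoded as a per-step drift bound. Conditioned on a history that has already queried $t$ many times, the next \texttt{Certify} at $t$ calls \texttt{Resolve} with probability one. A non-local amortization or witness-counting argument is therefore missing, and that is what the cited work must provide. Two minor points: in the large-list regime you only have $2/\abs{L_t}<1/(25\Delta)$, not $2/(q-\Delta)$. Also, the per-shape factorization in your second step ignores that the lists $L_{t'}$ and memoized values are shared across nodes.
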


    Therefore, we are able to apply~\Cref{thm:batch-automaton} to obtain the subquadratic time algorithm for estimating the partition function.

    \begin{corollary}[Proper colorings]\label{cor:coloring}
        Let $\Delta \ge 3$ and $q$ be constants satisfying $q \ge 65 \Delta$. There exists $\mathsf{FPRAS}$ for estimating the number of proper colorings in expected time $\widetilde{O}(n^{2-\delta} \eps^{-2+2\delta})$, where $\delta = \Omega_{\Delta,q}(1)$ and $\eps$ denotes the multiplicative error bound.
    \end{corollary}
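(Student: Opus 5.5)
The corollary follows by combining the standard counting-to-inference reduction (\Cref{lem:reduction-counting-to-inference}) with an aggregate version of the Liu--Wang--Yin coloring sampler obtained from \Cref{thm:batch-automaton}.

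\textbf{Step 1: the aggregate sampler.} Proper (list) colorings are self-reducible. Pinning a vertex $v$ to color $c$ is the same as deleting $v$ and removing $c$ from the lists of its neighbors. Each list loses at most one color per pinned neighbor, so the condition $|L_w| \ge 65\Delta$ (more precisely, $|L_w| - \deg(w) \ge 64\Delta$, the form the \cite{LWY26} analysis actually uses) is preserved under pinning. Hence every conditional instance in the chain rule is again admissible for the \cite{LWY26} sampler. For each such instance I take the probabilistic automaton described above, whose states are the call stacks of \texttt{Resolve}/\texttt{Certify} together with the current lists. I then verify the three hypotheses of \Cref{thm:batch-automaton}:
\begin{itemize}
\item \emph{Width.} It is at most $\max(q,2)=O(1)$, since branching comes only from proposing a uniform color or flipping a (Bernoulli-factory) coin.
\item \emph{Efficient implementation.} A transition along a path of length $\ell$ is computable in $\mathrm{poly}(\ell)$ time, because the state has size $O(\ell)$.
\item \emph{Geometric tail.} This comes from the cited proposition: $\Pr[]{\tau\ge\eta}\le \exp(-\Omega_{q,\Delta}(\eta))$ for $\eta\ge 60\Delta$. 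Each automaton step corresponds to $O(1)$ work per function call, so $T=O(\tau)$. For $\eta<60\Delta$ the bound holds trivially after enlarging $C$. This gives $\Pr[]{T\ge t}\le C\alpha^t$ with $\alpha=\alpha(q,\Delta)<1$.
\end{itemize}
Applying \Cref{thm:batch-automaton} then computes the exact color counts of $N$ independent perfect marginal samples in $\widetilde O(N^{1-\eps'})$ time, where $\eps'=\frac{\log 1/\alpha}{\log D+\log 1/\alpha}=\Omega_{q,\Delta}(1)$.

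\textbf{Step 2: the estimator.} For the chain-rule step at vertex $i$ with target color $\sigma^*_i$, I output $P=\frac1N\#\{j: X_j=\sigma^*_i\}$. This is unbiased for $\mu^{\tau}_i(\sigma^*_i)$. To control its variance, I use the marginal lower bound for colorings with $q\ge 65\Delta$: every available color has conditional probability at least $1/q$, because the marginal given the neighbors is uniform over at least $q-\Delta$ available colors. This bounds $\Var{P}/\E{P}^2\le \frac{1-p}{Np}\le q/N$. Choosing $N=qn/\eps^2$ meets the hypothesis $\eps^2/n$ of \Cref{lem:reduction-counting-to-inference}.

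\textbf{Step 3: totaling the cost.} Running $n$ such estimates costs $\widetilde O(n\cdot (n/\eps^2)^{1-\eps'})=\widetilde O(n^{2-\eps'}\eps^{-2+2\eps'})$ in expectation. Taking $\delta=\eps'$ gives the stated bound. The reduction needs a feasible $\sigma^*$, which a greedy coloring supplies in linear time since $q>\Delta$. Finally, $Z=w(\sigma^*)/\mu(\sigma^*)$ and $w\equiv 1$ on proper colorings, so the estimate of $\mu(\sigma^*)$ yields the count. Expected time can be converted to worst-case time with a $3/4$ success guarantee by Markov's inequality if needed.

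\textbf{Main obstacle.} I expect the delicate point to be the automaton modeling. The \cite{LWY26} \texttt{Certify} procedure uses Bernoulli factories with unbounded randomness, and the automaton state carries the evolving lists $L_t$. I must make sure that each elementary coin flip counts as one transition of width at most $q$, and that the tail bound on the number of function calls transfers to the number of automaton steps. If a Bernoulli factory call can consume unboundedly many coins, I would use the fact that its coin count also has a geometric tail, and bound the total step count by a sum of $\tau$ geometric variables. A Chernoff-type argument keeps this sum's tail geometric.
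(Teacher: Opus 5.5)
Your proposal is correct and takes the paper's route. You cast the \cite{LWY26} Resolve/Certify sampler as a probabilistic automaton of width at most $q$, whose stopping time inherits a geometric tail from the cited bound on function calls. You then apply \Cref{thm:batch-automaton} with $N=\Theta(n/\eps^2)$ and feed the empirical means into \Cref{lem:reduction-counting-to-inference}. Your handling of self-reducibility via list colorings, the step-count versus call-count conversion, and the variance bound is more explicit than the paper's sketch.

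One small correction in Step 2. Uniformity of $\sigma_i$ over its available colors \emph{given its neighbors} does not give $\mu^\tau_i(c)\ge 1/q$; this can fail, because free neighbors may take $c$. Each free neighbor takes $c$ with probability at most $\frac{1}{q-\Delta}$, so $\mu^\tau_i(c)\ge \frac1q\bigl(1-\frac{\Delta}{q-\Delta}\bigr)\ge\frac{63}{64q}$. Hence $N=2qn/\eps^2$ suffices.
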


\ifthenelse{\boolean{anonymous}}{}{
\section*{Acknowledgment}
We would like to thank Hongyang Liu, Chunyang Wang and Yitong Yin for insightful discussions.
}
\bibliographystyle{alpha}
\bibliography{refs}

\appendix
\section{Missing proofs}
\subsection{Proof of \texorpdfstring{\Cref{lem:reduction-counting-to-inference}}{Lemma 2.5}}

\begin{proof}[Proof of \Cref{lem:reduction-counting-to-inference}]
    Fix $\sigma \in [q]^n$ such that $\mu(\sigma) > 0$.
    Let $\sigma(< i) \coloneqq (\sigma_1, \cdots, \sigma_i)$ for $i \in [n]$.
    According to the chain rule, we have
    \begin{align*}
        \mu(\sigma) = \prod_{i=1}^n \mu_i^{\sigma(<i)}(\sigma_i).
    \end{align*}
    Let $k$ be an integer (universal constant) that we will determine later.
    For each $i$, we run $\mathcal{A}$ for $k$ times with time $k \cdot T(n)$ to compute $P_i \coloneqq \frac{1}{k} \sum_{j=1}^k P_{i,j}$ independently such that $\E{P_{i,j}} = \mu_i^{\sigma(<i)}(\sigma_i)$ and $\Var{P_{i,j}}/\E{P_{i,j}}^2 \leq \epsilon^2/n$.
    This implies that $\E{P_i} = \mu_i^{\sigma(<i)}(\sigma_i)$ and $\Var{P_{i,j}}/\E{P_{i,j}}^2 \leq \epsilon^2/(nk)$.
    Then, let $M \coloneqq \prod_{i=1}^n P_i$, we have $\E{M} = \mu(\sigma)$ and 
    \begin{align*}
        \frac{\Var{M}}{\E{M}^2} 
        &= \frac{\E{M^2}}{\E{M}^2} - 1
        = \prod_{i=1}^n \frac{\E{P_i^2}}{\E{P_i}^2} - 1 \tag{by independence} \\
        &= \prod_{i=1}^n \tp{1 + \frac{\Var{P_i}}{\E{P_i}^2}} - 1
        \leq \tp{1 + \frac{\epsilon^2}{k n}}^n - 1 \leq \e^{\epsilon^2/k} - 1\leq \frac{1}{k/\epsilon^2-1},
    \end{align*}
    where in the last inequality, we use the fact that $\e^{1/x} \leq 1 + 1/(x-1)$ for $x > 1$.
    We pick $k = 13$. 
    By Chebyshev's inequality, we have
    \begin{align*}
        \Pr{\abs{M - \E{M}} \geq \frac{\epsilon}{2} \E{M}} \leq \frac{4 \Var{M}}{\epsilon^2 \E{M}^2} \leq \frac{4}{\epsilon^2 (13/\epsilon^2-1)} \leq 1/3.
    \end{align*}
    Since $\e^{-\epsilon} \leq 1-\epsilon/2 \leq 1 + \epsilon/2 \leq \e^{\epsilon}$ for $\epsilon \in (0,1)$, it implies that 
    \begin{align*}
        \Pr{\e^{-\epsilon} \mu(\sigma) \leq M \leq \e^\epsilon \mu(\sigma)} &\geq \frac{3}{4}. \qedhere
    \end{align*}
\end{proof}

\subsection{Proof of \texorpdfstring{\Cref{lem:symmetrization}}{Lemma 3.16}} \label{sec:symmetrization}

\begin{proof}[Proof of \Cref{lem:symmetrization}]
  Following the proof of \cite[Lemma 13]{li2013correlation}, we define $z_i \coloneqq \frac{\beta x_i + 1}{x_i + \gamma}$.
  This means $z_i \in (\beta, 1/\gamma]$ and $x_i = \frac{1-\gamma z_i}{z_i - \beta}$.
  Hence, it holds that
  \begin{align*}
    h(x_i) = \frac{(z_i^{-1} - \gamma)(z_i - \beta)}{1-\beta\gamma}.
  \end{align*}
  Define $\ol{z} \coloneqq (\prod_{i=1}^d z_i)^{1/d}$ and $\ol{x} \coloneqq \frac{1 - \gamma \ol{z}}{\ol{z} - \beta}$.
  Hence it is direct to verify that
  \begin{align*}
    F_d(\*1\ol{x}) = \lambda \tp{\frac{\beta \ol{x} + 1}{\ol{x} + \gamma}}^d = \lambda \ol{z}^d = \lambda \prod_{i=1}^d \frac{\beta x_i + 1}{x_i + \gamma} = F_d(\*x).
  \end{align*}
  Define the following function
  \begin{align*}
    g(y) \coloneqq (\e^{-y} - \gamma)(\e^y - \beta).
  \end{align*}
  It is direct to show that $g(y)$ is concave on $\mathbb{R}$ by noticing that
  \begin{align*}
    g''(y) = -\e^{-y} \left(\beta +\gamma  \e^{2 y}\right) < 0.
  \end{align*}
  Let $y_i \coloneqq \log z_i$ and $y = \log\ol{z}$, we can rewrite $\sum_{i=1}^dh(x_i) \leq d h(\ol{x})$ as
  \begin{align*}
    \frac{1}{d}\sum_{i=1}^d g(y_i) \leq g\tp{\frac{1}{d} \sum_{i=1}^d y_i},
  \end{align*}
  which follows by Jensen's inequality.
\end{proof}

\subsection{Algorithmic self-avoiding walk tree with flower} \label{sec:SAW-tree-with-flower}
In this section, we will prove \Cref{lem:SAW-tree} and \Cref{lem:SAW-tree-with-flower}.
We first recall the settings.
Fix $\beta, \gamma, \lambda$ such that $0 \leq \beta \leq \gamma$ and $\gamma, \lambda > 0$.
Let $G = (V, E)$ be a graph with max degree $\Delta$.
Let $<$ be a total order in $V$.
Let $\Lambda \subseteq V$ and $\tau \in \Omega(\mu_{G,\Lambda})$.

\begin{definition}\label{def:SAW-one-step}
  Let $G = (V,E)$ be a graph.
  Let $\tau$ be a pinning on $\Lambda \subseteq V$.
  Let $(v,u)$ be a pair of neighoring vertices in $G$.
  Let $\Gamma^G_v = \set{u_1, \cdots, u_d}$ be the set of neighbors of $v$ in $G$ sorted ascendingly according to $<$.
  If $u$ is not pinned by $\tau$, we define a new graph $G^{(v,u)}$ (with pinning $\tau^{(v,u)}$) as follows:
  \begin{itemize}
  \item Remove $v$ and for each $i \in [d]$ such that $u_i \neq u$:
    \begin{enumerate}
    \item add a copy $v_i$ of $v$ that is pinned to $\tau^{(v,u)}_{v_i} \coloneqq \*1[u_i > u]$;
    \item add a new edge $\set{v_i, u_i}$;
    \end{enumerate}
  \item For vertex $s \in V\setminus\set{v}$, their pinning are inherited from $\tau$ (i.e., we set $\tau^{(v,u)}_s \coloneqq \tau_s$).
  \end{itemize}
  Otherwise, if $u$ is pinned by $\tau$, we set $G^{(v,u)} = \set{u}$ and $\tau^{(v,u)}_u = \tau_u$.
\end{definition}

\begin{definition}[self-avoiding walk] \label{def:SAW}
  Let $G = (V,E)$ be a graph.
  Let $\tau$ be a pinning on $\Lambda \subseteq V$.
  we define self-avoiding walks $w = (v_0, v_1, \cdots, v_\ell)$ with $\ell \geq 1$ and corresponding graph $G^w$ with pinning $\tau^w$ inductively as follows:
  \begin{itemize}
  \item For $\ell = 0$, we say $w = (v_0)$ is a self-avoiding walk if $v_0$ is an vertex in $G$. Then we set $G^w = G$ and $\tau^w = \tau$;
  \item For $\ell > 0$, $w = (v_0, \cdots, v_\ell)$ is a self-avoiding walk if its prefix $w' = (v_0, \cdots v_{\ell-1})$ is a self-avoiding walk and $\set{v_{\ell-1}, v_\ell}$ is an edge in $G^{w'}$.
    Then construct the graph $G^w \coloneqq (G^{w'})^{(v_{\ell-1}, v_\ell)}$ and pinning $\tau^w \coloneqq (\tau^{w'})^{(v_{\ell-1}, v_\ell)}$ as in \Cref{def:SAW-one-step}.
  \end{itemize}
\end{definition}

\begin{definition}[SAW tree] \label{def:SAW-tree}
  Let $G = (V,E)$ be a graph.
  Let $\tau$ be a pinning on $\Lambda \subseteq V$.
  Let $v$ be a vertex in $G$.
  We define the SAW tree $\mathbb{T}_{\mathrm{SAW}(v)}(G) = (V_{\mathrm{SAW}(v)}, E_{\mathrm{SAW}(v)})$ as the dictionary tree of all the self-avoiding walks $w = (v_0, \cdots )$ with $v_0 = v$.
  In particular, we have
  \begin{itemize}
  \item $V_{\mathrm{SAW}(v)} \coloneqq \set{\text{self-avoiding walk } w = (v_0, \cdots) \mid v_0 = v}$;
  \item for $w, w' \in V_{\mathrm{SAW}(v)}$, $w$ is a child of $w'$ if and only if $w$ extend $w'$ by exactly one more vertex.
  \end{itemize}
  For every $w = (v_0, \cdots, v_\ell) \in V_{\mathrm{SAW}(v)}$, we define its pinning as $\tau_{\mathrm{SAW}(v), w} \coloneqq \tau^w_{v_\ell}$.
\end{definition}

\begin{remark}\label{rem:SAW-tree}
  Let $\mathbb{T}_{\mathrm{SAW}(v)}(G)$.
  Let $w = (v_0,\cdots, v_\ell)$ be a node in $\mathbb{T}_{\mathrm{SAW}(v)}(G)$.
  We will use $\mathbb{T}_w$ to denote the sub-tree rooted at $w$.
  By \Cref{def:SAW-tree}, it is direct to see that $\mathbb{T}_w = \mathbb{T}_{\mathrm{SAW}(v_\ell)}(G^w)$.
  Moreover, we have $\tau_{\mathrm{SAW}(v)}\vert_{\mathbb{T}_w} = \tau^w_{\mathrm{SAW}(v_\ell)}$.
\end{remark}

\begin{proof}[Proof of \Cref{lem:SAW-tree}]
  We first prove $\mu^\tau_{G,v} = \mu^{\tau_{\mathrm{SAW}(v)}}_{\mathbb{T}_{\mathrm{SAW}(v)}(G), v}$ by doing induction on the size of SAW tree.
  Without loss of generality, we will assume that $v$ is free under $\tau$.
  The base case is trivial, since $|V_{\mathrm{SAW}(v)}| = 1$ implies that $v$ is an isolated vertex in $G$.
  For the inductive case, let $\Gamma^G_v = \set{u_1, \cdots, u_d}$ be the set of neighbors of $v$ in $G$ ordered by $<$ ascendingly.
  For each $i \in [d]$, we define $w_i \coloneqq (v, u_i)$ and let $\mathbb{T}_{w_i}$ be the sub-tree of SAW-tree rooted at $w_i$.
  Then by \Cref{rem:SAW-tree}, we have $\mathbb{T}_{w_i} = \mathbb{T}_{\mathrm{SAW}(u_i)}(G^{w_i})$ and $\tau_{\mathrm{SAW}(v)}\vert_{\mathbb{T}_{w_i}} = \tau^w_{\mathrm{SAW}(u_i)}$
  By the tree recursion, we have
  \begin{align*}
    R^{\tau_{\mathrm{SAW}(v)}}_{\mathbb{T}_{\mathrm{SAW}(v)}(G), v}
    &= \lambda \prod_{i=1}^d \frac{\beta R^{\tau_{\mathrm{SAW}(v)}\vert_{\mathbb{T}_{w_i}}}_{\mathbb{T}_{w_i}, w_i} + 1}{R^{\tau_{\mathrm{SAW}(v)}\vert_{\mathbb{T}_{w_i}}}_{\mathbb{T}_{w_i}, w_i} + \gamma}
    = \lambda \prod_{i=1}^d \frac{\beta R_{\mathbb{T}_{\mathrm{SAW}(u_i)}(G^{w_i}), u_i}^{\tau_{\mathrm{SAW}(u_i)}^{w_i}} + 1}{R_{\mathbb{T}_{\mathrm{SAW}(u_i)}(G^{w_i}), u_i}^{\tau_{\mathrm{SAW}(u_i)}^{w_i}} + \gamma}
   = \lambda \prod_{i=1}^d \frac{\beta R^{\tau^{w_i}}_{G^{w_i}, u_i} + 1}{R^{\tau^{w_i}}_{G^{w_i}, u_i} + \gamma},
  \end{align*}
  where the last equation follows from the induction hypothesis.
  Now define a new graph $\hat{G}$ from $G$ by spliting $v$ into $d$ copies where each copy $v_i$ is only connected to $u_i$.
  For $0 \leq i \leq d$, let $\sigma^{(i)}$ be pinings on $\set{v_1, \cdots, v_d}$ such that $\sigma^{(i)}_{v_j} = \*1[j > i]$.
  Then, by the tree recursion and \Cref{def:SAW-one-step},
  \begin{align*}
    \frac{\beta R^{\tau^{w_i}}_{G^{w_i}, u_i} + 1}{R^{\tau^{w_i}}_{G^{w_i}, u_i} + \gamma} &= \frac{1}{\lambda} \frac{\mu^\tau_{\hat{G}}(\sigma^{(i-1)})}{\mu^\tau_{\hat{G}}(\sigma^{(i)})}.
  \end{align*}
  Hence, by a telescoping product, we have
  \begin{align*}
    R^{\tau_{\mathrm{SAW}(v)}}_{\mathbb{T}_{\mathrm{SAW}(v)}(G), v}
    &= \frac{\lambda}{\lambda^d} \frac{\mu^{\tau}_{\hat{G}}(\sigma^{(0)})}{\mu^{\tau}_{\hat{G}}(\sigma^{(d)})} = R^\tau_{G,v}.
  \end{align*}
  This finishes the first bullet of \Cref{lem:SAW-tree}.

  For the second bullet, we can maintain $w = (v_0, \cdots, v_\ell), G^w, \tau^w$ in the data structure $\mathcal{D}$.
  The children of $w$ can be enumerated in $O_\Delta(1)$ time by looking at neighbors of $v_\ell$ in $G^w$.
  The parent of $w$ is direct from the $\ell-1$ prefix of $w$.
  When an update comes, by \Cref{def:SAW-one-step}, we can move to children of $w$ in $O_\Delta(1)$ time.
  To move to the parent $w'$ of $w$, we simply rebuild $G^{w'}$ and $\tau^{w'}$ from scratch.
  This would cost $O_\Delta(\ell)$ time.
  This finishes the proof.
\end{proof}

\begin{proof}[Proof of \Cref{lem:SAW-tree-with-flower}]
Given a set $S\subseteq V_{\mathrm{SAW}(v)}$, we construct the SAW tree with flower $G_{\mathrm{SF}}(G,v,S)$ by, for every $w \in S$, replacing the sub-tree $\mathbb{T}_{w}$ rooted at $w$ by $G^w$ with pinning $\tau^w$.
By \Cref{rem:SAW-tree} and the first bullet of \Cref{lem:SAW-tree}, it is direct to see that the marginal distribution of the root is preserved.
This proves the first bullet of \Cref{lem:SAW-tree-with-flower}.

  For the second claim, we define 
  \begin{align*}
    \mathcal{T} \coloneqq G_{\mathrm{SF}}[S \cup \set{\text{the connected component in $G_{\mathrm{SF}} - S$ that contains root $v$}}]
  \end{align*}
  We first scan $\mathcal{T}$ and given an index to every vertex in $G_{\mathrm{SF}}$.
  This index will help us figure out whether a given vertex $u \in V_{\mathrm{SF}}$ is a self-avoiding walk in $V_{\mathrm{SAW}}$ or is a copied vertex in $G^w$ for some $w \in S$.
  During this scan of $\mathcal{T}$, we also build the adjacency structure of $\mathcal{T}$ and the pinning for every vertex $w \in \mathcal{T}$.
  This initialization scan can be achieved by using the data structure $\mathcal{D}$ in \Cref{lem:SAW-tree} and will take time $\abs{\mathcal{T}} \cdot T_{\mathcal{D}} = O_\Delta(\dep_{\mathbb{T}_{\mathrm{SAW}}}(S) \cdot \siz_{\mathbb{T}_{\mathrm{SAW}}}(S))$.

  After the initialization, we can answer both queries supported by $\mathcal{O}$ for $u \in V_{\mathrm{SF}}$ as follow:
  \begin{itemize}
  \item If $u \in \mathcal{T}$, then return the answer that is already stored in memory in $O_\Delta(1)$ time.
  \item If $u \in G^w$ for some $w \in S$, then we build $G^w$ from $G$ without making copy. This takes at most $O_\Delta(\dep_{\mathbb{T}_{\mathrm{SAW}}}(S))$ time. After we have $G^w$, we can answer the question in $O_\Delta(1)$ time.
    After we answer the query, we undo the modification on $G^w$ and recover graph $G$.
  \end{itemize}
  This finishes the proof.
\end{proof}

\end{document}